\newcommand{\SarielComp}[1]{}
\newcommand{\NotSarielComp}[1]{#1}%
\newcommand{\SarielComp}[1]{#1}%
\newcommand{\NotSarielComp}[1]{}%
\providecommand{\Mh}[1]{#1}%
\numberwithin{figure}{section}
\numberwithin{equation}{section}%
\numberwithin{table}{section}%
\newtheorem{theorem}{Theorem}[section] 
\newtheorem{defn}[theorem]{Definition}
\newtheorem{observation}[theorem]{Observation}
\newtheorem{lemma}[theorem]{Lemma}%
\newtheorem{proposition}[theorem]{Proposition}%
\theoremstyle{remark}%
\newtheorem{remark}[theorem]{Remark}%
\newtheorem{problem}[theorem]{Problem}
\newcommand{\ProblemC}[1]{\textsf{#1}}
\newcommand{\PrelimVersions}{%
   \begin{minipage}[t]{0.95\linewidth}
       A preliminary version of this paper appeared as:
       \begin{compactenum}[(A)]
           \item%
           A.~Adamaszek and A.~Wiese.  Approximation schemes for
           maximum weight independent set of rectangles.  In
           \si{\emph{ \si{Proc.}  54\th \si{Annu. }IEEE \si{Sympos.}
                 Found. \si{Comput.} Sci.}}  (FOCS), pages 400--409,
           2013.%

           \item
           A.~Adamaszek and A.~Wiese.  %
           A {\QPTAS} for maximum weight independent set of polygons
           with polylogarithmically many vertices. %
           In \emph{\si{Proc. 25\th ACM-SIAM \si{Sympos.} Discrete
                 \si{Algs.}}}  (SODA), pages 645--656, 2014.%
      %
           \item%
           S.~{Har-Peled}.  Quasi-polynomial time approximation scheme
           for sparse subsets of polygons. %
           In \emph{\si{\si{Proc.} 30\th \si{Annu. Sympos. Comput.}
                 Geom.}}  \textrm(SoCG), pages 120--129,
           2014.
           The full version of the paper is available from the \si{arXiv} \cite{h-qptas-13}.
       %
       \end{compactenum}%
       \smallskip
   \end{minipage}
}
\newcommand{\myparagraph}[1]{\paragraph*{#1{}.}}%
\newcommand{\DGraph}{\mathsf{D}}%
\newcommand{\Graph}{\mathsf{G}}%
\newcommand{\LineSpanX}[1]{\mathrm{line}\pth{#1}}
\newcommand{\CandidX}[1]{C\pth{#1}}%
\newcommand{\CandidAll}{C}%
\newcommand{\TGrid}{\mathcal{G}}
\newcommand{\MDC}{\mathcal{M}}
\newcommand{\MD}[1]{\MDC\pth{#1}}
\newcommand{\Sq}{\ensuremath{\Box}\xspace}
\newcommand{\Edges}{\mathsf{E}}%
\newcommand{\EdgesX}[1]{\mathsf{E}\pth{#1}}%
\newcommand{\Vertices}{\mathsf{V}}%
\newcommand{\VerticesX}[1]{\mathsf{V}\pth{#1}}%
\newcommand{\Corridor}{\Mh{C}}
\newcommand{\eps}{{\varepsilon}}%
\newcommand{\PntSetA}{\mathsf{Q}}%
\newcommand{\PolySet}{\mathcal{P}}%
\newcommand{\XSet}{\mathcal{X}}%
\newcommand{\Sample}{\mathcal{S}}%
\newcommand{\PolySetA}{\mathcal{Q}}%
\newcommand{\PolySetB}{\mathcal{B}}%
\newcommand{\poly}{\mathrm{poly}}
\newcommand{\polylog}{\mathrm{polylog}}
\newcommand{\Family}{\EuScript{F}}%
\newcommand{\FamilyH}[1]{\EuScript{F}_{\geq #1}}%
\newcommand{\CDC}{\Mh{\EuScript{C}}}%
\newcommand{\CSet}{\EuScript{C}'}%
\newcommand{\CD}[1]{\EuScript{C}\pth{ #1}}%
\newcommand{\CDH}[2]{\mathcal{C}_{\geq #2}\pth{ #1}}%
\newcommand{\BSet}{\mathcal{B}}
\newcommand{\BSetA}{\mathcal{B}'}
\newcommand{\BOpt}{\mathcal{B}_\mathrm{opt}}
\newcommand{\RSet}{\mathcal{R}}
\newcommand{\RSetA}{\mathcal{R}'}
\newcommand{\rxL}[1]{x_{#1}}%
\newcommand{\rxR}[1]{x'_{#1}}%
\newcommand{\ryB}[1]{y_{#1}}%
\newcommand{\ryT}[1]{y'_{#1}}%
\newcommand{\rH}[1]{d{}y_{#1}}%
\newcommand{\rW}[1]{d{}x_{#1}}%
\newcommand{\GWidth}{\Delta}
\newcommand{\OIntY}[2]{\mleft(#1, #2 \mright)}%
\let\P\undefined
\newcommand{\nopt}{m_{\mathrm{opt}}}
\newcommand{\wopt}{W_{\mathrm{opt}}}
\newcommand{\woptX}[1]{W_{\mathrm{opt}}\pth{#1}}
\newcommand{\Opt}{\mathcal{O}}
\newcommand{\DefSetC}{\Mh{D}}%
\newcommand{\DefSet}[1]{\DefSetC\pth{#1}}
\newcommand{\KillSet}[1]{\Mh{K}\pth{#1}}
\newcommand{\Ex}[1]{\mathop{\mathbf{E}}\pbrc{#1}}
\newcommand{\sep}[1]{\,\left|\, {#1} \right.}
\newcommand{\sepw}[1]{\left|\, {#1} \right.}
\newcommand{\pbrc}[1]{\mleft[ {#1} \mright]}
\newcommand{\cardin}[1]{\left| {#1} \right|}%
\newcommand{\Set}[2]{\left\{ #1 \;\middle\vert\; #2 \right\}}
\newcommand{\pth}[1]{\mleft({#1}\mright)}
\newcommand{\brc}[1]{\left\{ {#1} \right\}}
\newcommand{\ceil}[1]{\left\lceil {#1} \right\rceil}
\newcommand{\HLinkShort}[2]{\hyperref[#2]{#1\ref*{#2}}}
\newcommand{\HLink}[2]{\hyperref[#2]{#1~\ref*{#2}}}
\newcommand{\HLinkPage}[2]{\hyperref[#2]{#1~\ref*{#2}%
      $_\text{p\pageref{#2}}$}}
\newcommand{\HLinkPageOnly}[1]{\hyperref[#1]{Page~\refpage*{#1}%
      $_\text{p\pageref{#1}}$}}
\newcommand{\HLinkSuffix}[3]{\hyperref[#2]{#1\ref*{#2}{#3}}}
\newcommand{\HLinkPageSuffix}[3]{\hyperref[#2]{#1\ref*{#2}%
      #3$_\text{p\pageref{#2}}$}}
\newcommand{\figlab}[1]{\label{fig:#1}}
\newcommand{\figref}[1]{\HLink{Figure}{fig:#1}}
\newcommand{\seclab}[1]{\label{sec:#1}}
\newcommand{\secref}[1]{\HLink{Section}{sec:#1}}
\newcommand{\secrefpage}[1]{\HLinkPage{Section}{sec:#1}}
\providecommand{\deflab}[1]{\label{def:#1}}
\newcommand{\defref}[1]{\HLink{Definition}{def:#1}}
\newcommand{\defrefpage}[1]{\HLinkPage{Definition}{def:#1}}
\newcommand{\itemlab}[1]{\label{item:#1}}
\newcommand{\itemref}[1]{\HLinkSuffix{(}{item:#1}{)}}
\newcommand{\itemrefpage}[1]{\HLinkPageSuffix{(}{item:#1}{)}}
\newcommand{\remlab}[1]{\label{rem:#1}}
\newcommand{\remref}[1]{\HLink{Remark}{rem:#1}}%
\newcommand{\problab}[1]{\label{prob:#1}}
\newcommand{\lemlab}[1]{\label{lemma:#1}}
\newcommand{\lemref}[1]{\HLink{Lemma}{lemma:#1}}%
\newcommand{\lemrefpage}[1]{\HLinkPage{Lemma}{lemma:#1}}
\newcommand{\proplab}[1]{\label{prop:#1}}
\newcommand{\propref}[1]{\HLink{Proposition}{prop:#1}}%
\newcommand{\obslab}[1]{\label{observation:#1}}
\newcommand{\thmlab}[1]{{\label{theo:#1}}}
\newcommand{\thmref}[1]{\HLink{Theorem}{theo:#1}}
\newcommand{\thmrefpage}[1]{\HLinkPage{Theorem}{theo:#1}}
\providecommand{\eqlab}[1]{}%
\renewcommand{\eqlab}[1]{\label{equation:#1}}
\newcommand{\Eqrefpage}[1]{\HLinkPageSuffix{Eq.~(}{equation:#1}{)}}
\renewcommand{\th}{th\xspace}
\newcommand{\cbrc}[2]{\mleft[ #1 \;\middle\vert\; #2 \mright]}
\newcommand{\Prob}[1]{\mathop{\mathbf{Pr}}\pbrc{#1}}
\newcommand{\ProbC}[2]{\mathop{\mathbf{Pr}}\cbrc{#1}{#2}}
\definecolor{newResultColor}{rgb}{0.8,0.8,0.95}%
\newcommand{\newResultC}{\cellcolor{newResultColor}}
\definecolor{TableTitleColor}{rgb}{0.9,0.9,0.9}%
\newcommand{\TableTitleC}{\cellcolor{TableTitleColor}}
\definecolor{blue25}{rgb}{0,0,0.55}%
\providecommand{\emphic}[2]{%
   \textcolor{blue25}{%
      \textbf{\emph{#1}}}%
   \index{#2}}%
\providecommand{\emphi}[1]{\emphic{#1}{#1}}
\providecommand{\ComplexityClass}[1]{{{\textcolor[named]{OliveGreenX}{%
      {\textsc{#1}}}}}}
\newcommand{\NP}{{\ComplexityClass{N\ensuremath{%
            \text{\textsc{P}}}}\xspace}}%
\newcommand{\P}{{\ComplexityClass{{\ensuremath{\text{\textsc{P}}}}}\xspace}}%
\providecommand{\APXHard}{{\textsf{APXHard}}%
   \index{NP!hard}\xspace}
\newcommand{\PTAS}{\textsf{PTAS}\xspace}
\newcommand{\QPTAS}{\textsf{QPTAS}\xspace}
\renewcommand{\Re}{{\rm I\!\hspace{-0.025em} R}}
\newcommand{\Nat}{{\rm I\!\hspace{-0.025em} N}}
\newcommand{\totalI}{\mathsf{t}}
\newcommand{\facesX}[1]{\mathcal{F}}%
\newcommand{\facesPX}[1]{\mathcal{F}_{\!+}^{}}
\newcommand{\facesPZ}{\facesPX{\LZ}}
\newcommand{\ArrX}[1]{\mathcal{A}\pth{#1}}
\newcommand{\bd}{\partial}
\newcommand{\cFunc}[1]{u\pth{#1}}
\newcommand{\weightX}[1]{w\pth{#1}}
\newcommand{\etal}{\textit{et~al.}\xspace}
\newcommand{\Property}{\Pi}
\newcommand{\PropertyF}[1]{\Property_{ #1}}
\newcommand{\I}{\mathcal{I}}
\newcommand{\Polygon}{\sigma}
\newcommand{\CutPolygon}{\Gamma}
\newcommand{\PolygonA}{\tau}
\newcommand{\PolygonB}{\phi}
\newcommand{\AnnaThanks}[1]{%
   \thanks{%
      Department of Computer Science (DIKU); %
      University of Copenhagen; %
      Denmark; %
      \texttt{anad\atgen{}di.ku.dk}. %
      #1%
   }%
}
\newcommand{\AndreasThanks}[1]{%
   \thanks{%
      Max-Planck-Institut f\"ur Informatik; %
      Saarbr\"ucken; Germany; %
      \texttt{awiese\atgen{}mpi-inf.mpg.de}.%
      #1%
   }%
}
\newcommand{\SarielThanks}[1]{%
   \thanks{%
      Department of Computer Science; %
      University of Illinois; %
      201 N. Goodwin Avenue; %
      Urbana, IL, 61801, USA; %
      {\tt \si{sariel}\atgen{}\si{uiuc.edu}}; %
      {\tt \url{http://sarielhp.org}.}%
      #1%
   }%
}
\newcommand{\si}[1]{#1}
\newcommand{\atgen}{\symbol{'100}}%
\newcommand{\myqedsymbol}{\rule{2mm}{2mm}}
\theoremstyle{nonumberplain}
\newtheorem{proof}{Proof:}
\newcommand{\XSays}[2]{\NONONONONO{ {$\rule[-0.12cm]{0.2in}{0.5cm}$\fbox{\tt #1:}
      } #2 \marginpar{\textcolor{red}{#1}}
      {$\rule[0.1cm]{0.3in}{0.1cm}$\fbox{\tt
            end}$\rule[0.1cm]{0.3in}{0.1cm}$} } }
\newcommand{\sariel}[1]{{\XSays{Sariel}{#1}}}
\newcommand{\andy}[1]{{\XSays{Andy}{#1}}}
\newcommand{\anna}[1]{{\XSays{Anna}{#1}}}
\newcommand{\NotHandled}[1]{}%
\newcommand{\ds}{\displaystyle}
\newcommand{\trail}{trail\xspace}%
\newcommand{\ring}{ring\xspace}%
\newcommand{\IRectSetX}[1]{I\pth{#1}}%
\newcommand{\Ring}{Ring\xspace}%
\newcommand{\Term}[1]{\textsf{#1}}%
\newcommand{\SETH}{\Term{SETH}\xspace}
\newcommand{\minitab}[2][l]{\begin{tabular}{#1}#2\end{tabular}}
\newcommand{\HHLine}[1]{%
      \doublerulesepcolor{white}%
      \hhline{#1}%
      \doublerulesepcolor{black}%
}
\newcommand{\tpoly}{\Mh{T}}%
\newcommand{\Path}{\pi}
\newcommand{\CC}{\Mh{\xi}}%
\newcommand{\CCA}{\rho}
\newcommand{\face}{\Xi}
\newcommand{\pnt}{\ensuremath{\mathsf{p}}\xspace}
\newcommand{\segMax}{\seg_{\mathrm{max}}}
\newcommand{\edge}{\Mh{\mathsf{e}}}%
\newcommand{\edgeB}{\mathsf{e}'}%
\newcommand{\rect}{\mathsf{r}}
\newcommand{\Block}{\mathsf{b}}%
\newcommand{\RectSet}{\mathcal{R}}%
\newcommand{\FC}{\mathcal{F}}%
\newcommand{\IntRange}[1]{\left\llbracket #1 \right\rrbracket}
\newcommand{\lenX}[1]{\mleft\| #1 \mright\|}
\newcommand{\GCell}{\Box}%
\newcommand{\GCellA}{\Box'}%
\newcommand{\interX}[1]{\mathrm{int}\pth{#1}}%
\newcommand{\seg}{\mathsf{s}}%
\newcommand{\segA}{\mathsf{t}}%
\newcommand{\segB}{\mathsf{u}}%
\newcommand{\bSeg}{\mathsf{v}}
\newcommand{\SegSet}{\mathsf{S}}%
\newcommand{\LX}{\mathcal{X}}
\newcommand{\LY}{{\mathcal{Y}}}%
\newcommand{\LSet}{\mathcal{Z}}
\newcommand{\ULSet}{\cup\mathcal{Z}}
\newcommand{\constA}{%
   \ensuremath{%
      \left(\frac{1}{\eps\cdot\delta}\right)^{4}%
   } \xspace%
}
\newcommand{\constAmath}{%
   \ensuremath{%
      1/ (\eps\delta)^{4}%
   } \xspace%
}
\newcommand{\constB}{1/(\eps\delta)^{O(1)}}
\newcommand{\EdgeSet}{{\mathcal{E}}}%
\newcommand{\shortcut}{\psi}%
\newcommand{\segCY}[2]{{#1 #2}}%
\newcommand{\segOY}[2]{\interX{#1 #2}}%
\newcommand{\SetX}{X}
\newcommand{\SetY}{Y}
\newcommand{\reach}{reach\xspace}%
\newcommand{\remove}[1]{}%
\newcommand{\clX}[1]{\mathrm{cl}\pth{#1}}
\newcommand{\NInt}{\zeta} 
\newcommand{\AllRegions}{\mathcal{R}}%
\newcommand{\LAllRegions}{\mathcal{R}_L}%
\newcommand{\TAllRegions}{\mathcal{R}_T}%
\newcommand{\minipageW}[2]{%
   \begin{minipage}{\widthof{#1}}%
       #2
   \end{minipage}%
    }
\newcommand{\IncludeGraphics}[2][]{%
   \typeout{}%
   \typeout{Graphics: #2}%
   \typeout{\ includegraphics[#1]{#2}}%
   \includegraphics[#1]{#2}
   \typeout{}%
}
\begin{document}

\title{Approximation Schemes for Independent Set and Sparse Subsets of
   Polygons%
   \thanks{%
      \protect\PrelimVersions{}
   }%
}

\author{%
   Anna Adamaszek%
   \AnnaThanks{Supported by the Danish Council for Independent
      Research \si{DFF-MOBILEX} mobility grant.}%
   \and%
   Sariel Har-Peled%
   \SarielThanks{Work on this paper was partially supported by a NSF
      AF award CCF-1217462.%
   }%
   \and%
   Andreas Wiese%
   \AndreasThanks{}%
}%

\date{\today}

\maketitle

\begin{abstract}
    We present an $(1+\eps)$-approximation algorithm with
    quasi-polynomial running time for computing the maximum weight
    independent set of polygons out of a given set of polygons in the
    plane (specifically, the running time is
    $n^{O( \poly( \log n, 1/\eps))}$). Contrasting this, the best
    known polynomial time algorithm for the problem has an
    approximation ratio of~$n^{\eps}$.  Surprisingly, we can extend
    the algorithm to the problem of computing the maximum weight
    subset of the given set of polygons whose intersection graph
    fulfills some sparsity condition. For example, we show that one
    can approximate the maximum weight subset of polygons, such that
    the intersection graph of the subset is planar or does not contain
    a cycle of length $4$ (i.e., $K_{2,2}$). Our algorithm relies on a
    recursive partitioning scheme, whose backbone is the existence of
    balanced cuts with small complexity that intersect polygons from
    the optimal solution of a small total weight.

    For the case of large axis-parallel rectangles, we provide a
    \emph{polynomial} time $(1+\eps)$-approximation for the maximum
    weight independent set. Specifically, we consider the problem
    where each rectangle has one edge whose length is at least a
    constant fraction of the length of the corresponding edge of the
    bounding box of all the input elements.  This is now the most
    general case for which a \PTAS is known, and it requires a new and
    involved partitioning scheme, which should be of independent
    interest.
\end{abstract}


\section{Introduction}

In this paper we study the \emph{Independent Set of Polygons} problem.
We are given a set $\PolySet = \brc{\Polygon_1, \ldots, \Polygon_m}$
of $m$ simple polygons in the plane, with weights
$w_1,w_2, \ldots, w_m > 0$, respectively, encoded by $n$ input bits.
Our goal is to find an independent set of polygons from $\PolySet$ of
a maximum total weight. A set of polygons is \emphi{independent}, if
no two polygons from the set intersect, where we treat polygons as
open sets.

This problem and its special cases arise in various settings such as
\begin{inparaenum}[(i)]
    \item channel admission control~\cite{lno-racna-02},
    \item chip manufacturing~\cite{hm-ascpp-85},
    \item map labeling \cite{aks-lpmis-98, fmp-acggi-00, va-oamis-99},
    \item cellular networks~\cite{ccj-udg-90},
    \item unsplittable flow~\cite{aglw-ciglu-13, bsw-cfaau-11},
    \item data mining~\cite{fmmt-dmotd-01, kmp-artp-98, lsw-car-97},
    and many others.
\end{inparaenum}

A natural approach to this problem is to build an \emphi{intersection
   graph} $\Graph = (\Vertices,\Edges)$, where we have one vertex for
each input polygon and two vertices are connected by an edge if and
only if their corresponding polygons intersect. The weight of each
vertex equals the weight of its corresponding polygon. The task at
hand is to compute the maximum weight independent set in $\Graph$.  In
general graphs, even the unweighted maximum independent set problem
does not allow an approximation factor within
$\cardin{ \Vertices }^{1-\eps }$ for any $\varepsilon >0$, if
$\NP \neq \P$ \cite{z-ldeim-07}.  Surprisingly, even if the maximum
degree of the graph is bounded by $3$, no \PTAS is possible
\cite{bf-apisp-99} (assuming that $\NP \neq \P$). However, in our case
the intersection graph stems from geometric objects, and we can make
use of the exact locations of the input polygons in our
computations. As we demonstrate, this allows obtaining much better
approximation factors.

\myparagraph{Fat (convex) polygons} %
If the input objects are fat (e.g., disks or squares), \PTAS{}es are
known.  One approach \cite{c-ptasp-03,ejs-ptasg-05} relies on a
hierarchical spatial subdivision, such as a quadtree, combined with
dynamic programming techniques \cite{a-ptase-98}. This approach works
even in the weighted case.  Another approach \cite{c-ptasp-03} relies
on a recursive application of a nontrivial generalization of the
planar separator theorem \cite{lt-stpg-79, sw-gsta-98}. However, this
approach is limited to the unweighted case.

\myparagraph{Axis-parallel rectangles}

The problem turns out to be significantly harder already for the
setting of axis-parallel rectangles.  No constant factor approximation
algorithms are known in this setting, while the best known hardness
result is strong $\mathsf{NP}$-hardness~\cite{fpt-opcpn-81,
   ia-fccmc-83}.  This gap remains despite a lot of research on the
problem \cite{aks-lpmis-98, bdmr-eaatp-01, cc-misr-09, c-nmisr-04, %
   ch-aamis-12, fpt-opcpn-81, ia-fccmc-83, \si{kmp-artp-98},
   \si{lno-racna-02}, \si{n-fsbhd-00}}.  For the weighted case, there
are several $O(\log m)$ approximation algorithms
known~\cite{aks-lpmis-98,kmp-artp-98,n-fsbhd-00}, and the hidden
constant can be made arbitrarily small, since for any constant $k$
there is a $\ceil{ \log_{k}m }$-approximation algorithm due to Berman
\etal \cite{bdmr-eaatp-01}. Chan and Har-Peled \cite{ch-aamis-12}
provided an $O( \log m / \log \log m)$-approximation for the weighted
case.  For the unweighted case, an $O( \log \log m)$-approximation was
given by Chalermsook and Chuzhoy \cite{cc-misr-09}.

Some algorithms have been studied which perform better for special
cases of the problem.  There is a $4q$-approximation algorithm due to
Lewin-Eytan \etal \cite{lno-racna-02} where $q$ denotes the size of
the largest clique in the given instance. In case when the optimal
independent set has size $\beta m$ for some $\beta \leq 1$, Agarwal
and Mustafa present an algorithm which computes an independent set of
size $\Omega(\beta^{2}m)$~\cite{am-isigc-06}.

\myparagraph{Other input objects}
For the case when the input instance is a collection of $m$ line
segments, an $O\big((\nopt)^{1/2+o(1)})$-approximation was developed
by Agarwal and Mustafa \cite{am-isigc-06}, where $\nopt$ is the size
of the optimal solution.  Fox and Pach~\cite{fp-cinig-11} have
improved the approximation factor to $m^\eps$ for line segments, and
also curves that intersect a constant number of times.  Their argument
relies on the intersection graph having a large biclique if it is
dense, and a cheap separator
otherwise. 

For an independent set of unweighted pseudo-disks, Chan and Har-Peled
\cite{ch-aamis-12} provided a \PTAS. Surprisingly, their algorithm is
a simple local search strategy that relies on using the planar
separator theorem to argue that if the local solution is far from the
optimal, then there is a ``small'' beneficial exchange.

\myparagraph{The challenge}

Although the complexity of geometric independent set is
well-understood in the setting of squares, already for axis-parallel
rectangles the problem is still widely open.  In particular, the
techniques of the above approximation schemes for squares do not carry
over to rectangles. The \PTAS from~\cite{ejs-ptasg-05} requires that
every horizontal or vertical line intersects only a bounded number of
objects of the optimal solution that are relatively large in at least
one dimension. For rectangles this number can be up to $\Theta(m)$,
which is too much. For the local search techniques, one can easily
construct examples showing that for any size of exchanges (which still
yields quasi-polynomial running time), the optimum is missed by a
factor of up to $\Omega(m/(\log m)^{O(1)})$.

\begin{figure}[t]%
    \setlength{\doublerulesep}{2pt}%
    \setlength{\arrayrulewidth}{1pt}%
    \doublerulesepcolor{black}%
    \newcommand{\MPFirstColumn}[1]{%
       \begin{minipage}{3.1cm}
           \smallskip%
           #1%
       \end{minipage}
    }%
    \newcommand{\MPSecondColumn}[1]{%
       \begin{minipage}{3.4cm}
           \smallskip%
           #1%
       \end{minipage}
    }%
    \begin{tabular}{|l|l|l|l|c|rrr}
      \HHLine{|-----|}
      \multicolumn{5}{|c|}{%
      \TableTitleC%
      \textbf{Independent set}
      }\\
      \hline%
      Shape
   & %
     Attributes%
   & approximation 
   &%
     Ref
   & running time \\
      \HHLine{|-----|}
      \HHLine{|-----|}
      \multirow{3}{*}{\minitab[c]{
      \\
      Axis-parallel\\
      rectangles}} %
   & %
     Unweighted
   & %
     $\Bigl. O( \log \log m)$ 
   & \cite{cc-misr-09} %
   &
     \multirow{2}{1in}{
     \minitab[c]{$m^{O(1)}$}%
     }\\
      \HHLine{~|---|}
   & %
     Weighted
   & %
     $\Biggl. O\pth{ \frac{ \log m}{ \log \log m } }$ 
   &\cite{ch-aamis-12} %
   &
      \\
      \HHLine{~|----}
   & %
     \newResultC
     \begin{minipage}{3cm}
         \smallskip%
         $\delta$-large weighted

         rectangles with

         vertices in $\Bigl.\IntRange{N}^2$. %
     \end{minipage}
   & %
     \newResultC%
     $\Bigl. 1+\eps$%
   &%
     \newResultC%
     \thmrefpage{delta:large}%
   &%
     \newResultC%
     $(m{} N)^{\constB} $
      \\
      \HHLine{-----}
      %
      \si{Segs}/curves%
   &
     \begin{minipage}{3.4cm}
         \smallskip%
         Unweighted and at most $k =O(1)$
            
         intersection points per pair of curves.$\Bigl.$
     \end{minipage}
   &%
     $m^{\eps}$
   &%
     \cite{fp-cinig-11}
   &%
     $\Bigl. n^{O\pth{ (4/\eps)^{-2/\eps}}}$
        
      %
      \\%
      \HHLine{-----}
      \MPFirstColumn{
      \si{Seg}s / curves 

      \si{rects} / polygons%

      \smallskip
      
      }
   &
     \newResultC%
     Weighted
   &%
     \newResultC%
     $1 + \eps$
   &
     \newResultC%
     \thmrefpage{main}%
   &%
     \newResultC%
     \begin{math}
         \Bigl.  2^{\poly(\log m, 1/\eps)} \cdot n^{O(1)}
     \end{math}
      \\
      %
      %
      \HHLine{=====}
      \multicolumn{5}{|c|}{%
      \TableTitleC%
      \textbf{Sparse properties}}\\
      \hline
      \newResultC%
      \MPFirstColumn{Polygons

      }%
   &
     \newResultC%
     \MPSecondColumn{%
     \smallskip%
     Weighted
     \&  
        pairs
        intersect 
        $O(1)$
        
        times

        \medskip%

        }
   &%
     \newResultC%
     $1+\eps$%
   &%
     \newResultC%
     \thmrefpage{main:2}%
   &%
     \newResultC%
     \begin{math}
         \Bigl.  2^{\poly(\log m, 1/\eps)} \cdot n^{O(1)}
     \end{math}
      \\
      \hline
    \end{tabular}
    \caption{Summary of known and new results. Here
       $\IntRange{N} = \brc{1, \ldots, N}$.}%
    \figlab{results}%
\end{figure}

\subsection{Our results}

We present the first $(1-\eps)$-approximation algorithm to the problem
of computing the maximum weight independent set of polygons, with a
quasi-polynomial running time of
$2^{\mathrm{poly}(\log m, 1/\eps)} \cdot n^{O(1)}$.  In particular,
our algorithm works for axis-parallel rectangles, line segments, and
arbitrary 
polygons.  As mentioned above, the best known polynomial time
approximation algorithm for our setting has a ratio of
$m^\eps$~\cite{fp-cinig-11}, and even for axis-parallel rectangles the
currently best known ratios are $O(\log m/\log\log m)$ for the
weighted case~\cite{ch-aamis-12}, and $O(\log\log m)$ for the
unweighted case~\cite{cc-misr-09}.

We are not aware of any previous algorithms for the problem with
quasi-polynomial running time which would give better bounds than the
above mentioned polynomial time algorithms. Our \QPTAS rules out the
possibility that the problem is \APXHard, assuming that
$\mathsf{NP}\nsubseteq\mathsf{DTIME}(2^{\polylog(n)})$, and thus it
suggests that it should be possible to obtain significantly better
polynomial time approximation algorithms for the problem%
\footnote{%
   Indeed, if a problem is \APXHard, then a \QPTAS for it would imply
   that \ProblemC{SAT} can be solved in $2^{\polylog(n)}$ time, which
   is unlikely. Furthermore, the \emph{strong exponential time
      hypothesis} (\SETH), which is believed to be true, states that
   \ProblemC{SAT} cannot be solved in time better than $2^{cn}$, for
   some absolute constant $c$.  If \SETH is correct, then even
   $\polylog$ sized instances of \APXHard problems cannot be
   $(1+\eps)$-approximated in polynomial time.  }.
   
Then we show how to extend our \QPTAS to computing subsets of polygons
whose intersection graph complies with a given sparsity property.  In
addition, we present a \PTAS for the case of $\delta$-large rectangles
for any constant $\delta > 0$, i.e., for the case when each input
rectangle has at least one edge of length at least $\delta N$,
assuming that in the input only integer coordinates within
$\brc{0,...,N}$ occur.

We give an overview for the previous and the new results for the
problem in \figref{results}.

\subsection{Technical contribution}

\paragraph*{Recursive partitioning.}
The key technique in our \QPTAS is a new geometric partitioning
scheme.  We prove that for the polygons in the optimal solution (and
for any set of non-intersecting polygons) there exists a balanced cut
that intersects only a weighted $O(\eps/\log m)$-fraction of the
polygons and this cut can be described by only
$O(\poly(\log m, 1/\eps))$ bits. Due to the latter property there are
only quasi-polynomially many candidates for this cut, and thus we can
try all of them in quasi-polynomial time. The polygons intersecting
the cut are ``lost'', as the algorithm throw them away. Then the
algorithm calls recursively on both sides of the cut until we obtain
subproblems (described by subparts of the input area) that contains
only a few polygon from the optimal solution, which can be solved
directly by brute force.  Since the cuts are balanced, the recursion
depth is $O(\log m)$, and thus the overall running time of the
algorithm is quasi-polynomial.

\paragraph*{Cheap balanced cuts.}
In order to show that a cut with the claimed property always exists,
we use cuttings \cite{c-chdc-93, bs-ca-95} -- that yields a planar
graph, where each face intersects a relatively small fraction of the
optimal solution.  We then use a separator theorem for planar graphs,
applied to the cuttings, to get a cheap balanced partition of the area
into two pieces. To the best of our knowledge, the idea of using
planar separators together with cuttings is novel, and is one of the
key contributions of this work.  Since the input polygons are
weighted, we need weighted cuttings, and while this is an easy
extension of known techniques, this is not written explicitly in
detail anywhere. As such, for the sake of self-containment, we reprove
here the weighted version of the exponential decay lemma of Chazelle
and Friedman \cite{cf-dvrsi-90}. Our proof seems to be 
simpler than the previous proofs, and the constants are
slightly 
better, and as such the result might be of independent interest.

\paragraph*{Extensions to other sparsity conditions.}
When we ask for an independent set of the input polygons, we require
that the intersection graph corresponding to the set of computed
polygons contains only isolated vertices.  Such a graph is the
ultimate sparse graph. Using the new approach we can also obtain a
\QPTAS when other, more relaxed sparsity conditions are required from
the intersection graph of the computed polygons.  For technical
reasons, here we need to assume that every pair of input polygons
intersects a constant number of times (note that we did not need this
assumption in the independent set case).  We provide a \QPTAS for any
sparsity condition that guarantees that the intersection graph
corresponding to the set of computed objects has a sub-quadratic
number of edges. There are many conditions that fall in this category,
for instance that the intersection graph is planar, or that it does
not contain a $K_{s,t}$ as a subgraph, where $s,t$ are some
constants,. If the input polygons are pseudo-disks then we can even
compute the maximum set such that no point in the plane is covered by
more than $d$ polygons, for any given constant~$d$.

\paragraph*{\PTAS for large rectangles.}
When the input instance is a set of axis-parallel rectangles where
each rectangle has at least one large edge compared to the length of
the corresponding edge of the bounding box, we provide a different
partitioning scheme that leads to a \PTAS. It requires a novel and
rather involved construction, as we cannot use the standard tools to
facilitate it. Our partition has two levels. At the top level, we
partition the plane into a constant number of polygons with constant
complexity each. Some of the polygons of the partition correspond to a
single (potentially large) rectangle, and the others are narrow
corridors with constant complexity each.  This step incurs only a
small loss, and the number of possible partitions is polynomial, so
our algorithm can try all of them.  In the second level, we show how
to decompose each narrow polygon of the partition recursively in a way
that is dynamic programming (DP) friendly.  Then we can use the DP to
find a near-optimal solution in polynomial time.  We believe that this
new partition scheme and the associated dynamic programming algorithm
are the first step in getting a \PTAS for the general problem.

\subsection{Impact of this work}
\seclab{impact}%

This paper contains two new technical concepts: the cheap balanced
(geometric) cuts, used for our \QPTAS{}s, and the partition of the
plane into few large rectangles and narrow corridors, used for the
\PTAS for $\delta$-large rectangles.  Following the initial conference
publication of this work \cite{aw-asmwi-13, aw-qmwis-14}, both
techniques have been used to obtain other results for a variety of
geometric problems.

Using the cheap balanced cuts, Mustafa \etal \cite{mrr-qgscp-14}
showed that one can get a \QPTAS for geometric set cover of points by
weighted pseudodisks. Since the problem becomes \APXHard for fat
triangles of similar size \cite{h-bffne-09}, this is the best one can
hope for. This demonstrates that the geometric set cover and geometric
independent set problems in the plane are inherently different (as far
as approximability).

The partition into a constant number of corridors has been used by
Adamaszek and Wiese~\cite{aw-qptdg-15} as a starting point to get a
\QPTAS for the geometric two-dimensional knapsack problem. They
refined the corridor partition further to a partition into a
poly-logarithmic number of rectangular boxes that separates the
rectangles that are large in the horizontal dimension from those that
are large in the vertical dimension. Moreover, Nadiradze and
Wiese~\cite{nw-aspbr-16} used the corridor partition to obtain a
$(1.4+\eps)$-approximation algorithm in pseudo-polynomial time for the
strip-packing problem. Here, also, the partition into corridors was
used as a starting point to a more refined partition into a constant
number of rectangular boxes.

Bandyapadhyay \etal \cite{bbv-aspcd-16} used cheap balanced cuts for
designing \QPTAS{}s for the convex decomposition problem and the
surface approximation problem.  Marx and Pilipczuk \cite{mp-opapf-15}
used them in order to find faster algorithms for facility location
problems on planar graphs and in the 2-dimensional plane.

The corridor decomposition was used by Har-Peled \cite{h-sppss-16} in
a tool in designing a sublinear space algorithm for shortest path in a
polygon.

\myparagraph{Paper organization} %
In \secref{qptas}, we describe the \QPTAS for the maximum weight
independent set of polygons, where the low-level decomposition tools
needed for the algorithm are described in
\secref{decompose:cut}. Specifically, in \secref{decompose}, we
describe a canonical decomposition of the complement of the union of
disjoint polygons, and we show how to extend it to work for arbitrary
intersecting polygons.  In \secref{cuttings} we reprove the
exponential decay lemma, show how to build weak $1/r$-cuttings of
disjoint polygons of size $O( r \log r)$, and spell out the conditions
enabling one to compute smaller $1/r$-cuttings of size $O(r)$.  In
\secref{extensions}, we describe the extension to a \QPTAS for
computing the maximum weight sparse subset of polygons.  In
\secref{PTAS-large-rectangles}, we describe the \PTAS for large
axis-parallel rectangles.  We conclude in \secref{conclusions} with
some comments.


\section{A \QPTAS for independent set of polygons}
\seclab{qptas}

In this section, we present our $(1-\eps)$-approximation algorithm for
the problem of computing a maximum weight independent set of polygons
with a quasi-polynomial running time.

An instance the problem consists of a set of $m$ weighted simple
polygons $\PolySet = \{\Polygon_1, \ldots, \Polygon_m\}$, with a total
of $n$ vertices.  Let $\eps>0$ be a fixed approximation
parameter. First, we ensure that the weights of the input polygons are
integers in a polynomial range without changing the instance
significantly.  by losing at most a factor of $1-\eps$ in the weight
of an optimal solution $\Opt$, Observe that the following lemma
implies that $w(\Opt)\le m^{2}/\eps$.

\begin{lemma}%
    \lemlab{normalize:weights}%
    If there is a (quasi-)polynomial time $(1+\eps)$-approximation
    algorithm for the case that
    $\weightX{\Polygon}\in\IntRange{m/\eps} = \brc{1,\ldots, m/\eps}$,
    for each polygon $\Polygon\in\PolySet$, then there is a
    (quasi-)polynomial time $(1+\eps)^2$-approximation algorithm for
    the general case.
\end{lemma}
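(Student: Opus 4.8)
The plan is the standard ``threshold-and-rescale'' weight-discretization. The one structural fact I would lean on is that a single polygon is by itself an independent set, so $\weightX{\Opt}\ge w_{\max}$, where $w_{\max}:=\max_{\Polygon\in\PolySet}\weightX{\Polygon}$; in other words $w_{\max}$ is a ``free'' lower bound on the optimum, a $\delta$-fraction of which we can afford to waste. Fix a parameter $\delta>0$ (I will set $\delta:=\eps/4$ at the very end). First I would \emph{discard} every polygon whose weight is below $c:=\delta\,w_{\max}/m$. At most $m$ polygons are discarded, so the weight removed from $\Opt$ is less than $cm=\delta\,w_{\max}\le\delta\,\weightX{\Opt}$, and hence the best independent set among the surviving polygons still has weight at least $(1-\delta)\weightX{\Opt}$. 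Then I would \emph{rescale}: give each surviving polygon $\Polygon$ the new weight $w'(\Polygon):=\floor{\weightX{\Polygon}/c}$. Since $c\le\weightX{\Polygon}\le w_{\max}$ for survivors, $w'(\Polygon)$ is an integer in $\IntRange{m/\delta}$, and $c\,w'(\Polygon)\le\weightX{\Polygon}<c\,w'(\Polygon)+c$.

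Next I would run the assumed algorithm, instantiated with approximation parameter $\delta$ (so that its allowed weight range is exactly $\IntRange{m/\delta}$), on the surviving polygons with the weights $w'$, and return the resulting independent set $S$; by hypothesis $w'(S)\ge W'/(1+\delta)$, where $W'$ is the $w'$-optimum of the reduced instance. Let $\Opt':=\Opt\cap\{\text{survivors}\}$, a feasible solution of the reduced instance. Summing $c\,w'(\Polygon)>\weightX{\Polygon}-c$ over the at most $m$ polygons $\Polygon\in\Opt'$, and using that the polygons of $\Opt$ discarded in the first step weigh at most $cm$ in total,
\[
  c\,w'(\Opt')\;>\;\weightX{\Opt'}-cm\;\ge\;\weightX{\Opt}-2cm\;=\;\weightX{\Opt}-2\delta\,w_{\max}\;\ge\;(1-2\delta)\weightX{\Opt}.
\]
Since $c\,w'(\Polygon)\le\weightX{\Polygon}$ for every survivor, we get $\weightX{S}\ge c\,w'(S)\ge\tfrac{c\,W'}{1+\delta}\ge\tfrac{c\,w'(\Opt')}{1+\delta}>\tfrac{1-2\delta}{1+\delta}\,\weightX{\Opt}$. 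Now take $\delta:=\eps/4$: the elementary inequality $(1-\tfrac{\eps}{2})(1+\eps)^2\ge 1+\tfrac{\eps}{4}$, which holds for all $0<\eps\le 1$ (it reduces to $\eps^2\le \tfrac52$), rearranges to $\tfrac{1-2\delta}{1+\delta}=\tfrac{1-\eps/2}{1+\eps/4}\ge\tfrac{1}{(1+\eps)^2}$, so $S$ is a $(1+\eps)^2$-approximation. The overhead is a linear-time preprocessing step, so the overall algorithm remains (quasi-)polynomial, matching the hypothesis.

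I do not anticipate a genuine obstacle here --- this is a routine rounding reduction. The only point that needs a bit of care is that both error sources (deleting the negligibly light polygons, and rounding the remaining weights down) must be absorbed within the slack permitted by the prescribed range $\IntRange{m/\delta}$; that slack is exactly what the lower bound $\weightX{\Opt}\ge w_{\max}$ and the shrinking of the working parameter from $\eps$ to $\eps/4$ provide. (If one is content with a $(1+O(\eps))$ factor, the reduction can be run with $\delta=\eps$ directly, giving the bound $\tfrac{1+\eps}{1-2\eps}$.)
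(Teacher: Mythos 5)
Your proof is correct and follows essentially the same route as the paper's: rescale so that the maximum weight becomes $\Theta(m/\eps)$, round down to integers, discard the polygons that become too light, and absorb the total rounding error of at most $m$ units using the bound $w(\Opt)\ge w_{\max}$. The only differences are cosmetic: you invoke the assumed algorithm at the shrunken parameter $\eps/4$ to obtain a clean $(1+\eps)^2$ factor (which implicitly reads the hypothesis as a scheme available for every parameter value), whereas the paper runs it at $\eps$ itself and is slightly loose, since its one-line computation actually yields $\tfrac{1+\eps}{1-\eps}$ rather than $(1+\eps)^2$ --- so if anything your bookkeeping is the more careful of the two.
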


\begin{proof}
    We scale the weights of all polygons such that
    $\alpha = \max_{\Polygon\in\PolySet}\weightX{\Polygon}=m/\eps$.
    For the weight of the optimal solution $w(\Opt)$, we have that
    $w(\Opt) \geq \alpha \geq m/\eps$.  It follows that rounding down
    the weight of each polygon to the closest integer costs at most
    $m$ overall, which is at most $\eps w(\Opt)$.  polygons of weight
    zero after the rounding can be removed.
\end{proof}

From this point on, we assume that
$\weightX{\Polygon}\in\IntRange{m/\eps}$ for each polygon
$\Polygon\in\PolySet$.  The key ingredient of the new algorithm is
that for any independent set of polygons $\PolySet'\subseteq\PolySet$,
and in particular for the optimal solution, there exists a cheap
balanced cut. 

\begin{defn}
    \deflab{cheap:b:cut}%
    Given a set $\PolySet$ of polygons in the plane, a \emphi{cheap
       balanced cut} is a polygon $\CutPolygon$, with the following
    three properties:
    \begin{compactenum}[\quad(A)]
        \item the total weight of polygons in $\PolySet'$ that are
        intersected by $\CutPolygon$ is at most
        $\frac{\eps}{\log m}w(\PolySet')$,
        
        \item the total weight of polygons in $\PolySet'$ that are
        completely inside (resp. outside) of $\CutPolygon$ is at most
        $\frac{2}{3}w(\PolySet')$, 
        
        \item \itemlab{encoding}%
        the polygon $\CutPolygon$ can be fully encoded by a binary
        string with $\poly(\log m,1/\eps)$ bits, and 

        \item the polygon $\CutPolygon$ has $O(n)$ vertices.
    \end{compactenum}
\end{defn}

\begin{lemma}%
    \lemlab{cheap:balanced:cut}%
    For any independent set of polygons $\PolySet'\subseteq\PolySet$
    there exists a cheap balanced cut $\CutPolygon$ or there is a
    polygon $\Polygon\in\PolySet'$ such that
    $w(\Polygon)\ge\frac{2}{3}w(\PolySet')$.
\end{lemma}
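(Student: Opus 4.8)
The plan is to realize the two‑tool strategy announced in the introduction: first a (weighted) \emph{cutting}, which carves the plane into a polylogarithmic number of constant‑complexity cells each crossed by only a tiny weight of $\PolySet'$, and then the \emph{planar separator theorem}, applied to the cell‑adjacency graph of that cutting, to pull out a balanced, low‑complexity sub‑family of cells whose boundary will be the cut $\CutPolygon$.

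\textbf{Setup and the cutting.} If some $\Polygon\in\PolySet'$ has $w(\Polygon)\ge\tfrac23 w(\PolySet')$ we are in the second case of the lemma, so assume from now on that $w(\Polygon)<\tfrac23 w(\PolySet')$ for every $\Polygon\in\PolySet'$. Fix a parameter $r=\poly(\log m,1/\eps)$, to be pinned down at the end, and apply the weighted‑cutting construction of \secref{cuttings} (a weighted random sample of $\PolySet'$, analyzed via the weighted exponential decay lemma, on top of the canonical decomposition of \secref{decompose}). This yields a decomposition $\TGrid$ of the plane into $N=O(r\log r)$ cells of $O(1)$ complexity such that the boundaries of the polygons of $\PolySet'$ meeting the interior of any single cell have total weight at most $\tfrac1r w(\PolySet')$, and such that no cell is engulfed by a polygon of $\PolySet'$. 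If some cell still \emph{contains} polygons of $\PolySet'$ of total weight exceeding $\tfrac1r w(\PolySet')$, refine it using $O(1)$‑complexity cuts, each slicing off a balanced portion of the enclosed weight while meeting polygons of small weight (possible precisely because no polygon is heavy); this increases $N$ only by $O(r)$, so $N$ stays $\poly(\log m,1/\eps)$, and afterwards every cell is associated --- whether by a boundary crossing it or by being contained in it --- with polygons of $\PolySet'$ of total weight $O(\tfrac1r w(\PolySet'))$.

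\textbf{Separator and cut.} Let $\GraphA$ be the (planar) cell‑adjacency graph of $\TGrid$, and give each cell $C$ the weight $\mu(C)=\sum_\Polygon w(\Polygon)/\cardin{\text{cells}(\Polygon)}$, the sum over $\Polygon\in\PolySet'$ whose interior meets $\interX{C}$, where $\text{cells}(\Polygon)$ is the set of all such cells; then $\sum_C\mu(C)=w(\PolySet')$ and, by the previous paragraph, $\mu(C)=O(\tfrac1r w(\PolySet'))$ for every $C$. Apply the weighted planar separator theorem to $(\GraphA,\mu)$ to obtain a partition $V(\GraphA)=A\uplus S\uplus B$ with no $A$--$B$ edge, $\cardin{S}=O(\sqrt N)$, and $\mu(A),\mu(B)\le\bigl(\tfrac23-o(1)\bigr)w(\PolySet')$ (asking for the slightly stronger balance costs nothing), and set $\CutPolygon=\bd\bigl(\bigcup_{C\in A}C\bigr)$. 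Properties (C) and (D) are immediate: $\CutPolygon$ is described by the sample plus the bipartition, i.e.\ by $\poly(\log m,1/\eps)$ bits, and it is built out of an arrangement of the $n$ input edges, so it has $O(n)$ vertices. For (B): a polygon completely inside $\CutPolygon$ has all its cells in $A$, so its whole weight is absorbed into $\mu(A)\le\tfrac23 w(\PolySet')$; symmetrically a polygon completely outside has all its cells in $B\cup S$, and $\mu(B)+\mu(S)\le\bigl(\tfrac23-o(1)\bigr)w(\PolySet')+\cardin{S}\cdot O(\tfrac1r w(\PolySet'))\le\tfrac23 w(\PolySet')$. For (A): since a polygon is connected, its cell‑set is connected in $\GraphA$, so --- as $A$ and $B$ are non‑adjacent --- any polygon crossed by $\CutPolygon$ must touch both an $A$‑cell and an $S$‑cell, and, not being engulfed, it is crossed by or contained in one of the $\cardin{S}=O(\sqrt N)$ cells of $S$; hence the crossed weight is $O(\sqrt N/r)\,w(\PolySet')=O\bigl(\sqrt{(\log r)/r}\bigr)w(\PolySet')$, which is at most $\tfrac{\eps}{\log m}w(\PolySet')$ once $r$ is a sufficiently large polynomial in $\log m$ and $1/\eps$.

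\textbf{Where the work is.} Everything above except the last sentence of the setup paragraph is routine bookkeeping; the one substantial ingredient is the weighted cutting itself --- a decomposition that is simultaneously of polylogarithmic size, crossed by only an $O(\eps/\log m)$‑weight fraction of $\PolySet'$ per cell, and has no hidden heavy pocket inside a cell. Establishing it is exactly the content of \secref{decompose} and \secref{cuttings} (the weighted version of the Chazelle--Friedman exponential decay lemma, plus the reduction from arbitrary polygons to disjoint ones via the canonical decomposition), and that is where the real difficulty lies.
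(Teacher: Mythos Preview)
Your overall strategy --- weighted cutting followed by a planar separator on the cell--adjacency graph --- is exactly the paper's, and most of the bookkeeping is right. But there is one genuine gap and one unnecessary detour.

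\textbf{The gap: vertex separator versus cycle separator.} You invoke the ordinary (Lipton--Tarjan style) weighted planar separator theorem, obtain a tripartition $A\uplus S\uplus B$, and set $\CutPolygon=\bd\bigl(\bigcup_{C\in A}C\bigr)$. Nothing guarantees that this boundary is a single simple closed curve: the union $\bigcup_{C\in A}C$ can be disconnected, or simply connected components of it can enclose cells of $B\cup S$, producing holes. In either case $\CutPolygon$ is not a polygon in the sense of \defref{cheap:b:cut}, and the inside/outside dichotomy you rely on for property~(B) breaks down (taking just the outer boundary would push the enclosed $B$- and $S$-cells to the ``inside'', possibly destroying the $2/3$ balance). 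The paper avoids this precisely by using Miller's \emph{cycle} separator theorem on a triangulated version of the dual graph: the separator is then a simple cycle in the dual, whose primal realization is a ring of corridors, and the outer boundary of that ring is a bona fide simple polygon (see \lemref{good:polygon} and \remref{filtering}). Switching to a cycle separator fixes your argument with no other changes.

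\textbf{The detour: the refinement step.} Your paragraph ``if some cell still contains polygons of $\PolySet'$ of total weight exceeding $\tfrac1r w(\PolySet')$, refine it\ldots'' is both hand-waved (how exactly do you slice with $O(1)$ complexity and small crossing weight?) and unnecessary. In the corridor decomposition used by the paper, the sampled polygons of $\PolySet'$ become their own cells, and every other cell's conflict list --- which contains every polygon of $\PolySet'$ that touches it --- already has weight at most $w(\PolySet')/r$ by the $1/r$-cutting property. So after assigning each polygon's full weight to a single cell (the one containing its leftmost vertex, say), the only cells with weight above $w(\PolySet')/r$ are the sampled polygons themselves, and those have weight below $\tfrac23 w(\PolySet')$ by the case assumption. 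The weighted separator theorem handles such vertex weights directly; no refinement is needed.
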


The proof of \lemref{cheap:balanced:cut} is in \secref{decompose:cut}
(see \remref{proof:cheap:balanced:cut}).  Our algorithm enumerates all
polygons $\CutPolygon$ that could be cheap balanced cuts corresponding
to the (unknown) optimal solution $\Opt \subseteq \PolySet$. Since the
encoding of such a polygon is short, by \defref{cheap:b:cut}
\itemref{encoding}, there are at most $2^{\poly(\log m,1/\eps)}$ such
polygons, and we can enumerate all of them in quasi-polynomial
time. For each enumerated cut $\CutPolygon$ we call recursively on two
subproblems.  One subproblem consists of all input polygons that lie
completely inside $\CutPolygon$, the other consists of all input
polygons that lie completely outside of $\CutPolygon$. We solve these
subproblems recursively and combine the obtained solutions to a global
solution to the original problem.

A degenerate case here is that the optimal solution $\Opt'$, for the
current subproblem $\PolySet'$, contains a polygon $\Polygon\in\Opt'$,
such that $w(\Polygon)\ge\frac{2}{3}w(\Opt')$.  In the case, the cut
is defined by $\Polygon$ -- one subproblem is $\brc{\Polygon}$, and
the other subproblem is the set of all the polygons in $\PolySet'$
that do not intersect $\Polygon$.

If each step the algorithm correctly guess the cheap balanced cut
$\CutPolygon$, then the recursion has a depth of
$O(\log w(\Opt))=O(\log (m/\eps))$.  Therefore, the algorithm stops
the recursion after $O(\log (m/\eps))$ levels (naturally, the
algorithm also returns immediately if the given subproblem is empty).

\paragraph*{Running time.}
In each node of the recursion tree the algorithm enumerates at most
$2^{\alpha}$ candidates for the cheap balanced cut $\CutPolygon$,
where $\alpha = \poly(\log m,1/\eps)$. Thus, each node has at most
$2^\alpha$ children. Now for each cut $\CutPolygon$, the algorithm
partitions the polygons from the current input instance into three
groups:
\begin{compactenum}[\quad(i)]
    \item polygons intersecting $\CutPolygon$,
    \item polygons contained in the interior
    of $\CutPolygon$, and
    \item polygons  contained in the exterior of $\CutPolygon$.
\end{compactenum}%
\smallskip%
The cut polygon $\CutPolygon$ has $O(n)$ vertices, and this partition
can be computed in $n^{O(1)}$ time -- and in $O(n \log n)$ time if one
is more careful the implementation, see \remref{filtering} below.  The
algorithm then call recursively on the two subproblems defined by the
partition. The recursion depth is $h = O(\log (m/\eps))$. As such, a
recursive subproblem is encoded by
\begin{math}
    \beta = O( h \, \poly( \log m, 1/\eps)) = \poly( \log m, 1/\eps)
\end{math}
bits, and thus there are $2^\beta$ different subproblems overall. The
overall overall running time is
\begin{math}
    2^\beta 2^\alpha n^{O(1)} =2^{\poly(\log m,1/\eps)} O(n \log n).
\end{math}

\paragraph*{Approximation ratio.}

For the correct sequence of cuts, at each level of the recursion the
weight of an optimal solution changes at most by a factor of
$1-\frac{\eps}{\log m}$, since a cheap balanced cut intersects
polygons whose total weight is at most a
$\frac{\eps}{\log m}$-fraction of the optimal solution of the
respective subproblem. Thus, the obtained approximation ratio is at
least $\pth{1-\frac{\eps}{\log m}}^{O(\log (m/\eps))}=1-O(\eps)$.

\medskip

We thus obtain the following.

\begin{theorem}%
    \thmlab{main}%
    Let $\PolySet = \brc{\Polygon_1,\ldots, \Polygon_m}$ be a set of
    $m$ simple polygons in the plane, with $n$ vertices, where
    $\Polygon_i$ has weight $w(\Polygon_i)$, for $i=1,\ldots, m$.
    Then one can compute an independent set
    $\PolySet' \subseteq \PolySet$ of weight at least
    $(1-\eps)w(\Opt)$ in $2^{\poly(\log m,1/\eps)} n \log n$ time,
    where $\Opt$ is the optimal solution.
\end{theorem}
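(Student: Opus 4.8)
The plan is to realize the divide-and-conquer scheme sketched above as a recursive procedure, delegating all the geometry to \lemref{cheap:balanced:cut} and \lemref{normalize:weights}. First I would apply \lemref{normalize:weights} to reduce, at the cost of a $(1+\eps)$ factor, to the case $w(\Polygon)\in\IntRange{m/\eps}$ for every $\Polygon\in\PolySet$; since $\cardin{\Opt}\le m$ this forces $w(\Opt)\le m^2/\eps$, and all weights are now positive integers --- the two facts that make the recursion bottom out.

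The algorithm is recursive. A call receives a subproblem --- the set $\PolySet'\subseteq\PolySet$ of input polygons inside the region carved out so far, together with a depth counter $t\le h:=c\log(m/\eps)$ for a suitable constant $c$. If $t=h$ or $\PolySet'=\emptyset$, it returns the heaviest singleton $\brc{\Polygon}$, $\Polygon\in\PolySet'$, or $\emptyset$. Otherwise it builds a pool of candidate solutions and returns the heaviest: (i)~for each of the at most $2^{\poly(\log m,1/\eps)}$ polygons $\CutPolygon$ admitting an encoding of the length permitted by \defref{cheap:b:cut}\itemref{encoding} --- each with $O(n)$ vertices by property~(D), the whole list enumerable in quasi-polynomial time --- split $\PolySet'$ into the polygons intersecting $\CutPolygon$ (which are discarded), those strictly inside, and those strictly outside, recurse on the inside and the outside subproblems at depth $t+1$, and take the union; and (ii)~for each $\Polygon\in\PolySet'$, take $\brc{\Polygon}$ together with the solution of a depth-$(t+1)$ recursive call on the polygons of $\PolySet'$ disjoint from $\Polygon$. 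Case~(ii) handles the degenerate alternative in \lemref{cheap:balanced:cut}. The output is the result of the top call with $\PolySet'=\PolySet$, $t=0$.

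For the approximation ratio, fix an optimal solution $\Opt$ and follow its restriction down the recursion tree. By \lemref{cheap:balanced:cut} applied to the residual set in the current subproblem, either there is a cheap balanced cut --- which then appears among the candidates in~(i) --- or some polygon carries at least $\frac23$ of the residual weight --- handled by~(ii). Along this ``correct'' branch, property~(A) of \defref{cheap:b:cut} discards at most an $\frac{\eps}{\log m}$-fraction of the residual optimum at each level, while property~(B) leaves each child with at most $\frac23$ of it (the degenerate split loses nothing). Hence after $t$ levels the residual optimum has weight between $\pth{1-\eps/\log m}^t w(\Opt)$ and $\pth{2/3}^t w(\Opt)$; being a nonnegative integer bounded initially by $m^2/\eps$, once $t=h=O(\log(m/\eps))$ it is a single polygon or empty, which the base case returns exactly. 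Summing the discarded weight over the $h$ levels, the output has weight at least $\pth{1-\eps/\log m}^{O(\log(m/\eps))}w(\Opt)=\pth{1-O(\eps)}w(\Opt)$, and rescaling $\eps$ by a constant gives the stated $(1-\eps)w(\Opt)$.

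For the running time, a subproblem is pinned down by the at most $h$ guesses made to reach it --- each a $\poly(\log m,1/\eps)$-bit cut encoding (case~(i)) or an $O(\log m)$-bit polygon index (case~(ii)) --- so it is described by $\beta=O\pth{h\cdot\poly(\log m,1/\eps)}=\poly(\log m,1/\eps)$ bits, leaving at most $2^\beta$ subproblems. At each node the algorithm enumerates $2^{\alpha}+m$ candidates with $\alpha=\poly(\log m,1/\eps)$, and the three-way partition of $\PolySet'$ against an $O(n)$-vertex polygon costs $n^{O(1)}$, or $O(n\log n)$ if done carefully (see \remref{filtering}); the total is $2^\beta\pth{2^\alpha+m}\,O(n\log n)=2^{\poly(\log m,1/\eps)}\,O(n\log n)$. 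The whole argument is bookkeeping once \lemref{cheap:balanced:cut} is granted; the only point needing care is that the recursion terminates within $O(\log(m/\eps))$ levels, which rests precisely on the balancedness in property~(B) together with the integrality of the weights from \lemref{normalize:weights}. The genuine difficulty --- producing a cut that is simultaneously cheap, low-complexity, \emph{and} succinctly encodable --- is exactly what \lemref{cheap:balanced:cut} (proved in \secref{decompose:cut}) supplies.
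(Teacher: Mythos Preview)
Your proposal is correct and follows essentially the same approach as the paper's proof in \secref{qptas}: normalize weights via \lemref{normalize:weights}, enumerate all short-encoding cut polygons, recurse on both sides with depth bounded by $O(\log(m/\eps))$, and handle the heavy-polygon degenerate case separately. Your running-time and approximation analyses match the paper's almost verbatim.

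One minor imprecision: the sentence ``the residual optimum has weight between $\pth{1-\eps/\log m}^t w(\Opt)$ and $\pth{2/3}^t w(\Opt)$'' conflates two different quantities --- the \emph{total} residual weight across all level-$t$ subproblems (lower-bounded by the first expression) versus the residual weight in \emph{any single} subproblem (upper-bounded by the second). Also, once $(2/3)^h w(\Opt)<1$ the residual in each leaf is simply empty (weight zero), not ``a single polygon or empty''; but since the base case returns at least the heaviest singleton, this is harmless. Neither point affects correctness.
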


\section{Decompositions and cuttings}
\seclab{decompose:cut}

Our goal in this section is to prove \lemref{cheap:balanced:cut} --
show that for any independent set of polygons there exists a cheap
balanced cut. We show a stronger \emph{constructive} result, by
providing an algorithm that for a given set of non-overlapping
polygons computes a cheap balanced cut efficiently. Of course, in our
settings, the set of these non-overlapping polygons that form the
optimal solution is not known, so only the existence of the cheap
balanced cut is used in the analysis of the algorithm of
\secref{qptas}.

First, in \secref{decompose}, we describe a \emph{canonical
   decomposition} for the plane, guided by the polygons of the optimal
solution. Based on this, in \secref{cuttings} we show that there is an
$1/r$-cutting with small complexity. Finally, in
Section~\secref{good:separation} we prove \lemref{cheap:balanced:cut}.

Later, in \secref{extensions} we extend our reasoning for independent
set to more general settings. Those involve in particular cases in
which the optimal solution consists of polygons which might
overlap. Therefore, we present our reasoning about cuttings also for
the case of polygons that might intersect.

\subsection{Decomposing an arrangement %
   of polygons %
   into corridors}
\seclab{decompose}

\subsubsection{A canonical decomposition for disjoint polygons}
\seclab{canonical-decomposition}

Let $\PolySet = \brc{\Polygon_1, \ldots, \Polygon_m}$ be a set of $m$
non-overlapping simple polygons in the plane, of total complexity $n$.
We also have a special outside square that contains all the polygons
of $\PolySet$, which is the \emphi{frame}.  For the sake of simplicity
of exposition, we assume that all the edges involved in $\PolySet$ and
the frame are neither horizontal nor vertical. This can be ensured by
slightly rotating the axis system\footnote{In the example of
   \figref{example} we do not bother to do this, and the frame is
   axis-parallel.}.

\begin{figure*}[p]
    \centerline{%
       \begin{tabular}{c{c}c}
         \IncludeGraphics[page=1,scale=0.65]{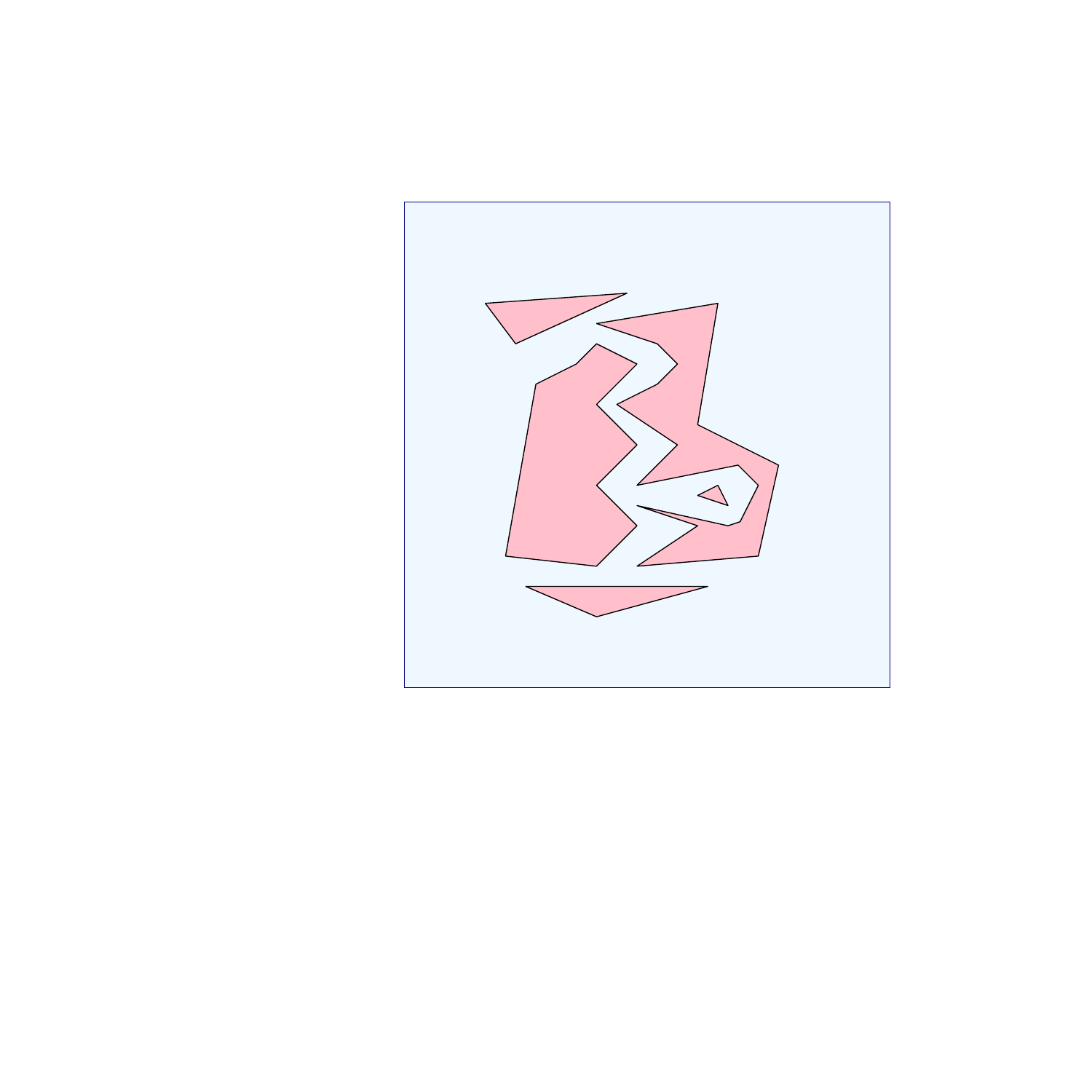}%
         &%
           \qquad\qquad\qquad%
         &
           \IncludeGraphics[page=2,scale=0.65]{figs/medial_axis}\\
         (A) The polygons of $\PolySet$, and the frame.%
         & &
             (B) Some critical squares.\\[0.2cm]%
         \IncludeGraphics[page=3]{figs/medial_axis}
         & &%
             \IncludeGraphics[page=4]{figs/medial_axis}\\
         (C) $\MDC$: The $L_\infty$ medial axis.%
         & &%
             (D) $\MDC'$: The reduced $L_\infty$ medial
             axis.\\[0.2cm]
         \IncludeGraphics[page=7]{figs/medial_axis}%
         & &
             \IncludeGraphics[page=8]{figs/medial_axis}\\
         \begin{minipage}{0.3\linewidth}
             (E) The vertices of degree $3$, their critical squares,
             and the spokes they induce.
         \end{minipage}%
         & &%
             \begin{minipage}{0.3\linewidth}
                 (F) The resulting corridor decomposition, and some
                 corridors.
             \end{minipage}%
       \end{tabular}%
    }
    
    \caption{Building up the corridor decomposition.}
    \figlab{example}
\end{figure*}

We are interested in a canonical decomposition of the complement of
the union of the polygons of $\PolySet$ inside the frame into
\emphi{cells}, that has the property that the number of cells is
$O(m)$, and every cell is defined by a constant number of polygons of
$\PolySet$. To this end, consider the \emphi{medial axis} of
$\PolySet$. To make the presentation easier\footnote{Or at least to
   make the drawing of the figures easier.}, we use the
$L_\infty$-medial axis $\MDC = \MD{\PolySet}$. Specifically, a point
$\pnt \in \Re^2$ is in $\MDC$ if there is an $L_\infty$-ball (i.e., an
axis-parallel square $\Sq$ centered in $\pnt$) that touches the
polygons of $\PolySet$ or the frame in two or more points, and the
interior of $\Sq$ does not intersect any of the polygons of $\PolySet$
or the exterior of the frame. We refer to a square $\Sq$ with the
above properties as a \emphi{critical square}.

The $L_\infty$-medial axis is a connected collection of interior
disjoint segments (i.e., it is the boundary of the Voronoi diagram of
the polygons in $\PolySet$ under the $L_\infty$ metric together with
some extra bridges involving points of the medial axis that have the
same polygon on both sides).  The medial axis $\MDC$ contains some
features that are of no use to us -- specifically, we repeatedly
remove vertices of degree one in $\MDC$ and the segments that support
them -- this process removes unnecessary tendrils. Let $\MDC'$ be the
resulting structure after this cleanup process. Note that this is
exactly the boundary of the Voronoi diagram of the input polygons.

Let $\Vertices = \VerticesX{\MDC'}$ be the set of vertices of $\MDC'$
of degree at least three. Each such vertex corresponds to a point
$\pnt \in \Re^2$ which has a critical square $\Sq_\pnt$ associated
with it. For such a square $\Sq_\pnt$, there are $k \geq 3$ input
polygons (not necessarily pairwise distinct) that it touches, and let
$\pnt_1, \ldots, \pnt_k$ be these $k$ points of contact. We refer to
the segments $\pnt \pnt_1, \pnt\pnt_2, \ldots, \pnt \pnt_k$ as the
\emphi{spokes} of $\pnt$. Since no edge of the input polygons or the
frame is axis parallel, the spokes are uniquely defined.

Let $\SegSet$ be the set of all spokes defined by the vertices of
$\Vertices$. Consider the arrangement formed by the polygons of
$\PolySet$ together with the segments of $\SegSet$. This decomposes
the complement of the union of $\PolySet$ contained inside the frame
into simple polygons. Each such polygon boundary is made out of two
polygonal chains that lie on two polygons of $\PolySet$, and four
spokes, see \figref{example} for an example. We refer to such a
polygon as a \emphi{corridor}, and we denote by $\CD{\PolySet}$ the
collection of corridors corresponding to $\PolySet$.  The set of
corridors $\CD{\PolySet}$ is the \emphi{corridor decomposition} of
$\PolySet$.


\begin{defn}
    \deflab{def:kill:set}%
    Let $\CD{\PolySet}$ be the corridor decomposition of a set of
    non-overlapping simple polygons $\PolySet$ in the plane, and let
    $\PolySetA \subseteq \PolySet$. 
    \begin{compactenum}[\quad(A)]
        \item Consider a corridor $C \in \CD{\PolySetA}$. Then, there
        exists a subset $\PolySetB \subseteq \PolySet$ of size at most
        $4$ such that $C \in \CD{\PolySetB}$. We denote the set
        $\PolySetB$ by $\DefSet{C}$, and call it a \emphi{defining
           set} of the corridor $C$.
        
        \item For a corridor $C \in \CD{\PolySetA}$, a polygon
        $\Polygon \in \PolySet$ \emphi{conflicts} with $C$, if $C$ is
        \emph{not} a corridor of
        $\CD{\DefSet{C} \cup \brc{\Polygon}}$. This happens if
        $\Polygon$ intersects $C$, or alternatively, if the presence
        of $\Polygon$ prevents the creation of the two vertices of the
        medial axis defining $C$. The set of polygons in
        $\PolySet \setminus \DefSet{C}$ that conflict with $C$ is the
        \emphi{stopping set} (or \emphi{conflict list}) of $C$, and is
        denoted by $\KillSet{C}$.
    \end{compactenum}
\end{defn}

Note that the defining set of a corridor might not be unique and that
a defining set might define several corridors and that the stopping
set will be the same, independent of the choice of the defining
set. Note also that if any pair of polygons intersects only $O(1)$
times then any defining set of constant size can define only a
constant number of corridors.

%
%

\begin{lemma}
    \lemlab{corridor:decomp}%
    For a set $\PolySet$ of $m$ disjoint simply connected polygons
    (i.e., polygons without holes) in the plane, we have that
    $\cardin{\CD{\PolySet}} = O\pth{ m }$.
\end{lemma}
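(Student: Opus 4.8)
The plan is to forget the polygonal complexity of the input entirely and reduce to a purely topological count on the skeleton of the construction. Let $R$ be the closed region obtained from the frame by deleting the (open) interiors of the polygons of $\PolySet$. Since the frame is a topological disk and the $m$ polygons are pairwise disjoint and simply connected, $R$ is connected and $\chi(R) = 1 - m$ (each deleted hole drops the Euler characteristic by one). The reduced $L_\infty$ medial axis $\MDC'$ -- which the excerpt identifies as the boundary of the Voronoi diagram of $\PolySet$ inside the frame -- is a connected $1$-complex and is a deformation retract of $R$, so $\chi(\MDC') = 1 - m$ as well.

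First I would view $\MDC'$ as an abstract multigraph $H = \pth{\Vertices, E}$ by suppressing all vertices of degree $2$ (these are just the bends of the polygonal Voronoi edges and the interior points of the bridges); suppression changes neither $\chi$, so $\cardin{\Vertices} - \cardin{E} = 1 - m$, and now every vertex of $H$ has degree at least $3$. Combining $\cardin{E} = \cardin{\Vertices} + m - 1$ with the handshake inequality $3\cardin{\Vertices} \le \sum_{\pnt \in \Vertices}\deg(\pnt) = 2\cardin{E}$ gives $\cardin{\Vertices} \le 2(m-1)$ and hence $\cardin{E} \le 3(m-1)$, so both are $O(m)$. (If $\Vertices = \emptyset$ then $\MDC'$ is a single closed curve, $\CD{\PolySet}$ is a single annular corridor, and the bound holds trivially, so assume $\Vertices \ne \emptyset$.) Next I would observe that each $\pnt \in \Vertices$ is the center of a critical square that touches the polygons of $\PolySet$ at exactly $\deg_H(\pnt)$ contact points -- one per Voronoi region meeting at $\pnt$ -- and therefore contributes $\deg_H(\pnt)$ spokes, so $\cardin{\SegSet} = \sum_{\pnt \in \Vertices}\deg_H(\pnt) = 2\cardin{E} = O(m)$.

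Finally I would set up the correspondence between corridors and edges of $H$. By the description of the corridor decomposition, the interior of every corridor of $\CD{\PolySet}$ meets $\MDC'$ in exactly one edge of $H$ -- the Voronoi edge running through the corridor -- and conversely an edge $e \in E$ between polygons $A$ and $B$ with endpoints $u,v$ yields exactly one corridor, bounded by a chain on $A$, a chain on $B$, and the four spokes $u\to A$, $u\to B$, $v\to A$, $v\to B$. Thus $\cardin{\CD{\PolySet}} = \cardin{E} = O(m)$. (Alternatively, avoiding the bijection, one can double-count spoke--corridor incidences: each corridor has four spokes on its boundary, and each spoke -- whose relative interior lies inside its critical square, hence is disjoint from the polygons and from every other spoke -- separates exactly two corridors, so $4\cardin{\CD{\PolySet}} = 2\cardin{\SegSet} = O(m)$.) The main obstacle I expect is making this last correspondence fully rigorous: one must justify that spokes are pairwise non-crossing, that each corridor's interior meets $\MDC'$ in a single topological edge, and deal with degenerate configurations -- a Voronoi edge that is a loop at a single vertex, or regions touching the frame -- none of which affects the Euler-characteristic bookkeeping but all of which require a careful look at the planar structure of the $L_\infty$ Voronoi diagram. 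A secondary point to state cleanly is that $\MDC'$ is a deformation retract of $R$ with $\chi(\MDC') = \chi(R)$, which is the standard property of medial axes of polygonal domains.
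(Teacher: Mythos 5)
Your proposal is correct and follows essentially the same route as the paper: the paper also interprets $\MDC'$ as a planar (multi)graph on the degree-$\geq 3$ vertices, applies Euler's formula with $m+1$ faces (which is exactly your relation $\cardin{\Vertices}-\cardin{E}=1-m$, obtained there by face-counting rather than by a deformation-retract argument), combines it with the handshake bound $e\geq 3v/2$ to get $v\leq 2m-2$ and $e\leq 3m-3$, and concludes via the corridor--edge correspondence. The only differences are cosmetic: your Euler-characteristic phrasing and your explicit treatment of the degenerate case and of the spoke count.
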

\begin{proof}
    Consider the reduced median axis $\MDC'$. It can be naturally
    interpreted as a connected planar graph, where the vertices of
    degree at least three form the vertex set $\Vertices$, and two
    vertices are connected by an edge if there is a path $\pi$ on
    $\MDC'$ that connects them, and there is no vertex of $\Vertices$
    in the interior of $\pi$. Let $\Graph = (\Vertices, \Edges)$ be
    the resulting graph.
    
    Observe that the drawing of $\Graph$ has $m+1$ faces, as each face
    contains a single polygon of $\PolySet$ in its interior (except
    for the outer one, which ``contains'' the frame).  The graph
    $\Graph$ might contain both self loops and parallel edges.
    However, every vertex of $\Graph$ has degree at least $3$. As
    such, we have that $e \geq 3v/2$, where $v$ and $e$ are the number
    of vertices and edges in $\Graph$, respectively.
    
    Euler's formula in this case states that $m+1-e+v =2$ (the formula
    holds even if the graph contains loops and parallel edges). As
    such we have that $m+1 -(3v/2) +v \geq 2$, which implies that
    $2m + 2 \geq v + 4$; that is $v \leq 2m -2$. This in turn implies
    that $m+1 -e + (2m-2) \geq 2$, which implies that $e \leq 3m-3$.
    Now, clearly, every corridor corresponds to one edge of $\Graph$,
    which implies the claim.
\end{proof}

\subsubsection{A canonical decomposition for %
   intersecting polygons}

\newcommand{\IncA}[1]{
   \IncludeGraphics[page=#1,width=0.4\linewidth]{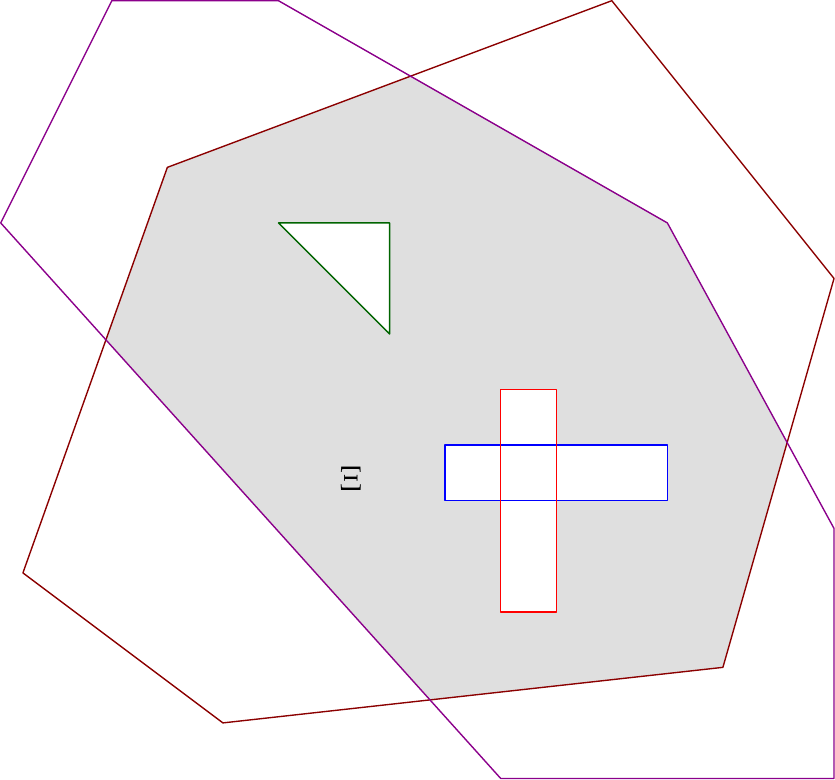}%
}
\begin{figure*}[p]
    \centerline{%
       \begin{tabular}{c{c}c}
         \IncA{1}
         &%
           \qquad\qquad%
         &%
         \IncA{2}%
         \\
         (A) The polygons defining face $\face$ (gray).
         &%
         &
           (B) The edges defining the boundary of $\face$.%
         \\[0.2cm]
         \IncA{3}%
         & &%
             \IncA{4}%
         \\
         (C) The critical squares.%
         &
         &%
           (D) The medial axis vertices generated.%
         \\[0.2cm]
         \IncA{5}%
         &%
         &%
           \IncA{6}%
         \\
         \begin{minipage}{0.4\linewidth}
             (E) The vertices of degree $3$, and their spokes.
         \end{minipage}%
         & &%
             \begin{minipage}{0.4\linewidth}
                 (F) The resulting corridor decomposition, and some
                 corridors.
             \end{minipage}%
       \end{tabular}%
    }
    
    \caption{Building up the corridor decomposition for a single face $\face$,
       for non-disjoint polygons.}
    \figlab{example:2}
\end{figure*}


Let $\PolySetA = \brc{\Polygon_1, \ldots, \Polygon_m}$ be a set of $m$
simple polygons in the plane of total complexity $n$, that are not
necessarily disjoint.  In the following, we think about each polygon
as being a (closed) \emphi{curve}, and we naturally assume that no
three curves pass through a common point.


For two curves of $\PolySetA$, an intersection point of their
boundaries is an \emphi{intersection vertex}.  Consider the
arrangement $\ArrX{\PolySetA}$ -- it is a decomposition of the plane
into \emphi{faces}, i.e., maximal connected components that avoid the
curves of $\PolySetA$.  The maximum connected portion on a curve
between two intersection vertices is an \emphi{edge}. If a curve has
no intersection points on it, then the whole curve is an edge. See
\figref{example:2} (A) and (B) for an example.

Consider a face $\face$ of the arrangement $\ArrX{\PolySetA}$. Denote
by $\totalI_\face$ the number of intersection vertices on the boundary
of $\face$, and by $k_\face$ the number of connected components of the
boundary of $\face$. Each connected component of the boundary can be
broken into several edges. To decompose the face $\face$ into
corridors, we apply a modified version of \lemref{corridor:decomp},
where we treat the connected components of the boundary of $\face$ as
polygons, and the outer connected component of the boundary as the
frame. The main modification is that during the cleanup process, we do
not delete the tendrils that rise out of the intersection vertices
(i.e., the endpoint of a medial axis edge ending at an intersection
vertex is not deleted)%
\footnote{Conceptually, the reader might think about inserting a tiny
   ``puncture'' polygon into $\face$ just next to each such
   intersection vertex, and applying \lemref{corridor:decomp} to this
   set of polygons, where each connected component of the boundary is
   a polygon.}. %
Each such tendril would give rise to one additional corridor.  An
example of the resulting decomposition into corridors is depicted in
\figref{example:2}.

Repeating the above operation for all the faces in the arrangement
$\ArrX{\PolySetA}$ results in a decomposition of the whole plane into
a collection of corridors $\CD{\PolySetA}$. Now, unlike in the setting
of disjoint input polygons, a corridor might be contained in the
interior of several polygons of $\PolySetA$.  However, we still have
the property that no polygon boundary intersects the interior of a
corridor.

\begin{lemma}
    \lemlab{corridor:decomp:ext}%
    Let $\PolySetA$ be a set of $m$ simply connected polygons in the
    plane, let $\totalI$ be the total number of intersection vertices
    in $\ArrX{\PolySetA}$, and let $\CD{\PolySetA}$ be the corridor
    decomposition of $\PolySetA$.  Then
    $|\CD{\PolySetA}| = O\pth{ m + \totalI }$, and each corridor in
    $\CD{\PolySetA}$ is defined by at most $O(1)$ polygons of
    $\PolySetA$.
\end{lemma}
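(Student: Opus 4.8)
The plan is to reduce the intersecting case back to the disjoint case by applying the counting argument of \lemref{corridor:decomp} face-by-face over the arrangement $\ArrX{\PolySetA}$, and then summing. For a single face $\face$ of $\ArrX{\PolySetA}$, the boundary $\bd \face$ consists of $k_\face$ connected components, each a cyclic sequence of edges. I would first argue that the number of corridors obtained from $\face$ by the modified construction is $O(\totalI_\face + k_\face)$: conceptually replace each connected component of $\bd\face$ by a ``polygon'' (the inner components play the role of obstacle polygons, the outer one plays the role of the frame), and then insert a tiny puncture polygon next to each of the $\totalI_\face$ intersection vertices on $\bd\face$ so that the retained tendrils are accounted for, exactly as in the footnote. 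Applying \lemref{corridor:decomp} to this collection of $k_\face + \totalI_\face$ disjoint simply connected polygons (which is legitimate since each connected boundary component of $\face$, being the boundary of a planar region, is simply connected as a curve) shows that the face decomposes into $O(k_\face + \totalI_\face)$ corridors. Each such corridor is bounded by a constant number of spokes and boundary chains drawn from a constant number of the surrounding polygons, so it is defined by $O(1)$ polygons of $\PolySetA$, giving the second assertion.

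Next I would sum over all faces: $|\CD{\PolySetA}| = \sum_\face O(k_\face + \totalI_\face)$. For the $\totalI_\face$ term, each intersection vertex lies on the boundary of at most four faces (it is a vertex of degree $4$ in the arrangement), so $\sum_\face \totalI_\face = O(\totalI)$. For the $k_\face$ term I would use Euler's formula on the whole arrangement $\ArrX{\PolySetA}$: with $V = O(m + \totalI)$ vertices (each of the $m$ curves contributes $O(1)$ curve-complexity vertices, wait — actually we should be careful and count genuine arrangement vertices, which are the $\totalI$ intersection vertices, so $V = O(\totalI)$ when every curve is intersected, plus $O(m)$ to handle isolated non-intersecting curves) and $E = O(V)$ edges, the total number of faces is $O(m + \totalI)$, and $\sum_\face k_\face$ is at most twice the number of edges of the arrangement, hence $O(m + \totalI)$. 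Combining, $|\CD{\PolySetA}| = O(m + \totalI)$, as claimed.

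The main obstacle I anticipate is making the ``modified \lemref{corridor:decomp}'' rigorous for a single face, specifically bounding the extra corridors created by the retained tendrils at intersection vertices and verifying the Euler-formula bookkeeping when $\bd\face$ has several connected components and the medial-axis graph may contain self-loops and parallel edges. The puncture-polygon device in the footnote is the right way to handle this: once each intersection vertex is surrounded by its own tiny polygon, the cleanup process of \secref{canonical-decomposition} deletes no tendril that we wish to keep, the graph $\Graph$ built from the reduced medial axis has all vertices of degree $\ge 3$, and the face count of its planar drawing is exactly $k_\face + \totalI_\face + 1$ (one face per boundary component, one per puncture, one outer), so the same $e \le 3(\text{faces}) - 3$ estimate yields $O(k_\face + \totalI_\face)$ corridors. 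A secondary point to check is that a corridor can still be defined by $O(1)$ polygons even though it may now lie in the interior of many polygons of $\PolySetA$ — but this is fine, since ``defined by'' refers only to the polygons whose edges and induced spokes form $\bd C$, and that number is governed by the local structure of the medial axis, hence $O(1)$.
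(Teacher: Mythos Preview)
Your proposal is correct and follows essentially the same approach as the paper: reduce face-by-face to the disjoint case (via the puncture trick already alluded to in the construction), bound the corridors in a single face by $O(k_\face + \totalI_\face)$, then sum using the observations that each intersection vertex borders at most four faces and that $\sum_\face k_\face$ is bounded by twice the number of arrangement edges, hence $O(m+\totalI)$. The paper's own proof compresses this into the single sentence ``the total number of edges for all faces of the arrangement is $O(m+\totalI)$'' and handles the $O(1)$-defining-set claim by explicitly tracing the chain (intersection vertex $\to$ 2 polygons, edge $\to$ 3, medial-axis vertex $\to$ 3 edges), but the substance is identical.
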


\begin{proof}
    As for the total number of corridors, observe that every
    intersection vertex can contribute to at most four
    faces. Similarly, a single curve such that there is no vertex on
    it, is on the boundary of two faces. Therefore, the total number
    of edges for all faces of the arrangement is
    $O\pth{ m + \totalI }$, and we get
    $|\CD{\PolySetA}| = O\pth{ m + \totalI }$.
    
    We now need to verify that every corridor is defined by a constant
    number of polygons. Indeed, an intersection vertex is defined by
    two polygons, and an edge by three polygons. A medial-axis vertex
    is defined by three edges, which also is the defining set for a
    spoke. As such, all these entities have a constant size defining
    set.
\end{proof}

\subsubsection{Loose and tight corridors}

Let $\PolySet$ be a set of $m$ polygons, where every pair of them
intersects at most $\NInt$ times. Let
$\AllRegions = \bigcup_{\Sample \subseteq \PolySet} \CD{\Sample}$ be
the set of all corridors that are induced by some polygon of
$\PolySet$.

A corridor $\Corridor$ that is defined by a set
$\DefSetC = \DefSet{\Corridor}$ of polygons that are pairwise disjoint
is a \emphi{loose} corridor.  A corridor whose defining set contains a
pair of intersecting polygons is \emphi{tight}.

\begin{lemma}
    \lemlab{loose:tight}%
    Let $\PolySet$ be a set of $m$ polygons, where every pair of them
    intersects at most $\NInt$ times. Let $\LAllRegions$ and
    $\TAllRegions$ be the sets of all loose and tight corridors,
    respectively, induced by any subset of polygons of $\PolySet$. We
    have the following:
    \begin{compactenum}[\quad(A)]
        \item A loose corridor has a defining set of size $\leq 4$,
        and $\cardin{\LAllRegions} = O(m^4)$.

        \item A tight corridor has a defining set of size $\leq 12$,
        and $\cardin{\TAllRegions} = O\pth{m^{12}\NInt^{8}}$.
    \end{compactenum}
\end{lemma}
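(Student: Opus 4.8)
The plan is to bound the number of distinct corridors in each class by bounding the number of distinct \emph{defining sets}, and for that I first need to pin down the size of a defining set in each case. For a loose corridor, the defining set consists of pairwise disjoint polygons, so we are back in the setting of \lemref{corridor:decomp}: such a corridor is bounded by four spokes and two polygonal chains lying on (at most four) polygons of $\PolySet$. Hence $|\DefSet{\Corridor}|\le 4$. Counting: there are $O(m^4)$ choices of a set of at most four polygons from $\PolySet$, and by \lemref{corridor:decomp} (applied with the frame playing the role of one more ``polygon'', which only changes constants) each such set induces $O(1)$ corridors --- here we use the remark after \defref{def:kill:set} that when any pair of polygons intersects only $O(1)$ times, a defining set of constant size defines only a constant number of corridors (for disjoint polygons the number of intersections is zero, so this applies a fortiori). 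Therefore $\cardin{\LAllRegions}=O(m^4)$.

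For a tight corridor, the idea is to trace back \emph{which} combinatorial features of the arrangement $\ArrX{\Sample}$ are responsible for the corridor, and then bound how many input polygons are needed to realize each feature. Following the proof of \lemref{corridor:decomp:ext}: a corridor is bounded by (at most four) spokes together with chains of edges; each spoke is pinned down by a medial-axis vertex of degree three, which is pinned down by three edges of the arrangement boundary; each such edge is a maximal portion of a curve between two intersection vertices, and an intersection vertex is defined by two polygons, so an edge is defined by at most three polygons (the curve carrying it plus the two curves cutting off its endpoints). Chasing the worst case --- four spokes, each needing three edges, each edge needing three polygons --- would naively give $36$; the claimed bound of $12$ comes from observing that the two endpoints of the corridor are really two medial-axis vertices (contributing the four spokes between them is an over-count of the vertices involved), so the corridor is determined by two degree-three medial-axis vertices, i.e.\ by $2\times 3 = 6$ edges, i.e.\ by at most $2\cdot 6 = 12$ polygons once we account for the fact that consecutive edges around a boundary component share a polygon. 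I would write this as: two medial-axis vertices $\Rightarrow$ six defining edges $\Rightarrow$ at most $12$ defining polygons. The main obstacle here is doing this bookkeeping carefully enough to land exactly on $12$ rather than a larger constant --- the savings rely on the fact that the edges around a single boundary chain of the face overlap in their defining polygons, and that the ``curve'' polygon of an edge is often already counted. If the clean accounting does not quite give $12$, the safe fallback is to state the bound with a larger explicit constant, but I believe the $12$ is achievable by the argument above.

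Finally, the count $\cardin{\TAllRegions}=O(m^{12}\NInt^{8})$: there are $O(m^{12})$ ways to choose the (at most $12$) defining polygons, and for a \emph{fixed} set $\DefSetC$ of $12$ polygons we must count how many distinct tight corridors it can define. Since some pair in $\DefSetC$ intersects, the arrangement $\ArrX{\DefSetC}$ has up to $O(\NInt)$ intersection vertices per pair, hence $O(\NInt)$ intersection vertices total over the $O(1)$ pairs, and by the argument of \lemref{corridor:decomp:ext} the corridor decomposition $\CD{\DefSetC}$ has $O(1 + \NInt) = O(\NInt)$ corridors. That alone would only give $O(m^{12}\NInt)$; to reach the exponent $8$ on $\NInt$ one must be more generous in how a corridor is specified --- namely, rather than fixing $\DefSetC$ as an unordered set and then reading off $O(\NInt)$ corridors, one specifies the corridor by naming, for each of its (up to $4$) spokes, the $O(\NInt)$-sized choice of which edge along the relevant curve the spoke attaches to, together with which of the $O(\NInt)$ intersection vertices delimit it. Budgeting $\NInt$ choices for each of the two medial-axis vertices' worth of edge-and-endpoint data (four spokes, but paired up, with two independent $O(\NInt)$ choices per medial-axis vertex, gives $\NInt^{8}$) yields $O(m^{12}\NInt^{8})$ total. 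The delicate point --- and the second potential obstacle --- is matching the exponents on $m$ and $\NInt$ exactly; the structure of the argument (choose defining polygons, then choose the bounded combinatorial ``wiring'' of the corridor inside the induced arrangement) is robust, and only the precise constants $12$ and $8$ require care.
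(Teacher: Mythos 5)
Your loose case matches the paper's: a loose corridor is defined by at most four disjoint polygons (two carrying the floor and ceiling chains, two more pinning down the start and end vertices), whence $\cardin{\LAllRegions}=O(m^4)$.

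In the tight case there is a genuine gap: you never isolate the combinatorial unit from which both constants fall out, and both of your accountings (the $36\to 18\to 12$ reduction via ``shared polygons,'' and the ``$\NInt$ choices per medial-axis vertex'' budget for the exponent $8$) are, as you yourself flag, not arguments but hoped-for bookkeeping. The paper's proof keeps the loose-case template intact: a corridor is defined by four \emph{entities}, and in the tight case each entity is a subcurve (edge) of $\ArrX{\PolySet}$ rather than a whole polygon. A subcurve is specified by three polygons --- the curve it follows plus the two polygons whose intersections with that curve delimit its endpoints --- together with one integer in $\IntRange{\NInt}$ per endpoint to say which of the at most $\NInt$ intersection points is meant. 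Hence there are at most $m^3\NInt^2$ subcurves, each defined by $3$ polygons, and four of them give a defining set of size $4\cdot 3=12$ and a count of $\pth{m^3\NInt^2}^4 = m^{12}\NInt^8$. This is exactly the step your write-up is missing; with it, no ``fallback to a larger constant'' is needed. Incidentally, your first instinct for the count --- fix the (unordered) defining set and invoke the $O(\NInt)$ bound on the complexity of the corridor decomposition of a constant-size set --- is a legitimate alternative and would even yield a better dependence on $\NInt$; the mistake was abandoning it in order to force the exponent $8$, which in the paper arises only because the four subcurves are counted independently.
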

\begin{proof}
    It is easy to verify that a loose polygon is defined by at most
    four polygons -- indeed, two polygons define the floor and ceiling
    chains, and two additional polygons define the start and end
    vertices. In particular, $\cardin{\LAllRegions} = O(m^4)$.

    A tight corridor is defined by vertices and subcurves of
    $\ArrX{\PolySet}$. As such, to bound the number of tight
    corridors, we first bound the number of such entities in the
    arrangement $\ArrX{\PolySet}$.  A vertex of $\ArrX{\PolySet}$ is
    the intersection point (of the boundary) of two polygons
    $\Polygon, \PolygonA \in \PolySet$.  Since there are $\leq \NInt$
    intersection points between $\bd \Polygon$ and $\bd \PolygonA$, it
    follows that a vertex can be specified uniquely by these two
    polygons and an integer in $\IntRange{\NInt}$. Thus, overall,
    there are $\binom{m}{2} \NInt$ vertices in $\ArrX{\PolySet}$.

    A subcurve of $\ArrX{\PolySet}$ starts at vertex $u$, that is
    induced by two polygons $\Polygon, \PolygonA \in \PolySet$, and
    follows (say) $\PolygonA$, till it reaches a new vertex $v$ that
    is the intersection of $\bd \PolygonA$ with the boundary of some
    polygon  $\PolygonB \in \PolySet$. As such, the number of such
    subcurves in $\ArrX{\PolySet}$ is $\leq m^3 \NInt^2$, and such a
    subcurve is defined by three input polygons.

    Now, as in the loose case, a tight corridor is defined by four entities
    -- in the loose case these were four polygons, while in the tight
    case these are subcurves. It follows that a tight corridor is
    defined by at most twelve  polygons of $\PolySet$, and the number
    of such corridors is at most $\pth{m^3 \NInt^2}^4 = m^{12}
    \NInt^8$.
\end{proof}


    
    

\begin{lemma}
    \lemlab{u:good}%
    Let $\PolySet$ be a set of polygons, such that the boundary of any
    pair of them intersects at most $\NInt$ times, and let
    $\Sample \subseteq \PolySet$ be a  set  of $m$ polygons. Then,
    the number of corridors in $\CD{\Sample}$ (i.e., the complexity of
    $\CD{\Sample}$) is
    \begin{math}
        u(m) = O(m)
    \end{math}
    if the polygons of $\Sample$ are disjoint, and 
    \begin{math}
        u(m) = O( \NInt m^2 )
    \end{math}
    otherwise.
\end{lemma}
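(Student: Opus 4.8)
The plan is to derive both bounds directly from the two counting lemmas already established for the corridor decomposition, namely \lemref{corridor:decomp} for disjoint polygons and \lemref{corridor:decomp:ext} for intersecting ones. Since $\Sample$ is itself a set of $m$ simply connected polygons, both lemmas apply to $\Sample$ (playing the role of the input set in those statements), so all that remains is to instantiate the relevant parameters.

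For the disjoint case, I would simply invoke \lemref{corridor:decomp} with the set $\Sample$ of $m$ disjoint simply connected polygons; this immediately gives $\cardin{\CD{\Sample}} = O(m)$, that is, $u(m) = O(m)$.

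For the general case, the one substantive step is to bound $\totalI$, the total number of intersection vertices in the arrangement $\ArrX{\Sample}$. Every such vertex is an intersection point of the boundaries of two distinct polygons of $\Sample$, and by hypothesis any such pair contributes at most $\NInt$ of them; summing over the $\binom{m}{2}$ pairs yields $\totalI \le \binom{m}{2}\NInt = O(\NInt m^2)$. Plugging this into \lemref{corridor:decomp:ext} gives $\cardin{\CD{\Sample}} = O(m + \totalI) = O(m + \NInt m^2) = O(\NInt m^2)$, which is the claimed bound $u(m) = O(\NInt m^2)$.

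The argument is essentially bookkeeping, and I do not expect a real obstacle beyond verifying that the hypotheses of the two cited lemmas are in force: the polygons of $\Sample$ are simply connected, and the pairwise boundary-intersection bound $\NInt$ is exactly the quantity that controls $\totalI$. One small point worth stating explicitly is that the resulting bound depends only on the number of polygons in $\Sample$ (and on $\NInt$) and is independent of the ambient set $\PolySet$; the role of $\PolySet$ is only to supply the uniform bound $\NInt$ on pairwise boundary intersections, which is inherited by every subset, in particular by $\Sample$.
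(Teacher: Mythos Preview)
Your proposal is correct and follows essentially the same route as the paper: the disjoint case is immediate from \lemref{corridor:decomp}, and for the general case you bound the number of intersection vertices by $\NInt\binom{m}{2}$ and feed this into \lemref{corridor:decomp:ext} to get $O(m+\totalI)=O(\NInt m^2)$. The paper phrases the second step as applying \lemref{corridor:decomp:ext} ``for each face of the arrangement,'' but since that lemma already sums over all faces, this is the same argument.
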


\begin{proof}
    The disjoint case is immediate from \lemref{corridor:decomp}. As
    for the more general case, observe that the arrangement
    $\ArrX{\Sample}$ has $\leq \NInt\, \binom{m}{2}$ vertices. The
    bound now follows by applying \lemref{corridor:decomp:ext} for
    each face of the arrangement.
\end{proof}


\subsection{Sampling, exponential decay, %
   and %
   cuttings}
\seclab{cuttings}

We next show how compute \emphi{$1/r$-cuttings} for a collection of
disjoint polygons, and for sparse sets of polygons. We start by
reproving the exponential decay lemma.

\subsubsection{Exponential decay}

Let $\PolySet=\{\Polygon_1,\ldots,\Polygon_m\}$ be a set of $m$
polygons in the plane, where every polygon $\Polygon_i \in \PolySet$
has assigned weight $w_i > 0 $, and let $W = \sum_{i=1}^m w_i$.  We
consider two cases: when $\PolySet$ is independent, and when the
polygons in $\PolySet$ can intersect.

Consider any subset $\Sample \subseteq \PolySet$.  It is easy to
verify that the following two conditions hold. %
\smallskip
\begin{compactenum}[\qquad\qquad(i)]
    \item 
    For any $C \in \CD{\Sample}$, we have
    $\DefSet{C} \subseteq \Sample$ and
    $\Sample \cap \KillSet{C} = \emptyset$, where $\DefSet{C}$ and
    $\KillSet{C}$ are the defining set and the conflict list of $C$,
    respectively.
    
    \item 
    If $\DefSet{C} \subseteq \Sample$ and
    $\KillSet{ C }\cap \Sample=\emptyset$, then $C \in \CD{\Sample}$.
\end{compactenum}
\smallskip

Namely, the corridor decomposition complies with the technique of
Clarkson-Shor \cite{cs-arscg-89} (see also \cite[Chapter
8]{h-gaa-11}).


\begin{defn}
    For a set $\PolySet$ of weighted polygons and a target size
    $\rho > 0$, a \emphi{$\rho$-sample} is a random sample
    $\Sample \subseteq \PolySet$, where each polygon
    $\Polygon_i \in \PolySet$ is in $\Sample$ with probability
    $\displaystyle \rho {w_i} / {W}$.
\end{defn}

\begin{defn}
    A monotone increasing function $\cFunc{\cdot} \geq 0$ is
    \emphi{polynomially growing}, if for any integer $i > 0$ we have
    that $\cFunc{i n } \leq i^{O(1)}\cFunc{ n }$.
\end{defn}

We next prove a weighted version of the exponential decay lemma --
this is a standard implication of the Clarkson-Shor technique.  The
proof is a straightforward extension of the standard proof (if
slightly simpler), and is presented here for the sake of completeness.

\begin{lemma}[Weighted exponential decay lemma]%
    \lemlab{exponential:decay}%
    Let $\PolySet = \brc{\Polygon_1,\ldots, \Polygon_m}$ be a set of
    $m$ disjoint polygons in the plane, with weights $w_1,\ldots,w_m$,
    respectively. Let $\rho \leq m$ and $1 \leq t \leq \rho/4$ be
    parameters, and let $W= \sum_i w_i$.  Consider two independent
    random $\rho$-samples $\Sample_1$ and $\Sample_2$ of $\PolySet$,
    and let $\Sample = \Sample_1 \cup \Sample_2$.  A corridor
    $C \in \CD{\Sample}$ is \emphi{$t$-heavy} if the total weight of
    the polygons of $\PolySet$ in its conflict list $\KillSet{ C }$ is
    at least $t W/\rho$. Let $\CDH{\Sample}{t}$ be the set of all
    $t$-heavy corridors of $\CD{\Sample}$.  We have that
    \begin{math}
        \displaystyle \Ex{\bigl.  \cardin{ \CDH{\Sample}{t} } } =
        O\pth{ \bigl. \rho \exp\pth{-t} }.
    \end{math}
    
    In a more general setting, when the polygons in $\PolySet$ are not
    disjoint, and the corridor decomposition of any $m'$ of them has
    complexity $\cFunc{m'}$, where $\cFunc{m'}$ is a polynomially
    growing function, we have that
    $\Ex{\Bigl. \cardin{ \CDH{\Sample}{t} } } = O\pth{ u(\rho)
       \exp\pth{-t} }$.
\end{lemma}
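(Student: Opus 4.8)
The plan is to run the standard Clarkson--Shor / Chazelle--Friedman argument, using the two properties (i)--(ii) recorded just above: they say precisely that the corridor decomposition is a configuration space, i.e. $C\in\CD{\Sample}$ iff $\DefSet{C}\subseteq\Sample$ and $\KillSet{C}\cap\Sample=\emptyset$. Since a corridor may have several defining sets, I would first fix, once and for all and for every corridor $C$ that can arise from a subset of $\PolySet$, one defining set $\DefSet{C}$ (chosen so that (i)--(ii) hold); its size is bounded by an absolute constant $d$ ($d\le 4$ when the polygons of $\Sample$ are disjoint, $d\le 12$ in general, by \lemref{loose:tight}). After this choice the corridors form an honest configuration space, so that $\Ex{\cardin{\CD{R}}}=\sum_{C}\Prob{C\in\CD{R}}$ for any random sample $R$, the sum ranging over all corridors that can appear.

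\medskip
\textbf{Step 1: from the union of two samples to a single $\rho$-sample.} Write $W=\sum_i w_i$ and $p_i=\rho w_i/W\le 1$. Each $\Polygon_i$ lies in $\Sample=\Sample_1\cup\Sample_2$, independently over $i$, with probability $1-(1-p_i)^2\le 2p_i$, and is absent from $\Sample$ with probability $(1-p_i)^2$. Since $\DefSet{C}$ and $\KillSet{C}$ are disjoint,
\[
\Prob{C\in\CD{\Sample}}=\pth{\prod_{i\in\DefSet{C}}\bigl(1-(1-p_i)^2\bigr)}\pth{\prod_{j\in\KillSet{C}}(1-p_j)^2}\;\le\;2^{d}\pth{\prod_{i\in\DefSet{C}}p_i}\pth{\prod_{j\in\KillSet{C}}(1-p_j)}^{2}.
\]

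\medskip
\textbf{Step 2: extract the exponential decay and compare to a single sample.} Restrict attention to $t$-heavy corridors, i.e. those with $\weightX{\KillSet{C}}\ge tW/\rho$. For such a $C$, one of the two copies of $\prod_{j\in\KillSet{C}}(1-p_j)$ is at most $\exp\pth{-\sum_{j\in\KillSet{C}}p_j}=\exp\pth{-\rho\,\weightX{\KillSet{C}}/W}\le\exp(-t)$, while the leftover $\pth{\prod_{i\in\DefSet{C}}p_i}\pth{\prod_{j\in\KillSet{C}}(1-p_j)}$ is exactly $\Prob{C\in\CD{\Sample''}}$ for a single random $\rho$-sample $\Sample''$. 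Summing over all corridors,
\[
\Ex{\cardin{\CDH{\Sample}{t}}}\;\le\;2^{d}\exp(-t)\sum_{C}\Prob{C\in\CD{\Sample''}}\;=\;2^{d}\exp(-t)\,\Ex{\cardin{\CD{\Sample''}}}.
\]
By \lemref{u:good} the number of corridors in $\CD{\Sample''}$ is at most $\cFunc{\cardin{\Sample''}}$ (with $\cFunc{m'}=O(m')$ in the disjoint case), and $\cardin{\Sample''}$ is a sum of independent indicators with mean $\sum_i p_i=\rho$, hence concentrated around $\rho$ with $\Ex{\cardin{\Sample''}^{k}}=O(\rho^{k})$ for every fixed $k$; together with the polynomial growth of $\cFunc{\cdot}$ (so $\cFunc{\lambda\rho}\le\lambda^{O(1)}\cFunc{\rho}$ for integer $\lambda$) this gives $\Ex{\cFunc{\cardin{\Sample''}}}=O\pth{\cFunc{\rho}}$. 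Since $d=O(1)$, we conclude $\Ex{\cardin{\CDH{\Sample}{t}}}=O\pth{\cFunc{\rho}\exp(-t)}$, which is $O\pth{\rho\exp(-t)}$ in the disjoint case.

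\medskip
The main obstacle --- and the reason one takes two independent $\rho$-samples rather than one --- will be the bookkeeping in Step 2: squaring the survival probability $(1-p_j)$ is exactly what lets one peel off the $\exp(-t)$ factor forced by $t$-heaviness \emph{and} still be left with a full product $\prod_j(1-p_j)$, so that the residual expression is a genuine $\rho$-sample probability that sums to $\Ex{\cardin{\CD{\Sample''}}}$; with a single sample nothing remains after extracting the decay. The other ingredients are routine: committing to canonical defining sets so that non-uniqueness does not over-count corridors, and the transfer $\Ex{\cFunc{\cardin{\Sample''}}}=O\pth{\cFunc{\rho}}$ from Chernoff-type concentration of $\cardin{\Sample''}$ together with the polynomial growth hypothesis on $\cFunc{\cdot}$.
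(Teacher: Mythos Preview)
Your proof is correct and follows essentially the same double-sampling argument as the paper. The only cosmetic difference is that you expand $\Prob{C\in\CD{\Sample}}$ directly as a product over polygons and peel off one copy of $\prod_{j\in\KillSet{C}}(1-p_j)$, whereas the paper phrases the same step via conditional probabilities (showing $\Prob{C\in\CD{\Sample_1}\mid C\in\CD{\Sample}}\ge 2^{-b}$ and $\Prob{C\in\CD{\Sample}\mid C\in\CD{\Sample_1}}\le e^{-t}$) and compares to $\Sample_1$ rather than a fresh $\rho$-sample; the resulting inequalities and the final bound are identical.
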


\begin{proof}
    The basic argument is to use double sampling. Intuitively (but
    outrageously wrongly), a heavy corridor of $\CD{\Sample}$ has
    constant probability to be present in $\CD{\Sample_1}$, but then
    it has exponentially small probability (i.e., $e^{-t}$) of not
    being ``killed'' by a conflicting polygon present in the second
    sample $\Sample_2$.
    
    For a polygon $\PolygonA \in \PolySet$, we have that
    \begin{math}
        \Prob{ \PolygonA \in \Sample_1 \sep{ \PolygonA \in \Sample}}
        =%
        \Prob{ \PolygonA \in \Sample_2 \sep{ \PolygonA \in \Sample}}%
        \geq%
        1/2.
    \end{math}
    Now, consider a corridor $C \in \CD{\Sample}$, and let
    $\PolygonA_1, \ldots, \PolygonA_b \in \PolySet$ be its defining
    set, where $b$ is some small constant. Clearly, we have that
    \begin{align*}
      &
        \Prob{ C \in \CD{\Sample_1} \sepw{ \bigl. C \in
        \CD{\Sample}}}%
        =%
        \Prob{\PolygonA_1, \ldots, \PolygonA_b \in \Sample_1 \sepw{
        \bigl. C \in \CD{\Sample}}}%
      \\%
      &=%
        \prod_{i=1}^b \Prob{\PolygonA_i \in \Sample_1 \sepw{\bigl. C
        \in \CD{\Sample}}}
        = \prod_{i=1}^b \Prob{\PolygonA_i \in {\Sample_1} \sepw{\bigl.
        \PolygonA_1, \ldots, \PolygonA_b \in
        \Sample}}\\
      &%
        = \prod_{i=1}^b \Prob{\PolygonA_i \in {\Sample_1} \sepw{\bigl.
        \PolygonA_i \in \Sample}}%
        \geq%
        \frac{1}{2^b}.
    \end{align*}
    This in turn implies that
    \begin{align}
      2^b \Prob{ \Big. \big(C \in \CD{\Sample_1} \!\big)\, \cap\,
      \big( C \in \CD{\Sample} \!\big) }%
      \geq %
      \Prob{ \Bigl.  C \in \CD{\Sample}}.%
      \eqlab{one}
    \end{align}

    Next, consider a corridor $C \in \CD{\Sample_1}$ that is
    $t$-heavy, with, say,
    $\brc{\Polygon_1, \ldots, \Polygon_k} \subseteq \PolySet$ being
    its conflict list.  Clearly, the probability that $\Sample_2$
    fails to pick one of the conflicting polygons in $\Sample_2$, is
    bounded by
    \begin{align*}
      \ProbC{ C \in \CD{\Sample}}%
      { \Bigl. C \in \CD{\Sample_1}}%
      &=%
        \Prob{ \Bigl.\forall i \in \brc{1,\ldots, k} \quad \Polygon_i
        \notin \Sample_2}%
        =%
        \prod_{i=1}^k \pth{1 - \rho\frac{w_i}{W}}%
      \\%
      &\leq %
        \prod_{i=1}^k \exp \pth{ - \rho \frac{w_i}{ W}} %
        =%
        \exp \pth{ - \frac{\rho}{W} \sum_{i=1}^k w_i } %
      \\
      &\leq%
        \exp\pth{ - \frac{\rho}{W} \cdot t \frac{W}{\rho} } =
        e^{-t}. %
    \end{align*}
    
    Let $\Family$ be the set of possible corridors that can be present
    in the corridor decomposition of any subset of polygons from
    $\PolySet$, and let $\FamilyH{t}\subseteq \Family$ be the set of
    all $t$-heavy corridors from $\Family$.  We have that
    \begin{align*}
      \Ex{\Bigl. \cardin{ \CDH{\Sample}{t} } }%
      &=%
        \sum_{C \in \FamilyH{t}} \Prob{\Bigl. C \in \CD{\Sample}}
        \leq%
        \sum_{C \in \FamilyH{t}} 2^b \Prob{ \Bigl. \pth{C \in
        \CD{\Sample_1}} \cap \pth{ C \in \CD{\Sample}}}\\%
      &\leq%
        2^b \sum_{C \in \FamilyH{t}}%
        \underbrace{ \ProbC{ \Bigl. C \in \CD{\Sample}}{ C \in
        \CD{\Sample_1}} }_{\leq e^{-t}} \Prob{ \Bigl. C \in
        \CD{\Sample_1}}
        %
         \leq%
         2^b e^{-t} \sum_{C \in \FamilyH{t}} \Prob{ \Bigl. C \in
         \CD{\Sample_1}}%
      \\%
      &%
        \leq%
        2^b e^{-t} \sum_{C \in \Family} \Prob{ \Bigl. C \in
        \CD{\Sample_1}}%
         = %
         2^b e^{-t} \Ex{\Bigl.  \cardin{\CD{\Sample_1}}}%
         = %
         2^b e^{-t} \Ex{ \Bigl.  O\pth{ \Bigl. \cardin{\Sample_1}}} = %
         O\pth{ e^{-t} \rho },
    \end{align*}
    by \lemref{corridor:decomp}, and since
    $\Ex{ \cardin{\Sample_1}} = \rho$ and $b$ is a constant.
    
    As for the second claim, by the Chernoff inequality, and since
    $\cFunc{\cdot}$ is polynomially growing, there are constants $c$
    and $c'$, such that
    \begin{align*}
      \Ex{\Bigl.  \cardin{\CD{\Sample_1}}}%
      \leq%
      \cFunc{\rho} + \sum_{i=1}^{\infty} \Prob{ \Big.\! \cardin{
      \Sample_1} \geq i \rho} \cFunc{\Big. (i+1)\rho}%
      \leq%
      \cFunc{\rho} + \sum_{i=1}^{\infty} 2^{-i} c (i+1)^{c'} \cFunc{
      \rho}%
      =%
      O\pth{ \cFunc{\rho} \bigr. }.
    \end{align*}
\end{proof}

Our proof of the exponential decay lemma is inspired by the work of
Sharir \cite{s-cstre-01}. The resulting computations seem somewhat
easier than the standard argumentation.
%

\subsubsection{Cuttings}

For a set $\PolySet$ of weighted polygons of total weight $W$, and a
parameter $r \in \Nat$, a \emphi{$1/r$-cutting} is a decomposition
$\CDC$ of the plane into corridors, such that
\begin{compactenum}[\quad(A)]
    \item the total number of regions in $\CDC$ is small, and
    \item for a $\Corridor \in \CDC$, the total weight of the polygons
    of $\PolySet$ that their boundary intersects the interior of
    $\Corridor$ is at most $W /r$.\footnote{Note that this definition
       does not bound the total weight of the polygons that fully
       contain a region of the cutting. Indeed, this quantity can be
       arbitrarily large.}
\end{compactenum}
See \cite{cf-dvrsi-90, bs-ca-95, h-cctp-00} and references therein for
more information about cuttings.

\begin{lemma}%
    \lemlab{weak:cutting}%
    Let $\PolySet$ be a set of weighted polygons of total weight $W$,
    not necessarily disjoint, such that for any subset
    $\Sample \subseteq \PolySet$, the complexity of $\CD{\Sample}$ is
    $\cFunc{\cardin{\Sample}}$, and $\cFunc{\cdot}$ is a polynomially
    growing function. Then for any parameter $r \in \Nat$ there exists
    a $1/r$-cutting for $\PolySet$ which consists of
    $ O\pth{ \cFunc{ \rho } }$ corridors, where
    $ \rho = O( r \log r ) + r \ln u(2)$. Furthermore, this cutting
    can be computed efficiently.
\end{lemma}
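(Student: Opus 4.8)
The plan is to construct the cutting by taking a random sample of the appropriate size and then locally refining those corridors that still have a heavy conflict list, using the weighted exponential decay lemma (\lemref{exponential:decay}) to control the total size. First I would set $t = \Theta(\log r)$ and let $\Sample$ be the union of two independent $\rho$-samples of $\PolySet$, with $\rho = c(r \log r) + r \ln u(2)$ for a suitable constant $c$; the logarithmic additive term and the $\ln u(2)$ term are there precisely so that, after the refinement step below, every resulting sub-corridor sees conflict weight at most $W/r$. Consider the corridor decomposition $\CD{\Sample}$. By \lemref{exponential:decay} (second part, since the polygons need not be disjoint and $\cFunc{\cdot}$ is polynomially growing), the expected number of $t$-heavy corridors — those whose conflict list has total weight at least $tW/\rho$ — is $O(u(\rho)\exp(-t)) = O(u(\rho)/r^{\Theta(1)})$, which for the right choice of the constant in $t$ is at most, say, $u(\rho)/2$ in expectation, while the number of corridors that are \emph{not} $t$-heavy is at most $\cardin{\CD{\Sample}} = O(u(\rho))$ in expectation. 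Fix one instance of $\Sample$ for which both quantities are at most twice their expectations.

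Next I would handle the $t$-heavy corridors by recursion/local refinement inside each. For a $t$-heavy corridor $\Corridor \in \CD{\Sample}$ with conflict list $\KillSet{\Corridor}$ of total weight $W_\Corridor \le tW/\rho$ (this upper bound is the key — a single application of \lemref{exponential:decay} only gives heavy corridors have \emph{at least} $tW/\rho$, so I actually want to bound the weight from above, which comes from iterating the construction one more level or from the standard observation that with the doubled-sample trick one can assume no corridor has conflict weight exceeding $O(tW/\rho)$). Recursively build a $(1/r')$-cutting of the polygons of $\KillSet{\Corridor}$ restricted to $\Corridor$, with $r'$ chosen so that $W_\Corridor / r' \le W/r$, i.e. $r' = \Theta(t) = \Theta(\log r)$; this is a cutting for a sub-instance of bounded "local" complexity, so a constant number of recursive levels suffices, each level shrinking the conflict weight by a constant factor and multiplying the corridor count by $O(u(r'))=u(O(\log r))^{O(1)}$, which is absorbed into the $O(u(\rho))$ bound since $u$ is polynomially growing and $\rho \ge r\ln u(2)$. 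After this refinement, every corridor in the final decomposition $\CDC$ has conflict weight at most $W/r$, giving property (B), and by construction the plane is partitioned into corridors, so $\CDC$ is a genuine $1/r$-cutting.

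Finally I would count: the non-heavy corridors contribute $O(u(\rho))$; each heavy corridor contributes $u(O(\log r))^{O(1)}$ refined corridors and there are $O(u(\rho))$ of them (in the fixed good instance), so the total is still $O(u(\rho))$, which is the claimed $O(\cFunc{\rho})$ bound. Efficiency follows because sampling, computing $\CD{\Sample}$, identifying heavy corridors, and doing the constantly-many refinement levels are all polynomial-time operations, and one can derandomize (or simply repeat a constant expected number of times) to find a good sample; alternatively the existential statement suffices for the application in \secref{qptas}. I expect the main obstacle to be the bookkeeping in the refinement step: one must argue carefully that a $t$-heavy corridor's conflict list is not just lower-bounded but effectively upper-bounded by $O(tW/\rho)$ so that a constant number of recursive halvings reaches conflict weight $W/r$, and that the polynomially-growing hypothesis on $\cFunc{\cdot}$ together with the choice $\rho = O(r\log r) + r\ln u(2)$ makes all the multiplicative blow-ups collapse into the single clean bound $O(\cFunc{\rho})$.
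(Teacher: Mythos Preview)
Your approach is the classical two-level Chazelle--Friedman/de~Berg--Schwarzkopf refinement scheme, and that is \emph{not} what the paper does here; the paper's argument is much simpler. With $\rho = r(c\ln r + \ln u(2))$, a corridor whose conflict weight exceeds $W/r$ is $t$-heavy for $t = \rho/r = c\ln r + \ln u(2)$. Plugging this $t$ into \lemref{exponential:decay} gives
\[
\Ex{\cardin{\CDH{\Sample}{t}}} = O\bigl(u(\rho)\,e^{-t}\bigr) = O\bigl(u(2)\,\rho^{O(1)} e^{-t}\bigr) < 1/r^{O(1)},
\]
so by Markov there are, with probability $\ge 1 - 1/r^{O(1)}$, \emph{no} heavy corridors at all, and $\CD{\Sample}$ is already a $1/r$-cutting of size $O(u(\rho))$. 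No second level, no refinement, no recursion. The whole point of paying the extra $\log r$ factor in $\rho$ is precisely to push the expected number of bad corridors below~$1$ and avoid any repair step.

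The refinement scheme you sketch is essentially what the paper uses for the \emph{stronger} bound in \lemref{smaller:cutting}, and even there it needs the extra hypothesis that every pair of polygons intersects $O(1)$ times (so that clipping a second-level corridor to its parent does not blow up the complexity). You also correctly flag the real obstacle in your version: exponential decay lower-bounds the conflict weight of a $t$-heavy corridor but gives no upper bound, so ``a constant number of halvings reaches $W/r$'' is not justified without a further summation over heaviness levels. That bookkeeping can be done, but it is exactly the work the paper sidesteps by oversampling.
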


\begin{proof}
    Let $\Sample_1$ and $\Sample_2$ be two independent random
    $\rho$-samples of $\PolySet$ for $\rho = r ( c\ln r + \ln u(2))$,
    where $c$ is a sufficiently large constant, and let
    $\Sample = \Sample_1 \cup \Sample_2$. We claim that the corridor
    decomposition $\CD{\Sample}$ is the desired  cutting.

    
    Since $\cFunc{\cdot}$ is polynomially growing, it must be that
    $\cFunc{i} \leq u(2) i^{O(1)}$. Now, a corridor
    $C \in \CD{\Sample}$ such that the polygons on the conflict list
    of $C$ have the total weight of at least $W/r$ is $t$-heavy for
    $t = c \ln r + \ln u(2)$. By \lemref{exponential:decay}, the
    number of such corridors is in expectation
    \begin{align*}
      \Ex{\Bigl.  \cardin{ \CDH{\Sample}{t} } }%
      =%
      O\pth{ \Bigl. \cFunc{\rho} \exp\pth{-t} }%
      =%
      O\pth{ \Bigl. u(2) \rho^{O(1)} \exp\pth{-t} }%
      =%
      O\pth{ \Bigl. u(2) r^{O(1)} \exp\pth{-t} }%
      <%
      \frac{1}{r^{O(1)}},
    \end{align*}
    for a sufficiently large constant $c$.  By Markov's inequality, we have
    \begin{math}
        \Prob{ \Bigl. \cardin{ \CDH{\Sample}{t} } \geq 1 } \leq \Ex{
           \Bigl.  \cardin{ \CDH{\Sample}{t} } } \leq 1/r ^{O(1)}.
    \end{math}
    Namely, with probability $\geq 1 - 1/r ^{O(1)}$, there are no
    $t$-heavy corridors in $\CD{\Sample}$ -- that is, all the
    corridors of $\CD{\Sample}$ have conflict lists with weights
    $\leq W /r$, as desired.

    The expected of size of the decomposition $\CD{\Sample}$ is
    $O( u(\rho))$, as follows from the argument used in
    \lemref{exponential:decay}.

    Thus, the $\CD{\Sample}$ is a $1/r$-cutting with probability
    $\geq 1-1/r^{O(1)}$, and its (expected) size is $O( u(\rho))$.
\end{proof}

Note that one key property of the above lemma is the bound on the
number of regions. Our lemma above yields a weaker bound on this than
what is known for similar settings in the literature. However, it will
be sufficient for our purposes.  Note that for disjoint polygons we
have that $u(\cdot)$ is linear (see \lemref{corridor:decomp}), and
therefore the cutting has size $O(r \log r)$.

\subsubsection{Smaller cuttings}

Getting $1/r$-cuttings of size $O(u(r))$ (and thus of size $O(r)$ for
disjoint polygons)
is somewhat more challenging. However, for our purposes, any
$1/r$-cutting of size $O(r^c)$, where $c<2$ is a constant, is
sufficient (as provided by \lemref{weak:cutting}). Nevertheless, one
way to get the smaller cuttings, is by restricting the kind of
polygons under consideration.  We do not use the following lemma in
our algorithms but it might be useful for further work.

\begin{lemma}%
    \lemlab{smaller:cutting}%
    Let $\PolySet$ be a set of weighted polygons of total weight $W$,
    not necessarily disjoint, such that for any subset
    $\Sample \subseteq \PolySet$, the complexity of $\CD{\Sample}$ is
    $\cFunc{\cardin{\Sample}}$, and $\cFunc{\cdot}$ is a polynomially
    growing function. In addition, assume that every polygon in
    $\PolySet$ has $O(1)$ intersection points with any line, and the
    boundaries of every pair of polygons of $\PolySet$ have a constant
    number of intersections.  Then for any parameter $r \in \Nat$
    there exists a $1/r$-cutting of $\PolySet$ which consists of
    $ O\pth{ \cFunc{r } }$ regions, where every region is the
    intersection of two corridors. Furthermore, this cutting can be
    computed efficiently.
\end{lemma}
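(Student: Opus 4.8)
The plan is to refine the weak cutting of \lemref{weak:cutting} by a second level of random sampling, in the spirit of the standard two-level construction for cuttings of lines and hyperplanes (Chazelle--Friedman \cite{cf-dvrsi-90}), but carried out at the level of corridors rather than simplices. First I would apply \lemref{weak:cutting} with parameter $r$ to obtain a $1/r$-cutting $\CDC_0$ consisting of $O(\cFunc{\rho})$ corridors, where $\rho = O(r\log r)$; this is too many regions, so the goal is to subdivide the ``bad'' corridors (those whose conflict list still has weight close to $W/r$, rather than much smaller) and argue that most corridors are already good. Concretely, take a single random $r$-sample $\Sample$ of $\PolySet$; by the exponential decay lemma, for each integer $j\ge 1$ the number of corridors $C\in\CD{\Sample}$ whose conflict list has weight in the range $\bigl(jW/r,(j+1)W/r\bigr]$ (equivalently, which are $j$-heavy but not $(j+1)$-heavy) is in expectation $O\pth{u(r)\exp(-j)}$. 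Summing a geometric series, the total number of corridors is $O(u(r)) = O(\cFunc{r})$, and the key point is that the ``weight budget'' $\sum_j j\cdot O(u(r)e^{-j}) = O(u(r))$ is also linear.

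Next, within each corridor $C$ of $\CD{\Sample}$ that is $j$-heavy, I would recurse on the set $\KillSet{C}$ of conflicting polygons, which has total weight at most $(j+1)W/r$ but must be cut down so that no sub-region sees weight more than $W/r$. The natural move is to take a second $O(j)$-sample of $\KillSet{C}$ restricted to $C$ and form the corridor decomposition of that sample clipped to $C$; a region of the final cutting is then the intersection of $C$ (itself a corridor of the top-level decomposition) with a corridor of the local decomposition inside it — which explains the statement's ``every region is the intersection of two corridors.'' The extra hypotheses — each polygon meets any line $O(1)$ times, and every pair of boundaries meets $O(1)$ times — are exactly what is needed so that the local arrangement inside $C$ behaves like a bounded-complexity arrangement: the boundary of $C$ consists of $O(1)$ edges and spokes, each conflicting polygon crosses $\bd C$ only $O(1)$ times, and hence the clipped corridor decomposition of a $j'$-sample inside $C$ has complexity $\cFunc{O(j')}$ via \lemref{corridor:decomp:ext}. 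Applying exponential decay a second time inside $C$ with threshold $\Theta(\log(j+2))$ (so that the failure probability beats the number of level-one corridors) guarantees that, with high probability, every sub-region of $C$ has conflict weight $\le W/r$, using that $C$'s conflict list already had weight $O(jW/r)$.

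Finally I would bound the total number of regions: level one contributes $\sum_j O(u(r)e^{-j})$ corridors, and the $j$-heavy ones are each subdivided into $O\pth{u(O(j))} = O\pth{u(r) j^{O(1)}}$ sub-regions (using that $u$, like $\cFunc{\cdot}$, is polynomially growing), so the grand total is $\sum_{j\ge 1} O(u(r)e^{-j}) \cdot O(j^{O(1)}) = O(u(r)) = O(\cFunc{r})$, since the geometric decay dominates any polynomial factor. Efficiency follows because each sampling and clipping step is a polynomial-time computation on an arrangement of polynomial size, and the recursion has only two levels. The main obstacle I anticipate is the bookkeeping in the second level: one must verify that clipping a corridor decomposition to the (constant-complexity) region $C$ does not create regions whose conflict list is larger than that of the unclipped corridor — intuitively true because clipping only removes parts of corridors — and that the defining/conflict structure of the clipped corridors still complies with Clarkson--Shor relative to $\KillSet{C}$, so that \lemref{exponential:decay} may legitimately be invoked a second time. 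Handling the boundary $\bd C$ correctly (treating its edges as ``extra'' fixed curves that do not count toward conflict lists) is where the $O(1)$-intersection-with-a-line hypothesis does the real work, and is the step I would write out most carefully.
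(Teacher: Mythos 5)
Your proposal follows essentially the same route as the paper's proof: the standard Chazelle--Friedman two-level sampling, with a first-level $r$-sample, the exponential decay lemma bounding the number of $j$-heavy corridors by $O(u(r)e^{-j})$, a second-level sample inside each heavy corridor whose corridors are clipped to the parent (whence every region is the intersection of two corridors), and the $O(1)$-intersections-with-a-line hypothesis controlling the cost of that clipping; the paper's own write-up is no more detailed than yours and explicitly omits the second-level bookkeeping, deferring to de Berg--Schwarzkopf and Chazelle--Friedman. The one small adjustment is that the second-level sample inside a $j$-heavy corridor should be taken via \lemref{weak:cutting} with parameter $\Theta(j)$ (so of size $\Theta(j\log j)$ rather than $O(j)$) to kill all over-heavy sub-corridors, which still sums to $O(u(r))=O(\cFunc{r})$ against the factor $e^{-j}$.
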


\begin{proof}
    In this case, $u(2) = O(1)$ since a pair of polygons intersect
    only a constant number of times. As such, the result follows by
    the standard two level sampling used in the regular cutting
    construction.  Specifically, we first take a corridor
    decomposition $\CD{\Sample}$ corresponding to a sample $\Sample$
    of size $r$. Then we fix any corridor $C \in \CD{\Sample}$ such
    that the polygons in the conflict list of $C$ are too heavy, by
    doing a second level sampling. We are using \lemref{weak:cutting}
    here. In the resulting decomposition, we have to clip every
    corridor generated in the second level, to its parent
    corridor. The assumption about every polygon intersecting any line
    at most some constant number of times implies the desired bound.
    We omit any further details -- see \si{de} Berg and Schwarzkopf
    \cite{bs-ca-95} and Chazelle and Friedman \cite{cf-dvrsi-90}.~
\end{proof}

\subsection{Structural lemma about a good %
   separating polygon}
\seclab{good:separation}

\begin{lemma}
    \lemlab{good:polygon}%
    Consider a set $\PolySet$ of $m$ weighted polygons of total
    complexity $n$, not necessarily disjoint. Let $\Opt$ be a maximum
    weight independent set of polygons in $\PolySet$, where
    $\nopt := \cardin{\Opt}$ and
    \begin{math}
        \wopt := \weightX{\Opt} = \sum_{\Polygon \in \Opt}
        \weightX{\Polygon}.
    \end{math}
    Let $r$ be a parameter. Then there exists a
    polygon $\CutPolygon$ satisfying the following
    conditions. \medskip
    \begin{compactenum}[\quad (A)]
        \item The total weight of the polygons of $\Opt$ completely
        inside (resp. outside of) $\CutPolygon$ is at least
        $\frac{1}{10} \wopt$.
        
        \item The total weight of the polygons of $\Opt$ that
        intersect the boundary of $\CutPolygon$ is
        $ O\pth{ \sqrt{ \frac{ \log r}{ r} } \wopt }$.
        
        \item The polygon $\CutPolygon$ can be fully encoded by a
        binary string of $O\pth{ \sqrt{r \log r} \log m }$ bits.
    \end{compactenum}

    \newcommand{\IncB}[1]{%
       {\IncludeGraphics[page=#1,width=0.45\linewidth]%
          {figs/planar_sep}}%
    }%
    \newcommand{\MPB}[1]{%
       \begin{minipage}{0.4\linewidth}
           #1
       \end{minipage}%
    }%
    \begin{figure}
        \IncB{2}
        \hfill%
        \IncB{3}

        \MPB{(A)
           The  corridor
           decomposition
           (see 
           \figref{example}),
           and its dual graph.

        }%
        \hfill%
        \MPB{(B) Fixed and triangulated dual graph.}  \hfill%

        \IncB{4}%
        \hfill%
        \IncB{5}

        \MPB{(C) A separating cycle, and the outer boundary of the
           union of corresponding corridors, which is the separating
           polygon $\CutPolygon$.}%
        \hfill%
        \MPB{(D) The separating polygon $\CutPolygon$ in the original
           corridor decomposition.}

        \caption{Computing a balanced separator from a cutting.}
        \figlab{balanced:cut}%
    \end{figure}
\end{lemma}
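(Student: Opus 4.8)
The plan is to take $\CutPolygon$ to be the region enclosed by a short closed curve that threads through a small $1/r$-cutting of $\Opt$, the curve being extracted from a planar separator of the cutting's dual graph; this is the construction sketched in \figref{balanced:cut}. Since $\Opt$ is an independent set, its polygons are pairwise disjoint, so by \lemref{corridor:decomp} the complexity function $u(\cdot)$ is linear, and \lemref{weak:cutting} provides a $1/r$-cutting $\CDC$ of $\Opt$ consisting of $N = O(r\log r)$ corridors, each with a defining set of at most $4$ polygons of $\Opt$ and with a conflict list (relative to $\Opt$) of total weight at most $\wopt/r$ (\defref{def:kill:set}). Only existence is needed, so I fix one outcome of the random construction meeting both bounds; let $\Sample\subseteq\Opt$ be the sample generating $\CDC$, and recall all of $\CDC$ lies in the complement of $\bigcup\Sample$.

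Next I build the planar graph $\DGraph$ whose vertices are the corridors of $\CDC$ together with one vertex per polygon of $\Sample$, two vertices joined by an edge whenever the corresponding faces of the planar subdivision share a boundary piece; $\DGraph$ is connected with $O(r\log r)$ vertices. I assign vertex weights: a sample-polygon vertex gets that polygon's weight; a corridor vertex $\Corridor$ gets the total weight of the polygons $\Polygon\in\Opt\setminus\Sample$ whose lexicographically smallest boundary point lies in $\Corridor$. Such a $\Polygon$ has a boundary point inside $\Corridor$, hence conflicts with $\Corridor$, so each corridor vertex has weight $\le\wopt/r$, and the weights sum to exactly $\wopt$. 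I then triangulate $\DGraph$ respecting the embedding and apply the cycle form of the planar separator theorem \cite{lt-stpg-79, sw-gsta-98} in its weighted version: there is a simple cycle $\gamma$ on $O(\sqrt{N}) = O(\sqrt{r\log r})$ vertices such that the total weight of the vertices strictly inside, resp.\ strictly outside, $\gamma$ is at most $\tfrac23\wopt$.

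Finally, I realize $\gamma$ as a Jordan curve by routing it through the corridors indexed by its vertices, crossing a single spoke between consecutive ones and staying clear of $\bigcup\Sample$; let $\CutPolygon$ be the closed region it bounds. For (C): each vertex of $\gamma$ is a corridor, hence by \defref{def:kill:set} a corridor of $\CD{\PolySetB}$ for some $\PolySetB\subseteq\PolySet$ with $|\PolySetB|\le 4$, so it is described by $4\log m+O(1)$ bits, and the ordered list of $\gamma$'s corridors determines $\gamma$ and thus $\CutPolygon$, giving an $O(\sqrt{r\log r}\,\log m)$-bit encoding. For (B): a polygon of $\Opt$ meeting $\bd\CutPolygon=\gamma$ is not in $\Sample$ and meets one of the $O(\sqrt{r\log r})$ corridors on $\gamma$, hence lies in that corridor's conflict list; summing the per-corridor bound $\wopt/r$ gives total weight $O(\sqrt{r\log r})\cdot(\wopt/r)=O(\sqrt{(\log r)/r})\,\wopt$. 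For (A): every polygon of $\Opt$ that does not meet $\gamma$ lies strictly inside or strictly outside $\CutPolygon$ according to the side of the dual vertex charging it, so the weight strictly inside $\CutPolygon$ is at least (weight of vertices inside $\gamma$) $-\,O(\sqrt{(\log r)/r})\wopt \ge \tfrac13\wopt - (\text{weight of }\gamma\text{'s vertices}) - O(\sqrt{(\log r)/r})\wopt$, and since $\gamma$ has $O(\sqrt{r\log r})$ vertices each of weight $O(\wopt/r)$, this exceeds $\tfrac1{10}\wopt$ once $r$ is larger than an absolute constant; the outside is symmetric.

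The main obstacle is that a sample-polygon vertex may be much heavier than $\wopt/r$: if such a polygon lies on $\gamma$, the bound on the weight of $\gamma$'s vertices fails, and the weighted separator theorem itself needs the maximum vertex weight bounded below $\tfrac23\wopt$. This is precisely the degenerate case isolated in \lemref{cheap:balanced:cut}: if some $\Polygon\in\Opt$ has $w(\Polygon)\ge\tfrac23\wopt$ there is no balanced cut and one instead cuts along $\Polygon$ itself. So the real work is to first peel off the $O(1)$ heavy polygons of $\Opt$, argue that in the remaining light instance no corridor and no sampled polygon is heavy, run the argument there, and reattach the peeled polygons to the inside/outside regions without destroying the balance in (A).
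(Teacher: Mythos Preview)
Your approach is essentially the paper's: build a $1/r$-cutting of $\Opt$, take the dual planar graph, apply a weighted cycle separator, and realize the separator as a closed curve. The paper invokes Miller's weighted cycle separator \cite{m-fsscs-86} on a cleaned-up and triangulated dual, and then defines $\CutPolygon$ as the \emph{outer boundary of the union} of the corridors on the cycle; this has two advantages over your ``route through the corridors, crossing one spoke at a time, staying clear of $\bigcup\Sample$'' description. First, it is automatically well-defined and uniquely determined by the set of cycle corridors, so the encoding is clean (the paper encodes the edges of $\CutPolygon$ --- each is a spoke or a subchain of some $\Polygon\in\PolySet$, and there are $O(m^4)$ possible spokes and $O(m^4)$ possible chain endpoints, giving $O(\log m)$ bits per edge). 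Second, it sidesteps your routing ambiguity when the cycle visits a sample-polygon vertex: you cannot both pass through that vertex and stay clear of $\bigcup\Sample$, whereas the outer-boundary construction simply swallows such a polygon into the interior of $\CutPolygon$, where it is neither cut nor lost.

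Your final paragraph correctly identifies the one genuine subtlety: a sample polygon of weight close to $\wopt$ could sit on the separator cycle and destroy the balance bound in~(A). The paper does not handle this inside the proof either; it is dealt with in the remark immediately following the lemma, which observes that if some $\Polygon\in\Opt$ has $w(\Polygon)\ge \tfrac{1}{10}\wopt$ then $\Polygon$ itself serves as the cut (inside $=\{\Polygon\}$, outside $=$ everything disjoint from $\Polygon$), and this degenerate case is what surfaces as the ``or'' clause in \lemref{cheap:balanced:cut}. That fix is simpler than your proposed peeling-and-reattaching argument, and it is all that is needed. Otherwise, every vertex weight is at most $\tfrac{1}{10}\wopt$ (corridor vertices are $\le \wopt/r$ by the cutting property, sample-polygon vertices by assumption), Miller's theorem applies, and since cycle corridors are absorbed into the interior of $\CutPolygon$, the outside weight is exactly the weight of the outside vertices minus an $O(\sqrt{(\log r)/r})\wopt$ crossing loss, which is at least $\tfrac{1}{4}\wopt - o(\wopt) \ge \tfrac{1}{10}\wopt$.
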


\begin{proof}
    Let $\CDC$ be a $1/r$-cutting of $\Opt$, as computed by the
    algorithm of \lemref{weak:cutting}. Here, since $\Opt$ is a set of
    disjoint polygons, the complexity of the corridor decomposition of
    any subset of $\nu$ of them is $u(\nu) = O( \nu)$, by
    \lemref{u:good}.  As such, $\CDC$ is a decomposition of the plane
    into $\rho = O( r \log r)$ corridors (there are also the polygons
    that define the cutting $\CDC$ -- we treat them too as corridors
    that are part of the decomposition $\CDC$).
        
    We interpret $\CDC$ as a planar map with $O(\rho)$ faces, and
    assign every polygon $\Polygon \in \Opt$ to the corridor
    $\Corridor \in \CDC$ which contains the leftmost vertex of
    $\Polygon$.  As such, the \emph{weight} of a corridor
    $\Corridor \in \CDC$ is the total weight of the polygons that have
    been assigned to $\Corridor$. Notice that although a polygon of
    $\Opt$ might intersect several corridors, it is assigned to only
    one of them.
    
    
    Now, consider the dual graph $\DGraph$, where every corridor
    $\Corridor \in \CDC$ corresponds to a vertex in $\DGraph$, and two
    vertices are connected by an edge if the corresponding corridors
    are adjacent, see \figref{balanced:cut} (A).  The dual graph is
    connected, but potentially it might contain self-loops, parallel
    edges, and vertices of degree one. We now fix the dual graph so
    that it becomes triangulated and has none of these ``bad''
    features. To this end, we replace some of the vertices of
    $\DGraph$ by a set of vertices, as follows.
    \smallskip%
    \begin{compactenum}[\;\;(A)]
        \item We replace each vertex $u$ of degree one by two vertices
        $u_1$ and $u_2$, and the only edge $uv$ adjacent to $u$ by two
        edges $u_1v$ and $u_2v$. We also introduce an edge $u_1u_2$
        between the two new vertices. We do a similar reduction if the
        vertex is of degree two.  After this operation all the
        vertices of $\DGraph$ have a degree of at least three.
        
        \item If a vertex $u$ of degree $d$ has self loops or parallel
        edges, we replace it by $d$ new vertices $u_1, \ldots, u_d$
        that are connected in a cycle. We triangulate the inner cycle,
        and redirect the $i$\th edge of $u$ to $u_i$.
    \end{compactenum}%
    \smallskip%
    Finally, we triangulate the resulting graph (i.e., we add edges,
    that are not necessarily straight segments, till every face has
    three edges on its boundary), and let $\DGraph'$ be the resulting
    graph, see \figref{balanced:cut} (B).  Every vertex
    $v \in \VerticesX{\DGraph}$ is associated with a set of vertices
    $D(v)$ in $\DGraph'$. We assign the weight of $v$ arbitrarily to
    one of the vertices of $D(v)$, and all the other vertices of
    $D(v)$ are assigned weight $0$. It is easy to verify that
    $\cardin{\VerticesX{\DGraph'}} = O\pth{\cardin{
          \VerticesX{\DGraph}}}$.
    
    Now, $\DGraph'$ has a cycle separator $\CutPolygon'$ such that the
    total weight of the vertices inside (outside, respectively) the
    cycle $\CutPolygon'$ is at most $(3/4) \wopt$, and $\CutPolygon'$
    has at most $4\sqrt{\cardin{\VerticesX{\DGraph'}}}$ vertices --
    this follows from the cycle separator of Miller \cite{m-fsscs-86}
    (which is weighted).  The resulting cycle visits
    $M = O(\sqrt{\rho})$ vertices of $\DGraph'$, which corresponds to
    a set $\CSet$ of at most $M$ corridors of $\CDC$.  One can now
    track a closed curve $\CutPolygon''$ in the plane, corresponding
    to the cycle $\CutPolygon'$ in the primal, so that the curve stays
    inside the union of the corridors of $\CSet$, and all the vertices
    of $\DGraph'$ inside (resp. outside of) $\CutPolygon'$ correspond
    to the corridors that are strictly inside (resp. outside of) the
    curve $\CutPolygon''$. Now, $\CutPolygon''$ can be transformed to
    a curve $\CutPolygon$ using only the boundary of the corridors of
    $\CSet$ -- the easiest way to do so, is to take $\CutPolygon$ to
    be the outer boundary of the union of all the corridors of
    $\CSet$, see \figref{balanced:cut} (D).  As such, $\CutPolygon$
    consists of $O\pth{ \sqrt{\rho} }$ edges.  Here, an edge is either
    a spoke or a subchain of one of the polygons of $\PolySet$. Now,
    the total weight of polygons of $\PolySet$ that intersect a spoke%
    \footnote{Note, that by the disjointness of the polygons of
       $\Opt$, no such polygonal chain can intersect any of the
       polygons of $\Opt$.} %
    used in the $1/r$-cutting can be at most $\wopt /r$, it follows
    that the total weight of polygons in $\Opt$ intersecting
    $\CutPolygon$ is
    \begin{align*}
      O\pth{ \sqrt{\rho} \frac{\wopt}{r} }%
      =%
      O\pth{ \sqrt{r \log r} \cdot \frac{\wopt}{r} }%
      =%
      O\pth{ \wopt \sqrt{\frac{ \log r}{r}} }.
    \end{align*}
    
    We next show how to encode each edge of $\CutPolygon$ using
    $O( \log m)$ bits, which implies the claim.  Compute the set of
    $O\pth{m^4}$ loose corridors induced by any subset of polygons of
    $\PolySet$ that do not intersect, see \lemref{loose:tight}. Let
    $\XSet$ be this set of polygons. Every corridor of $\XSet$ induces
    $\leq 4$ vertices where its spokes touch its two adjacent
    polygons. In particular, let $\PntSetA$ be the set of all such
    vertices. Clearly, there are $O\pth{m^4}$ such vertices.
    
    Consider an edge $e$ of $\CutPolygon$. If it is a spoke we can
    encode it by specifying which spoke it is, which requires
    $O\pth{ \log m^4}$ bits, since there are $O\pth{m^4}$ possible
    spokes. Otherwise, the edge is a subchain of one of the polygons
    of $\PolySet$. We specify which one of the polygons it is on,
    which requires $O(\log m)$ bits, and then we specify that start
    and end vertices, which belong to $\PntSetA$, which requires
    $O(\log m )$ bits. We also need to specify which one of the two
    possible polygonal subchains we refer to, which requires an extra
    bit. Overall, the number of bits needed to encode $\CutPolygon$ is
    $O\pth{\sqrt{\rho} \log m}$, as claimed.
\end{proof}

\begin{remark}
    If one polygon in the optimal solution is heavier than a
    $\wopt/10$, then the cut can be the polygon itself -- the polygon
    defines the inner subproblem, and all polygons that do not
    intersect it are the outside subproblem. This degenerate case is
    implicit in \lemref{good:polygon}, and is not described
    explicitly, for the sake of simplicity.
\end{remark}

\begin{remark}
    \remlab{proof:cheap:balanced:cut}%
    The proof of \lemref{cheap:balanced:cut} follows from
    \lemref{good:polygon} by choosing
    $r:=\left(\frac{\log m}{\eps}\right)^{2+\mu}$ for any $\mu >0$.
\end{remark}

\begin{remark}
    \remlab{defining:set:good}%
    The separating cycle of \lemref{good:polygon} is defined by a
    random sample of the optimal solution. An interesting property of
    the construction, is that these defining polygons are added as
    their own corridors to the constructed arrangement.  These
    defining polygons are ``islands'' in the constructed arrangement,
    and the corridor decomposition tiles their complement. It thus
    follows that the constructed separating cycle does not intersect
    the interior of the defining polygons. This is crucial, as if
    there are a few heavy polygons (say, of weight $\geq \eps \wopt$),
    then they would be part of the defining set of the cycle, and they
    would get sent down to one of the two recursive subproblems.
\end{remark}

\begin{remark}
    \remlab{filtering}%
    While the separating polygon $\CutPolygon$ has a short encoding,
    it potentially can have a large number of edges -- $O(n)$ in the
    worst case, where $n$ is the total number of vertices in the input
    polygon. The separating polygon $\CutPolygon$ is a simple polygon
    (i.e., no holes or self intersections). Thus, $\CutPolygon$ can be
    preprocessed, in $O( n \log n)$ time, for ray shooting (from the
    inside and outside), where a ray shooting query can be answered in
    $O( \log n)$ time \cite{hs-parss-95}. Now, given another polygon
    $\Polygon$, one can decide if $\Polygon$ is intersecting,inside or
    outside by checking for each edge of $\Polygon$ whether or not it
    it intersects the boundary of $\CutPolygon$, and deciding (say
    using a point-location data-structure) if a vertex of $\Polygon$
    is inside $\CutPolygon$. As such, given a set of polygons with a
    total of $n$ vertices, one can partition them into
    the inside/outside/intersecting sets, in relation to
    $\CutPolygon$, in $O(n \log n)$ time.
\end{remark}

\subsection{Extension: %
   \QPTAS for sparse properties}
\seclab{extensions}

When we compute an independent set of polygons, we output a collection
of polygons with the property that their intersection graph consists
of only isolated vertices. In this section we extend our reasoning to
the setting where the output polygons may overlap, but where we
require that the intersection graph of the output polygons fulfills
some given sparsity condition, i.e., it is planar or it does not
contain a $K_{s,t}$ subgraph for some constants $s,t$.

Let $\PolySet$ be a set of polygons in the plane such that no input
polygon is contained in another input polygon.  We are interested in
the intersection graph $\Graph = (\PolySet, \Edges)$ induced by
$\PolySet$; that is,
$\Edges= \brc{ \Polygon \PolygonA \mid \Polygon, \PolygonA \in
   \PolySet, \Polygon \cap \PolygonA \neq \emptyset }$.  For a subset
$X\subseteq \PolySet$, let $\Graph_X = (X, \Edges_X)$ denote the
\emphi{induced subgraph} of $\Graph$ on $X$; that is,
$\Edges_X = \brc{ \Polygon \PolygonA \sep{ \Polygon, \PolygonA \in X
      \text{ and } \Polygon \PolygonA \in \Edges}}$.  We refer to two
subsets $X \subseteq \PolySet$ and $Y \subseteq \PolySet$ as
\emphi{separate}, if no polygon of $X$ intersects any polygon in $Y$.

Consider a property $\Property$ on graphs (e.g., a graph is
planar). We can naturally define the set system of all subsets of
$\PolySet$ that have this property. That is
$\PropertyF{\PolySet} = (\PolySet,\I)$, where
$\I = \brc{ X \subseteq \PolySet \sepw{ \bigl. \Graph_X \text{ has
         property } \Property}}\Bigl.$.

We are interested here in \emphi{hereditary} properties. Specifically,
if $X \in \PropertyF{\PolySet}$ then $Y \in \PropertyF{\PolySet}$, for
all $Y \subseteq X$.  We also require that the property would be
\emphi{mergeable}; that is, for any two separate subsets
$X, Y \subseteq \PolySet$, such that $X, Y \in \PropertyF{\PolySet}$
we have that $X\cup Y \in \PropertyF{\PolySet}$.  Notice, that the
combinatorial structure $\PropertyF{\PolySet}$ is similar to a
matroid, except that we do not require to have the augmentation
property (this is also known as an independence system).

Here, unlike the independent set case, we assume that the input
polygons are unweighted, see \remref{unweighted} below for more
details.
%
The purpose here is to compute (or approximate) the maximum
cardinality set $X \in \PropertyF{\PolySet}$.

As a concrete example, consider the property $\Property$ that a set
$X \subseteq \PolySet $ has no pair of intersecting polygons. In this
case, finding the maximum cardinality set in $\PropertyF{\PolySet}$
that has the desired property corresponds to finding the maximum
weight independent set in $\PolySet$.

\begin{defn}
    A property $\PropertyF{\PolySet}$ is \emphi{sparse} if there are
    constants $\delta, c > 0$, such that for any
    $X \in \PropertyF{\PolySet}$, we have that
    $\cardin{\EdgesX{\Graph_X}} \leq c \cardin{X}^{2-\delta}$.
\end{defn}

Informally, sparsity implies that in any set
$X \in \PropertyF{\PolySet}$ the number of pairs of intersecting
polygons is strictly subquadratic in the size of $X$.  Surprisingly,
for an intersection graph of curves where every pair of curves
intersects only a constant number of times, sparsity implies that the
number of edges in the intersection graph is linear
\cite{fp-stttr-08}.

\begin{lemma}[\cite{fp-stttr-08}]
    \lemlab{sparse}%
    Let $\PolySet$ be a set of polygons such that the boundaries of
    every pair of polygons have a constant number of intersections,
    and such that no polygon is contained in another polygon.  Let
    $\PropertyF{\PolySet}$ be a sparse property. Then, for any
    $X \in \PropertyF{\PolySet}$, we have that
    $\cardin{\EdgesX{\Graph_X}} = O\pth{\cardin{X}}$.

    If a pair of polygons in $\PolySet$ can have $\NInt$ intersections
    (i.e., of their boundaries), then
    $\cardin{\EdgesX{\Graph_X}} = O\pth{\NInt^{1/2} \cardin{X}}$.
\end{lemma}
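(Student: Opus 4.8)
The plan is to derive this from a Turán-type theorem of Fox and Pach~\cite{fp-stttr-08} on intersection graphs of curves, after observing that sparsity of $\PropertyF{\PolySet}$ forbids large complete bipartite subgraphs in $\Graph_X$. First I would note that, since no polygon of $\PolySet$ is contained in another, two polygons intersect (as open sets) exactly when their boundary curves cross; hence $\Graph$ is precisely the intersection graph of the family of boundary curves $\brc{ \bd\Polygon \mid \Polygon \in \PolySet}$, and by hypothesis every two of these curves cross at most $\NInt$ times (a constant, in the first part of the statement). So the task reduces to bounding the number of edges of a curve intersection graph.

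The key step is the claim that there is a constant $k_0 = k_0\pth{c,\delta}$ such that for every $X \in \PropertyF{\PolySet}$ the graph $\Graph_X$ contains no copy of $K_{k_0,k_0}$. Indeed, if $\Graph_X$ contained $K_{k,k}$ on a vertex set $Y \subseteq X$ with $\cardin{Y} = 2k$, then $Y \in \PropertyF{\PolySet}$ because $\Property$ is hereditary, while $\cardin{\EdgesX{\Graph_Y}} \geq k^2$ because $\Graph_Y$ contains $K_{k,k}$. On the other hand, sparsity gives $\cardin{\EdgesX{\Graph_Y}} \leq c\pth{2k}^{2-\delta}$, so $k^{\delta} \leq c\, 2^{2-\delta}$, which is impossible once $k > \pth{c\, 2^{2-\delta}}^{1/\delta}$. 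Thus it suffices to take $k_0 := \floor{\pth{c\, 2^{2-\delta}}^{1/\delta}} + 1$, and this $k_0$ depends only on $c$ and $\delta$, i.e., it is an absolute constant for the fixed sparse property.

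Finally I would invoke the theorem of Fox and Pach~\cite{fp-stttr-08}: a $K_{k_0,k_0}$-free intersection graph of $N$ curves in the plane, every two of which cross at most $\NInt$ times, has $O\pth{\NInt^{1/2} N}$ edges, where the constant depends only on $k_0$ (and in particular this is $O(N)$ when $\NInt = O(1)$). Applying this with $N = \cardin{X}$ and the constant $k_0$ from the previous step yields $\cardin{\EdgesX{\Graph_X}} = O\pth{\cardin{X}}$ in the bounded-crossing case, and $\cardin{\EdgesX{\Graph_X}} = O\pth{\NInt^{1/2}\cardin{X}}$ in general. The main obstacle is external: the Turán-type bound of Fox and Pach is the substantive ingredient and is used here as a black box, while the only genuinely new content is the two elementary reductions above (from polygons to their boundary curves, and from the sparsity hypothesis to $K_{k_0,k_0}$-freeness).
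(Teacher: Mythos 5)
Your proof is correct, but it takes a genuinely different route from the paper's. The paper reruns the Fox--Pach separator recursion directly: it uses the sparsity bound $\cardin{\EdgesX{\Graph_X}} \leq c\cardin{X}^{2-\delta}$ to conclude that the arrangement of the boundary curves has $O\pth{\NInt\cardin{X}^{2-\delta}}$ vertices, applies the planar separator theorem to that arrangement to split the curves into two balanced parts at a cost of $O\pth{\NInt^{1/2}\cardin{X}^{1-\delta/2}}$, and solves the resulting recurrence to get a linear bound. You instead first show that sparsity (together with heredity) forces $\Graph_X$ to be $K_{k_0,k_0}$-free for a constant $k_0$ depending only on $c$ and $\delta$ -- a clean, elementary reduction the paper does not make -- and then invoke the Fox--Pach Tur\'an-type theorem for $K_{k,k}$-free curve intersection graphs as a black box. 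Your route buys modularity (the geometric work is entirely outsourced to a quotable theorem), at the price of an extra reduction and reliance on the Kővári--Sós--Tur\'an-style machinery hidden inside the cited theorem; the paper's route is self-contained modulo the curve separator theorem and exposes where the exponent $\NInt^{1/2}$ comes from. Two small points: both arguments (yours explicitly, the paper's implicitly in its recursion) use that the property is hereditary, which is assumed in the application (\thmref{main:2}) though not restated in the lemma; and your opening observation that non-containment reduces intersection of polygons to crossing of their boundary curves is exactly the role that hypothesis plays in the paper as well.
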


\begin{proof}
    This result is known \cite{fp-stttr-08}. We include a sketch of
    the proof here for the sake of completeness.
    
    We think about the boundaries of the polygons of $\PolySet$ as
    curves in the plane, and let $m = \cardin{X}$. The intersection
    graph $\Graph_X$ has a subquadratic number of edges, and as such,
    the arrangement of the curves of $X$ has at most
    $m' = O(m^{2-\delta})$ vertices (there is a vertex for each
    intersection of two curves).  By the planar separator theorem,
    there is a set of $O\pth{\sqrt{m'}} = O(m^{1-\delta/2 })$
    vertices, that their removal disconnects this arrangement into a
    set of $m_1, m_2$ curves, where $m_1, m_2 \leq (2/3)m$ and
    $m_1 + m_2 \leq m + O\pth{\sqrt{m'}}$ (here we break the curves
    passing through a vertex of the separator into two curves, sent to
    the respective subproblems). Applying the argument now to both
    sets recursively, we get that the total number of vertices is
    $T(m) = O( m^{1-\delta/2}) + T(m_1) + T(m_2)$, and the solution of
    this recurrence is $T(m) = O(m)$.

    If there are $\NInt$ intersections between pairs of polygons, then
    the associated arrangement has $O( \NInt n^{1-\delta})$ vertices,
    and the recursion becomes
    $T(m) = ] O( \NInt^{1/2}m^{1-\delta/2}) + T(m_1) + T(m_2)$, and
    the solution is $T(m) = O(\NInt^{1/2} m)$.
\end{proof}

A property $\PropertyF{\PolySet}$ is \emphi{exponential time
   checkable}, if for any subset $X \subseteq \Vertices$, one can
decide if $X \in \PropertyF{\PolySet}$ in time
$2^{\cardin{X}^{O(1)}}$.

\begin{theorem}
    \thmlab{main:2}%
    Let $\PolySet$ be a set of $m$ unweighted polygons in the plane,
    with total of $n$ vertices, such that no input polygon is
    contained in another input polygon, and such that the boundaries
    of every pair of them intersect only a constant number of
    times. Let $\PropertyF{\PolySet}$ be a hereditary, sparse and
    mergeable property that is exponential time checkable.  Then, for
    a parameter $\eps > 0$, one can compute in quasi-polynomial time
    (i.e., $2^{O(\poly( \log m, 1/\eps))}n^{O(1)}$) a subset
    $X \subseteq \PolySet$, such that $X \in \PropertyF{\PolySet}$,
    and $\cardin{X} \geq (1-\eps)\cardin{\Opt}$, where $\Opt$ is the
    largest set in $\PropertyF{\PolySet}$.
\end{theorem}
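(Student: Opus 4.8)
The plan is to mimic the proof of \thmref{main}, replacing the independent-set structure with the sparsity structure, and the key point is to establish a version of \lemref{cheap:balanced:cut} for the optimal solution $\Opt \in \PropertyF{\PolySet}$. The crucial observation is that by \lemref{sparse}, since $\Opt \in \PropertyF{\PolySet}$ and the property is sparse, the intersection graph $\Graph_\Opt$ has only $O(\nopt)$ edges (where $\nopt = \cardin{\Opt}$). One would build the corridor decomposition $\CD{\Opt}$ of the (now possibly overlapping) polygons of $\Opt$; by \lemref{corridor:decomp:ext} (or \lemref{u:good}) its complexity is $O(\nopt + \totalI)$, where $\totalI$ is the number of intersection vertices in $\ArrX{\Opt}$. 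But since every pair of polygons intersects $O(1)$ times and $\Graph_\Opt$ has $O(\nopt)$ edges, we get $\totalI = O(\nopt)$, so the corridor decomposition of $\Opt$ still has \emph{linear} complexity $O(\nopt)$. This is exactly what is needed so that the weak-cutting machinery of \lemref{weak:cutting} yields a $1/r$-cutting of $\Opt$ of size $O(r\log r)$, which is the linear-complexity regime underlying \lemref{good:polygon}.

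Next I would redo the balanced-separator argument from \lemref{good:polygon} essentially verbatim: take the $1/r$-cutting $\CDC$ of $\Opt$ with $r := (\log m/\eps)^{2+\mu}$, assign each polygon of $\Opt$ to the corridor containing its leftmost vertex, form the dual planar graph, fix and triangulate it, and apply Miller's cycle separator. This produces a separating polygon $\CutPolygon$ that (A) has a weighted-$1/10$ fraction of $\Opt$ strictly inside and strictly outside, (B) is crossed by only an $O(\sqrt{\log r / r}) = O(\eps/\log m)$-fraction of $\Opt$, and (C) has a short encoding — here the encoding must be redone since corridors can now be \emph{tight}, but by \lemref{loose:tight} there are only $O(m^{12}\NInt^8) = \poly(m)$ tight corridors and $O(m^4)$ loose ones, so each edge of $\CutPolygon$ (a spoke or a subchain of a polygon, delimited by corridor-induced vertices) is still encodable in $O(\log m)$ bits, giving a total of $O(\sqrt{r\log r}\log m) = \poly(\log m, 1/\eps)$ bits. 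Thus a cheap balanced cut (in the sense of \defref{cheap:b:cut}) exists for $\Opt$.

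Then the algorithm is the recursive enumeration of \secref{qptas}: guess the $2^{\poly(\log m,1/\eps)}$ candidate cuts $\CutPolygon$; for each, recurse on the subproblem of polygons \emph{completely inside} $\CutPolygon$ and the subproblem of polygons \emph{completely outside} (polygons crossing $\CutPolygon$ are discarded). This is where \emph{mergeability} is used: if $X_{\mathrm{in}} \in \PropertyF{\PolySet}$ is a solution to the inside subproblem and $X_{\mathrm{out}} \in \PropertyF{\PolySet}$ to the outside one, then $X_{\mathrm{in}}$ and $X_{\mathrm{out}}$ are separate (no polygon crossing $\CutPolygon$ survives, and an inside polygon cannot intersect an outside one), so $X_{\mathrm{in}}\cup X_{\mathrm{out}} \in \PropertyF{\PolySet}$; and \emph{hereditariness} guarantees that the restrictions of $\Opt$ to the two sides are themselves feasible, so the recursion has a valid optimal solution to track. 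As before, since the cuts are balanced the recursion depth is $O(\log \nopt) = O(\log m)$ (there is no weight now — we count cardinality, and each cut sends at least a $1/10$-fraction to each side), and at the leaves the subproblems are solved by brute force, checking feasibility in time $2^{\cardin{X}^{O(1)}}$ via exponential-time checkability; one must argue the leaves are small, e.g. by stopping when $\nopt$ for the subproblem is $\poly(\log m, 1/\eps)$, or handling the degenerate case where one polygon is a $\ge 1/10$-fraction of the subproblem optimum as in \lemref{good:polygon}'s remark. The loss per level is a factor $1 - O(\eps/\log m)$, so over $O(\log m)$ levels the total loss is $1-O(\eps)$; rescaling $\eps$ gives the claimed $(1-\eps)$-approximation.

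The main obstacle is confirming that the entire cutting/separator pipeline genuinely goes through with $\Opt$'s overlapping polygons — specifically, that the corridor decomposition of $\Opt$ has linear complexity, which hinges on combining the sparsity bound of \lemref{sparse} ($O(\nopt)$ intersecting pairs) with the $O(1)$-intersections-per-pair hypothesis to bound $\totalI = O(\nopt)$, and then checking that \lemref{weak:cutting}/\lemref{good:polygon}, whose statements are phrased for the disjoint case or in terms of the growth function $u(\cdot)$, apply with $u$ linear here. A secondary subtlety is \remref{defining:set:good}: the random sample defining the cutting consists of polygons of $\Opt$ which become their own "island" corridors, so $\CutPolygon$ does not cross them — this matters so that heavy (large-cardinality-contribution) polygons are routed whole into one subproblem rather than being destroyed, though in the unweighted setting this is less delicate than in \thmref{main}. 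Everything else is a faithful transcription of \secref{qptas} and \secref{good:separation}.
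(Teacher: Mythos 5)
Your proposal is correct and follows essentially the same route as the paper: hereditariness plus sparsity (via \lemref{sparse} and \lemref{corridor:decomp:ext}) gives linear-complexity corridor decompositions for subsets of $\Opt$, so the cutting and cycle-separator machinery of \lemref{good:polygon} applies verbatim, the cut is encoded via the polynomially many loose/tight corridors, and mergeability/exponential-time checkability handle the recombination and the leaves of the recursion. The only nuance worth stating explicitly is that the linear complexity bound is needed for \emph{every} subset $\Sample \subseteq \Opt$ (this is where hereditariness enters the cutting argument, not just the recursion), which your write-up implies but does not spell out.
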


\begin{proof}
    One need to verify that the algorithm of \secref{qptas} works also
    in this case. As before, we are going to argue that there exists a
    cheap separating polygon.
    
    So, let $\Opt$ be a maximum weight set in $\PropertyF{\PolySet}$,
    and consider any subset $\Sample \subseteq \Opt$.  Since
    $\PropertyF{\PolySet}$ is hereditary, we have that
    $\Sample \in \PropertyF{\PolySet}$.  By \lemref{sparse} the
    arrangement $\ArrX{\Sample}$ has $O\pth{ |\Sample| }$ intersection
    vertices.  By \lemref{corridor:decomp:ext}, the corridor
    decomposition $\CD{\Sample}$ consists of $O\pth{|\Sample|}$
    corridors.

    The existence of a good cycle separator now
    follow from the proof of \lemref{good:polygon} -- here the
    polygons are not necessarily disjoint, but since the corridor
    decomposition in this case still has linear complexity, it still
    works.     

    The resulting separating polygon intersects
    \begin{align*}
      O\pth{ \frac{ \cardin{\Opt} }{r} \sqrt{r \log r}}%
      =%
      O\pth{ \frac{\cardin{\Opt}}{r^{1/3}}}%
      \leq%
      \frac{\eps}{c \log m} \cardin{\Opt}
    \end{align*}
    polygons, for $r = \Omega\pth{ \pth{\log m /\eps}^{3}}$. It is
    easy to verify that such a polygon can be encoded using
    $O( \poly( \log m, \linebreak[0] 1/\eps))$ bits (vertices used by
    the cycle are either vertices rising out of intersection of
    polygons, and there are $O(m^2)$ such vertices, or one of the
    other $O(m^4)$ vertices). The rest of the algorithm now works as
    described in \secref{qptas}. Note that because of the mergeablity
    assumption the algorithm needs to verify that the generated sets
    have the desired property only in the bottom of the
    recursions. But such subsets have size
    $O( \poly( \log m, 1/\eps ))$, and thus they can be checked in
    $2^{O( \poly( \log m, 1/\eps ))}$ time, by the exponential time
    checkability assumption.
\end{proof}

\begin{remark}
    If a pair of polygons is allowed to intersect $\NInt$ times (and
    this number is no longer a constant), then encoding the separating
    cycle now requires $O\pth{ \log \pth{m^{12}\NInt^{8}} }$ bits per
    edge, see \lemref{loose:tight}. The separator now has size
    $O\pth{ \sqrt{\NInt r \log r} }$ by \lemref{sparse}. The resulting
    algorithm would still be a \QPTAS if for example
    $\NInt = O( \polylog m )$.

    There are known separator results for string graphs where the
    number of pairwise intersections is not necessarily a constant
    \cite{m-nossg-14}. Unfortunately, these separators do not
    necessarily form a cycle, which is crucial for the algorithm to
    work.
\end{remark}

Note, that without the assumption that no pair of input polygons is
contained inside each other, we have to deal with the non-trivial
technicality that the separating cycle might be fully contained inside
some input polygon\footnote{It seems that this technicality can be
   handled with some additional care, but the added complexity does
   not seem to be worth it.}.

Properties that comply with our conditions, and thus one can now use
\thmref{main:2} to get a \QPTAS for the largest subset $\Opt$ of
$\PolySet$ that have this property, include the following:
\begin{compactenum}[\qquad(A)]
    \item All the polygons of $\Opt$ are independent.
    
    \item The intersection graph of $\Opt$ is planar, or has low
    genus.
    
    \item The intersection graph of $\Opt$ does not contain $K_{s,t}$
    as a subgraph, for $s$ and $t$ constants.
    
    \item If the boundaries of every pair of polygons of $\PolySet$
    intersects at most twice, then they behave like pseudo-disks. In
    particular, the union complexity of $m$ pseudo-disks is linear,
    and the by the Clarkson-Shor technique, the complexity of the
    arrangement of depth $k$ of $m$ pseudo-disks is $O(km)$. This
    implies that if $\Opt$ is a set pseudo-disks with bounded depth,
    then the intersection graph has only $O( \cardin{\Opt})$ edges,
    and as such this is a sparse property, and it follows that one can
    $(1-\eps)$-approximate (in quasi-polynomial time) the heaviest
    subset of pseudo-disks where the maximum depth is
    bounded. Previously, only a constant approximation was known
    \cite{ehr-gpnuc-12}.
\end{compactenum}

\begin{remark}
    \remlab{unweighted}%
    The algorithm of \thmref{main:2} does not work for weighted
    polygons\footnote{In particular, the conference version of this
       paper \cite{h-qssp-14} incorrectly claimed that the algorithm
       works in this case.}. The main reason is that the separating
    cycle, when applied to the optimal solution, is defined by a
    collection of polygons. In the independent set case, these
    defining polygons were not lost (see \remref{defining:set:good}),
    but this can no longer be guaranteed. and unfortunately these
    polygons can be ``heavy'' and form a crucial subset of the optimal
    solution. A naive way to try and address this issue is to allow an
    additional set of ``special'' polygons sent down to the recursive
    subproblems as being present in the optimal solution. This
    requires a modification of the mergeablity property, which works
    in some cases, but fails for others (for example, if the
    intersection graphs of $X \cup Y, Y \cup Z \subseteq \PolySet$ are
    planar, this is not necessarily true for $X \cup Y \cup Z$). We
    leave the extension of the algorithm of \thmref{main:2} as an open
    problem for further research.

    Note, that the above is not an issue for the unweighted case --
    all the polygons intersecting the separating cycle can be thrown
    away -- the number of additional polygons in the set defining the
    separating cycle is small compared to the optimal solution, and
    has not impact on the approximation quality.
\end{remark}


\section{A \PTAS for $\delta$-large rectangles}
\seclab{PTAS-large-rectangles}

In this section we present a polynomial time approximation scheme for
the maximum weight independent set of $\delta$-large rectangles, i.e.,
for axis-parallel rectangles that have at least one edge which is long
with respect to the corresponding edge of the bounding box.
In order to achieve polynomial running time, we embed the recursion
into a dynamic program and show that the number of subproblems to be
considered is polynomially bounded (unlike the \QPTAS case where the
number of subproblems is larger). We first present our algorithm for
blocks, i.e., for large rectangles whose height or width is $1$, and
then we extend it to arbitrary $\delta$-large rectangles.

\subsection{Formal definition of the problem}

Let us fix constants $\delta > 0$ and $\eps > 0$.  Let
$\RSet = \brc{\rect_{1},...,\rect_{m}}$ be a set of $m$ axis-parallel
rectangles with integer coordinates in the plane, where the $i$\th
rectangle $\rect_{i}\in\RSet$ is defined as an open set
\begin{math}
    \rect_i = \OIntY{\rxL{i}}{\rxR{i}} \times \OIntY{\ryB{i}}{
       \ryT{i}},
\end{math}
where $\rxL{i} < \rxR{i}$, $\ryB{i} < \ryT{i}$.  Let $N$ be the
smallest integer s.t. the vertices of all rectangles in $\RSet$ are in
$\IntRange{N}^2$.  For each rectangle $\rect_{i}$, its \emphi{width}
is $\rW{i} = \rxR{i} - \rxL{i}$ and its \emphi{height} is
$\rH{i} = \ryT{i} - \ryB{i}$.  We denote the quantity
$\GWidth := \delta N$ as the \emphi{largeness threshold}.  A rectangle
$\rect_{i} \in \RSet$ is \emphi{$\delta$-large} (or just
\emphi{large}) if $\rH{i} = \ryT{i} - \ryB{i} > \GWidth$ or
$\rW{i} = \rxR{i} - \rxL{i} > \GWidth$.

\begin{problem}[Independent set of large rectangles]
    \problab{i:set:rects}%
    The input consists of a set $\RSet$ of weighted $\delta$-large
    rectangles, where the weight of a rectangle $\rect_i \in \RSet$ is
    a positive number $w_i$.  The task is to compute a maximum weight
    subset $\RSetA \subseteq \RSet$, such that the rectangles of
    $\RSetA$ are disjoint.
\end{problem}

\begin{defn}
    A rectangle $\rect_i \in \RSet$ is a \emphi{block} if (i)
    $\rect_i$ is $\delta$-large, and (ii) either $\rH{i}=1$ or
    $\rW{i}=1$.
\end{defn}

\begin{problem}[Independent set of blocks]
    \problab{i:set:blocks}%
    The input consists of a set
    $\BSet=\{\Block_1,\Block_2,...,\Block_m\}$ of weighted blocks.
    The task is to compute a maximum weight subset
    $\BSetA \subseteq \BSet$ such that the rectangles of $\BSetA$ are
    disjoint.
\end{problem}


To simplify the description, we assume the following:
\begin{inparaenum}[(i)]
    \item $m/\eps$ and $\eps m$ are both integers,
    \item $1/\delta\in\mathbb{N}$, and
    \item $\GWidth = \delta N \in\mathbb{N}$.
\end{inparaenum}
For any two points $\pnt,\pnt'$, let $\segCY{\pnt}{\pnt'}$ denote the
closed straight segment from $\pnt$ to $\pnt'$. Similarly, let
$\segOY{\pnt}{\pnt'}$ be the open segment
$\segCY{\pnt}{\pnt'} \setminus \brc{\pnt,\pnt'}$.

\subsection{Constructing the partition for blocks}
\seclab{partition:blocks}

\subsubsection{Overview and definitions}

As before, we assume that the maximum weight subset
$\Opt \subseteq \BSet$ of disjoint blocks is known to us, and we prove
that there is a ``cheap'' partition that enables one to compute a
near-optimal independent set of blocks using dynamic programming.

Specifically, we construct a partition of the bounding box $[0,N]^2$
using a set of at most $\constAmath$ rectilinear (i.e., horizontal and
vertical) line segments with integer endpoint coordinates, such that
the blocks of $\Opt$ intersected by these segments have a small total
weight compared with $w(\Opt)$. Furthermore, each face of the
partition is either a simple rectilinear polygon, or a rectilinear
polygon with a single rectilinear hole. In either case, the polygon
would have ``width'' of at most $\GWidth$, which is strictly smaller
than the length of any block.  In a sense, this partition sparsely
describes the topology of the (large) blocks while intersecting blocks
of negligible total weight.

From this point on, a \emphi{segment} refers to a horizontal or
vertical (closed) line segment that has integer coordinates with
endpoints in $\IntRange{N}^2$.

First, we construct a grid $\TGrid$ of large tiles, consisting of
\begin{math}
    {1}/{\delta}\times {1}/{\delta}
\end{math}
uniform grid cells in the input square $[0,N]^2$, i.e., for each
$i,j\in\brc{0,...,1/\delta-1}$ there is a grid cell with coordinates
\begin{math}
    \pbrc{ i\GWidth,(i+1)\GWidth \bigl. } \times \pbrc{j \GWidth
       ,(j+1)\GWidth}.
\end{math}

For a set $X \subseteq \Re^2$, we use $\clX{X}$ to denote the
\emphi{closure} of $X$.  We remind the reader that blocks (and
rectangles) are open sets, and therefore for a block
$\Block \in \BSet$, $\clX{\Block}$ is the closed version of $\Block$.

\begin{defn}%
    \deflab{s:cuts:r}%
    A segment (or a line) $\seg$ \emphi{cuts} a rectangle $\rect$ if
    $\rect \setminus \seg$ has two connected components.  A segment
    $\seg$ \emphi{hits} a rectangle $\rect$ if (i) $\seg$ intersects
    $\clX{\rect}$, (ii) $\seg$ does not intersect $\rect$, and (iii)
    $\LineSpanX{\seg}$ cuts $\rect$, where $\LineSpanX{\seg}$ denotes
    the line that spans $\seg$.
    Similarly, a segment $\seg$ \emphi{hits} a segment $\segA$, if
    \begin{inparaenum}[(i)]
        \item $\seg$ and $\segA$ are orthogonal to each other, and
        \item $\seg$ has an endpoint in the interior of $\segA$.
    \end{inparaenum}
    Two segments that intersect in their interior are
    \emphi{crossing}.
\end{defn}

Note that if a segment $\seg$ hits a segment $\segA$, then $\segA$
does not hit $\seg$.  By assumption, any block of $\BSet$ intersects
at least two grid cells.  A block $\Block \in \BSet$ \emphi{ends} in a
grid cell $\GCell$, if (i) $\GCell$ and $\Block$ intersect, and (ii)
$\GCell$ contains one of the corners of $\Block$.

Our construction has two steps. In the first step we construct an
initial set of segments $\LX$ that contains the boundary of the input
square and adds $O(1)$ segments per each grid cell. The segments of
$\LX$ do not intersect any blocks from the optimal solution $\Opt$ and
are pairwise non-crossing, but they might have \emph{loose ends}, that
is, endpoints of segments from $\LX$ that are not contained in the
interior of some other segment of $\LX$. Therefore, in our second
step, we add a set of segments $\LY$ that connect these loose ends
with other lines in $\LX \cup \LY$. Segments in $\LY$ might intersect
blocks from $\Opt$, however, the total weight of intersected blocks is
bounded by $\eps\cdot w(\Opt)$, and thus we can afford to lose them.

\subsubsection{Construction step I: The set $\LX$}
\seclab{Z:construction}

Initially, $\LX$ is a set containing the four segments forming the
boundary of the input square $[0,N] \times [0,N]$. Next, consider a
grid cell $\GCell$ and its (closed) bottom edge $\edge$.  A vertical
segment $\seg$ is \emphi{admissible} if
\begin{compactenum}[\quad(i)]
    \item $\seg$ intersects $\edge$ (i.e., either $\interX{\seg}$
    intersects $\edge$, or $\seg$ has an endpoint on $\edge$) and that
    $\interX{\seg}$ intersects $\GCell$,
    \item $\seg$ does not intersect any block of $\BSet$,
    \item $\seg$ does not cross any segment of $\LX$,
    \item $\lenX{\seg} > \GWidth$ (i.e., $\seg$ is long), and
    \item $\seg$ is a maximal (i.e., as long as possible) segment
    satisfying the properties above.
\end{compactenum}
We add to $\LX$ the following segments.
\begin{compactenum}[\quad(A)]
    \item \itemlab{l:y:A}%
    An admissible segment $\seg$ with the smallest $x$-coordinate
    (i.e., $\seg$ might lie on the left edge of $\GCell$)
    \item An admissible segment with the largest $x$-coordinate.
    \item \itemlab{l:y:C}%
    An admissible segment $\seg$ which maximizes the length of
    $\lenX{\seg \cap \GCell}$.  If there are several such segments, we
    add two of them: one with the smallest and one with the largest
    $x$-coordinate, respectively. Segments maximizing
    $\lenX{\seg\cap \GCell}$ are called \emphi{\reach} segments for
    $\edge$ in $\GCell$. The ones added to $\LX$ in this step are
    called \emphi{extremal} \reach segments.
\end{compactenum}

\begin{figure}
    \centerline{\IncludeGraphics{figs/maze_set_l}}
    \caption{The red lines denote the segments added to $\LX$ while
       processing the grid cell $\GCell$. The blocks of $\Opt$
       intersecting $\GCell$ are depicted in gray.}
    \figlab{construction-segments-L}
\end{figure}

\smallskip
\noindent%
The algorithm performs the same operation for the top, left and right
edges of $\GCell$, where for the left and right edges it considers
horizontal segments instead of vertical. This is done in a fixed
order, e.g., first all vertical segments, and then all horizontal
segments. See \figref{construction-segments-L} for an example.

By construction, the resulting segments of $\LX$ (excluding the four
frame segments) are all interior disjoint and maximal (i.e., they
cannot be extended without crossing other segments from $\LX$ or
intersecting blocks of $\Opt$).

\subsubsection{Construction step II: The set $\LY$}
\seclab{app:construction}

\paragraph*{Idea.}

The segments in $\LX$ might have loose ends as mentioned above. We
need to connect such endpoints up so that the resulting set of
segments partitions the input square into faces, where every face is
either a simple polygon, or a polygon with a single hole.  The idea is
to perform a walk in the square, looking for a way to connect such a
loose end with the segments already constructed. If the walk is too
long, it would be shortened by introducing a cheap shortcut segment.

\paragraph*{Setup.}

Initially, the set $\LY$ is empty.  For each endpoint $\pnt_0$ of a
segment $\seg \in \LX$, such that $\seg$ does not hit a perpendicular
segment in $\LX \cup \LY$ at $\pnt_0$, we create a path of segments
connecting $\seg$ with a segment in $\LX \cup \LY$, adding the new
segments of the path to $\LY$.

A segment $s$ is \emphi{maximal} if it does not cross any segment of
$\LX \cup \LY$ or intersect any of the blocks from $\Opt$, but any
segment $s'$ such that $s \subset s'$ violates this property. Note
that any endpoint of a maximal segment must lie in the interior of an
edge of a perpendicular block of $\Opt$, or in the interior of a
perpendicular segment of $\LX \cup \LY$. In the sequel, we use the
following technical lemma.

\bigskip%
\noindent%
\begin{minipage}{0.8\linewidth}
    \begin{lemma}
        \lemlab{line-hits-block}%
        Let $\GCell$ be a grid cell and $\pnt$ be a point in
        $\GCell$. Let $\seg$ be a maximal segment with one endpoint at
        $\pnt$, and the other endpoint outside of $\GCell$. If $\seg$
        hits a perpendicular block $\Block \in \Opt$ at $\pnt$, but it
        does not hit a perpendicular segment from $\LX$ at $\pnt$,
        then:
        \begin{compactenum}[\quad(i)]
            \item $\pnt \in \interX{\GCell}$, and
            \item one end of $\Block$ is in $\GCell$, and the other
            one is outside of $\GCell$.
        \end{compactenum}
    \end{lemma}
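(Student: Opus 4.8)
The plan is to translate the two ``hits'' hypotheses into an explicit picture of $\seg$, $\pnt$, and $\Block$ near $\pnt$, and then confront that picture with the construction of $\LX$ in \secref{Z:construction}. First I would normalize: rotating the plane by a right angle if needed, assume $\seg$ is vertical, so a \emph{perpendicular} block is one that is long in the direction orthogonal to $\seg$, i.e.\ a block with $\rW{\Block}>\GWidth$ (and hence $\rH{\Block}=1$). Decoding ``$\seg$ hits $\Block$ at $\pnt$'' via \defref{s:cuts:r}: $\pnt\in\clX{\Block}\setminus\Block$ and $\LineSpanX{\seg}$ cuts $\Block$, and since $\seg$ is vertical this is possible only if $\pnt$ lies in the relative interior of a horizontal edge $\edge_\Block$ of $\Block$ while the $x$-coordinate $x_0$ of $\seg$ lies strictly between the two vertical sides of $\Block$; moreover $\seg\cap\Block=\emptyset$ together with $\pnt$ being an endpoint of $\seg$ forces $\seg$ to leave $\pnt$ into the open halfplane bounded by $\LineSpanX{\edge_\Block}$ that avoids the body of $\Block$. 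After a reflection, assume $\edge_\Block$ is the top edge of $\Block$ and $\seg$ goes upward from $\pnt$, so just below $\pnt$ the block $\Block$ occupies a unit-height horizontal slab whose $x$-extent $(\ell,r)\ni x_0$ has length $r-\ell>\GWidth$.

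For part (i) I would argue by contradiction. If $\pnt\in\bd\GCell$, then $\pnt$ lies on a grid segment $g$ of $\TGrid$. If $g$ is vertical, then $\seg$ (the upward vertical ray from $\pnt$) runs along $\LineSpanX{g}$; since $\pnt$ lies in the relative interior of the horizontal edge $\edge_\Block$ of $\Block\in\Opt$, which therefore crosses $g$, inspecting the admissible segments that \secref{Z:construction} added while processing the cells meeting along $g$ --- using that $\edge_\Block$ is longer than $\GWidth$ together with the ``smallest/largest $x$-coordinate'' rule --- shows that $\LX$ already contains a horizontal segment whose interior contains $\pnt$, i.e.\ $\seg$ hits a perpendicular segment of $\LX$ at $\pnt$, contradicting a hypothesis; alternatively that $\LX$-segment crosses $\seg$, contradicting maximality of $\seg$ with respect to $\LX\cup\LY$. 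If $g$ is horizontal, then $\edge_\Block$ runs along $g$, and a symmetric inspection of the ``extremal reach'' segments created for the cell lying on the $\seg$-side of $g$ again produces a perpendicular $\LX$-segment through $\pnt$, or a crossing with $\seg$. In every case this is a contradiction, so $\pnt\in\interX{\GCell}$.

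For part (ii), since $\Block$ is $\delta$-large with $\rW{\Block}=r-\ell>\GWidth$, which equals the side length of $\GCell$, the interval $(\ell,r)$ is not contained in the $x$-projection of $\GCell$, so at least one of the vertical sides $x=\ell$, $x=r$ of $\Block$ lies outside $\GCell$ --- that end of $\Block$ is outside $\GCell$. It remains to exclude that \emph{both} ends lie outside, i.e.\ that $\Block$ spans $\GCell$ horizontally; in that case $\Block\cap\GCell$ is a full-width horizontal slab of $\GCell$ whose upper boundary is the piece of $\LineSpanX{\edge_\Block}$ inside $\GCell$, and by part (i) $x_0$ is strictly interior to the $x$-projection of $\GCell$, so $\seg$ rises from $\pnt$ on that boundary and exits through the top edge of $\GCell$. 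Invoking \secref{Z:construction} once more: every vertical segment dropped from the top edge of $\GCell$ is stopped from below exactly by this slab, hence is admissible for the top edge of $\GCell$, so the extremal ones are placed into $\LX$, and comparing their $x$-coordinates with $x_0$ forces $\seg$ to either cross one of them or hit one of them perpendicularly at $\pnt$ --- again contradicting a hypothesis. Thus exactly one end of $\Block$ lies in $\GCell$ and the other outside. The main obstacle throughout is precisely this bookkeeping: one must reconcile the maximality of $\seg$, which holds only with respect to $\Opt$ and to $\LX\cup\LY$, with the admissibility conventions of \secref{Z:construction}, which are phrased with respect to \emph{all} blocks of $\BSet$ and which retain only a bounded number of extremal admissible segments per cell edge, and one must check that no degenerate position of $\pnt$ on $\bd\GCell$, nor of the slab $\Block\cap\GCell$ within $\GCell$, slips through the case analysis.
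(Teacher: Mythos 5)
Your setup and the decoding of ``hits'' are fine, and your overall strategy (derive a contradiction from segments that the construction of $\LX$ must have retained) is the same as the paper's. However, the argument you give to exclude the case that $\Block$ spans $\GCell$ horizontally does not work. You consider the \emph{vertical} admissible segments dropped from the top edge of $\GCell$ and stopped by the slab $\Block\cap\GCell$, and claim that ``comparing their $x$-coordinates with $x_0$ forces $\seg$ to either cross one of them or hit one of them perpendicularly at $\pnt$.'' Those segments are parallel to $\seg$: a vertical segment can neither cross nor perpendicularly hit another vertical segment, and if their $x$-coordinates differ from $x_0$ they simply never meet $\seg$. No contradiction follows. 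The paper's argument instead uses the \emph{horizontal} segments associated with the left (or right) edge of $\GCell$: the maximal extension $\seg'$ of $\edge_\Block$ cuts $\GCell$, hence attains the maximum possible value of $\lenX{\seg'\cap\GCell}$ and is a \reach segment for that edge; the extremal \reach segment on the side of $\seg'$ away from $\Block$ belongs to $\LX$, and it either equals $\seg'$ (so $\seg$ hits a perpendicular $\LX$-segment at $\pnt$, contradicting a hypothesis) or lies strictly beyond $\seg'$ and still cuts $\GCell$, in which case $\seg$, which runs from $\pnt\in\interX{\GCell}$ to a point outside $\GCell$, must cross it, contradicting maximality of $\seg$.

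A second, smaller issue is in part (i): you repeatedly assert that ``inspecting the admissible segments\dots shows that $\LX$ already contains a horizontal segment whose interior contains $\pnt$.'' But $\LX$ retains only the \emph{extremal} admissible segments for each cell edge, not all admissible ones, so you must explain why the particular segment through $\pnt$ --- the maximal extension $\seg'$ of the long edge of $\Block$ --- is extremal. When $\pnt$ lies on a horizontal grid edge this holds because $\seg'$ runs along that grid edge and is therefore the bottom-most (or top-most) long segment crossing the vertical edges of the adjacent cell, which is exactly how the paper argues; when $\pnt$ lies on a vertical grid edge a separate argument is needed, and your appeal to the ``smallest/largest $x$-coordinate rule'' does not supply one, since for a vertical cell edge the admissible segments are horizontal and the extremal rule there selects by $y$-coordinate, so there is no a priori reason that the admissible segment through $\pnt$ is among the ones kept.
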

\end{minipage}%
\begin{minipage}{0.2\linewidth}
    \hfill%
    \IncludeGraphics[width=0.8\linewidth]{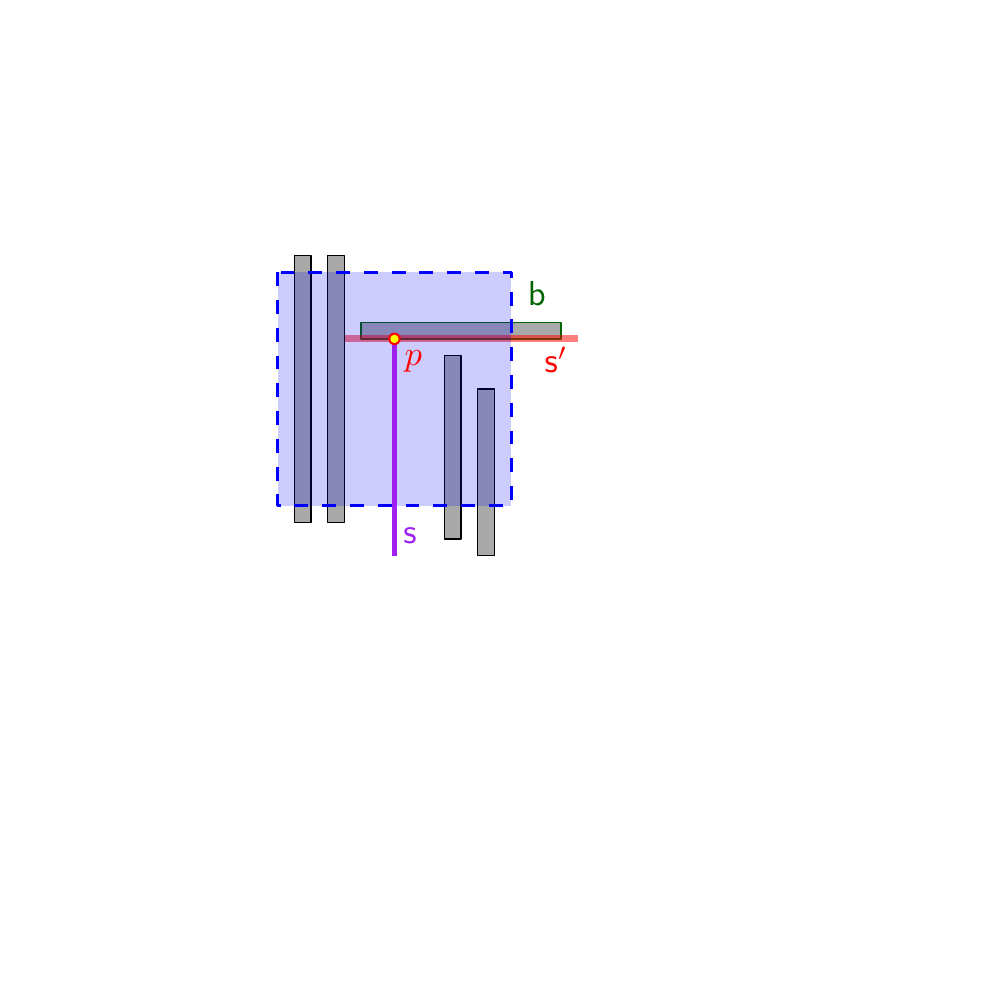}
\end{minipage}%

\begin{proof}
    Assume without loss of generality that $\seg$ is vertical, $\pnt$
    is at the top end of $\seg$, and $\Block$ crosses the boundary of
    the grid cell to the right of $\pnt$ (as, by assumption, $\pnt$
    lies on the long edge of $\Block$). Let $\seg'$ be the maximal
    segment which contains the bottom edge of $\Block$ and does not
    intersect any blocks or segments from $\LX$. As $\pnt \in \seg'$,
    by assumption, we have that $\seg' \notin \LX$.
    
    Assume that $\pnt$ is on the boundary of $\GCell$. If $\pnt$ lies
    on the bottom edge of $\GCell$, then $\seg'$ is the bottom-most
    long segment crossing the right edge of $\GCell$.  But then
    $\seg' \in\LX$, which is a contradiction. The cases that $\pnt$ is
    on the top, left or right edges of $\GCell$ are handled in a
    similar fashion.
    
    As such, $\pnt \in \interX{\GCell}$ and $\Block$ intersects the
    interior of $\GCell$.

    If $\Block$ does not have an end in $\GCell$, then $\seg'$ cuts
    $\GCell$. If $\seg'$ is the bottom-most \reach segment for the
    left edge of $\GCell$ then $\seg' \in\LX$, which gives a
    contradiction. Otherwise, the bottom-most \reach segment for the
    left edge of $\GCell$ is below $\seg'$ and cuts $\GCell$, so it
    intersects $\seg$, and again we get a contradiction, as $\seg$
    does not intersect edges from $\LX$. Block $\Block$ must end in
    $\GCell$.
\end{proof}

\paragraph*{Building the path for a single loose end.}%

We consider all loose endpoints of segments in $\LX$, one by one. Let
$\pnt_0$ be such a loose endpoint of a segment $\seg_0 \in \LX$, such
that $\seg_0$ does not hit a perpendicular segment in $\LX \cup \LY$
at $\pnt_0$, and set $i=1$.  In the following, let
\begin{align}
  M = 64/\pth{\eps \delta^{2} }. %
  \eqlab{M:value}
\end{align}
Next, we construct a path $\seg_1, \ldots , \seg_M$ starting from
$\pnt_0$, aiming to connect $\pnt_0$ with some existing segment in
$\LX \cup \LY$.


\begin{figure}[h]
    \begin{tabular}{ccc}
      \IncludeGraphics[page=1,scale=0.95]{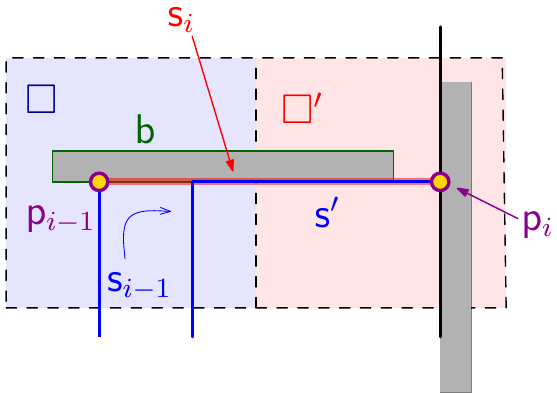}%
      &
        \quad%
        \IncludeGraphics[page=2,scale=0.95]{figs/16_new}%
      &%
                                                                                      \quad %
        \IncludeGraphics[page=3,scale=0.95]{figs/16_new}%
      \\%
      (A) & (B) & (C)
    \end{tabular}
    \caption{%
       Construction of the path. The segment $\seg_{i-1}$ hits a
       block $\Block$ at the point $\pnt_{i-1}$. The segment $\seg_i$
       starts at $\pnt_{i-1}$ and follows the edge of $\Block$
       until it hits a perpendicular segment which is already in
       $\LX \cup \LY$, or a perpendicular block. In case (A) the new
       segment $\seg_i$ overlaps an existing segment
       $\seg' \in \LY\cup\{\seg_1,...,\seg_{i-1}\}$ and $\seg'$ is replaced by $\seg_i$. 
       In case (B), $\seg_i$ is added to $\LY$. 
       In case (C), the walk continues from the new endpoint
       $\pnt_i$.}
    \figlab{creating-le-cases}%
\end{figure}

\NotHandled{%
   \andy{\figref{creating-le-cases}: (A) The caption says that
      $\seg_{i}$ is not added to $\LY$ but in the main text we say
      that $\seg'$ is replaced by $\seg_{i}$ (B) Where is $\pnt_0$?
      Could it be that we mean $\pnt_{i-1}$ here? The picture (B)
      looks as if the left boundary of the right block is already in
      $\LY$, is that the case?  If yes, we should state that in the
      caption and in case (C) explain why this case is different.}%
   \sariel{I believe what is there is correct.}%
}%

Let $\GCell$ be a grid cell such that $\pnt_{i-1} \in \GCell$ and
$\seg_{i-1}$ does not hit a segment from $\LX \cup \LY$ at
$\pnt_{i-1}$.  Let $\Block$ be the block hit by $\seg_{i-1}$ at
$\pnt_{i-1}$. As $\seg_{i-1}$ cannot be extended beyond $\pnt_{i-1}$,
such a block $\Block$ exists.
Applying \lemref{line-hits-block} to the point $\pnt_{i-1}$ together
with the segment $\seg_{i-1}$ (we can do that as $\seg_{i-1}$
intersects at least two cells), we obtain that
$\pnt_{i-1} \in \interX{\GCell}$, and $\Block$ has one end in $\GCell$
and the other end in some other grid cell $\GCell'$. Let $\segMax$ be
a maximal segment which contains the edge of $\Block$ containing
$\pnt_{i-1}$.
Let $\pnt_i$ be the endpoint of $\segMax$ such that
\begin{math}
    \segCY{\pnt_{i-1}}{\pnt_i} \cap \GCellA \neq \emptyset.
\end{math}
Set $\seg_i = \segCY{\pnt_{i-1}}{\pnt_i}$. Now,
$\pnt_i \notin \GCell$, so $\seg_i$ intersects at least two grid
cells.

We continue this walk, for $i=1,\ldots, M$ and consider the following
cases (see \figref{creating-le-cases}).
\begin{compactenum}[(A)]
    \item \itemlab{A} 
    There is a segment $\seg' \in \LY\cup\{\seg_1,...,\seg_{i-1} \}$,
    such that $\interX{\seg_i} \cap \interX{\seg'} \neq \emptyset$,
    see \figref{creating-le-cases} (A). This can happen, as the
    segments of $\LY$ are not necessarily maximal. As $\seg_{i-1}$
    does not hit $\seg'$ at $\pnt_{i-1}$, and $\seg_i$ cannot be
    extended beyond $\pnt_i$, we must have $\seg' \subset \seg_i$.  If
    $\seg'\in \LY$ then we replace $\seg'$ by $\seg_i$ in $\LY$, add
    the segments of the path constructed so far to $\LY$, and stop the
    path construction. If $\seg' \in \{\seg_1,...,\seg_{i-1} \}$ then
    we also stop the path construction and add the segments
    $\{\seg_1,...,\seg_{i} \}\setminus \{\seg'\}$ to $\LY$.
    \item \itemlab{B} %
    Case \itemref{A} does not happen, but $\seg_i$ hits a segment from
    $\LX \cup \LY \cup\{\seg_1,...,\seg_{i-1} \}$ at $\pnt_i$, see
    \figref{creating-le-cases} (B).  We add the segments
    $\{\seg_1,...,\seg_{i} \}$ to $\LY$, and the construction of the
    path is done.
    
    \item Cases \itemref{A} and \itemref{B} do not happen. In this
    case, $\seg_i$ hits some perpendicular block at $\pnt_i$ (see
    \figref{creating-le-cases} (C)). The algorithm proceeds as before,
    considering the segment $\seg_i$ and its endpoint $\pnt_i$ instead
    of $\seg_{i-1}$ and $\pnt_{i-1}$. The conditions of
    \lemref{line-hits-block} are satisfied, as $\seg_i$ intersects at
    least two grid cells. We continue extending the path until one of
    the cases \itemref{A} or \itemref{B} happens, or until the number
    of segments in the path reaches the upper bound of $M$.
\end{compactenum}


\begin{figure}
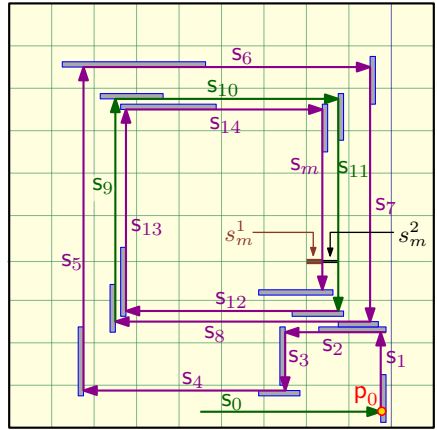

    \centerline{%
       \IncludeGraphics{figs/construct_lprime_many_lines}%
    }
    
    \caption{The construction of the segments $\LY$. The blocks
       of the considered instance are depicted in gray.}%
    \figlab{construction-segments-segp-case3}
\end{figure}

\paragraph{Shortcutting the path if it is too long.}
\seclab{compute:shortcut}

A more challenging situation occurs when after $M$ steps the path
$\seg_1, \ldots , \seg_M$ does not hit any segment of $\LX \cup \LY$
or itself.  To avoid creating an even longer path, the idea is to
shortcut the path by adding a single ``cheap'' segment which connects
it to a segment of $\LX \cup \LY$. Note that in this case some blocks
of $\Opt$ might be cut by the shortcut segment, but we ensure that the
total weight of cut blocks from $\Opt$ is negligible. The cut goes
along the boundary of some grid cell. An example can be seen in
\figref{construction-segments-segp-case3}.

The proof of \lemref{cheap:shortcut} below defines a set $\EdgeSet$ of
pairwise disjoint maximal vertical or horizontal segments. The
constructed set $\EdgeSet$ have the following properties:
\smallskip%
\begin{compactenum}[\;(A)]
    \item The segments of $\EdgeSet$ are contained in the grid edges of $\TGrid$.
    \item Each segment of $\EdgeSet$  has
    \begin{compactenum}[(a)]
        \item one of its endpoints on the segments of
        $\brc{\seg_1, \ldots, \seg_M}$, and
        \item its  other endpoint on one of the segments of
    \begin{math}
        \SetX = \LX \cup \LY \cup \brc{\seg_1, \ldots, \seg_M},
    \end{math}
    \end{compactenum}

    \item The segments of $\EdgeSet$ do not intersect any of the segments of $\SetX$ in
    their interior.
\end{compactenum}
\smallskip%
Every segment of $\EdgeSet$ has an associated weight, which is the
total weight of all the blocks that it cuts.  The minimum weight
segment from $\EdgeSet$ is the \emphi{shortcut} of the path, and is
denoted by $\shortcut$.

\begin{lemma}
    \lemlab{cheap:shortcut}%
    For any path there is always a shortcut of weight of at most
    $2\weightX{\Opt}/M$.
\end{lemma}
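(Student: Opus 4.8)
The plan is an averaging argument. Suppose the family $\EdgeSet$ has been built so that it satisfies properties (A)--(C) of the statement and moreover $\cardin{\EdgeSet}\ge M$. For a segment $e\in\EdgeSet$ let $w(e)$ denote the total weight of the blocks of $\Opt$ that $e$ cuts --- this is the ``weight'' of $e$ referred to in the statement. Since $\shortcut$ is defined as the lightest segment of $\EdgeSet$, it suffices to prove
\begin{align*}
  \sum_{e\in\EdgeSet} w(e)\ \le\ 2\,\weightX{\Opt},
\end{align*}
because then $w(\shortcut)=\min_{e\in\EdgeSet}w(e)\le\frac{1}{\cardin{\EdgeSet}}\sum_{e\in\EdgeSet}w(e)\le\frac{2\weightX{\Opt}}{M}$. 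So the real work is to construct an $\EdgeSet$ with $\cardin{\EdgeSet}\ge M$ and small total weight.

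\emph{Construction.} Each path segment $\seg_i$ is long, $\lenX{\seg_i}>\GWidth$, so its relative interior crosses at least one grid line of $\TGrid$ orthogonal to $\seg_i$; fix a crossing point $q_i$ on such a grid edge $g_i$. Walk from $q_i$ along $g_i$, moving away from the bends at the two ends of $\seg_i$, and let $e_i$ be the longest initial piece of this walk whose relative interior avoids $\SetX=\LX\cup\LY\cup\brc{\seg_1,\dots,\seg_M}$. Since the frame of $[0,N]^2$ belongs to $\LX\subseteq\SetX$ and $g_i$ runs into it, the walk terminates; hence $e_i$ is a grid-edge segment with one endpoint $q_i$ on the path, its other endpoint on a segment of $\SetX$, and interior disjoint from $\SetX$ --- that is, $e_i$ already obeys (A)--(C). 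Let $\EdgeSet$ be a maximal pairwise interior-disjoint subfamily of all such candidates (allowing more than one per $\seg_i$, one per grid line it crosses). The overlap structure of grid-edge segments is simple --- two on the same grid line are nested, two on perpendicular grid lines meet in at most a point --- which is what will be used to guarantee $\cardin{\EdgeSet}\ge M$.

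\emph{Bounding the total weight.} Rewrite $\sum_{e\in\EdgeSet}w(e)=\sum_{\Block\in\Opt}\weightX{\Block}\cdot\cardin{\Set{e\in\EdgeSet}{e\text{ cuts }\Block}}$, so it is enough that each block of $\Opt$ is cut by only a bounded number of segments of $\EdgeSet$. If $e$ cuts $\Block$ then $e$ is perpendicular to the long edge of $\Block$ and lies on a grid line meeting $\Block$; since $\Block$ is $\delta$-large its long side has length at most $N=\GWidth/\delta$, so at most $O(1/\delta)$ grid lines of $\TGrid$ meet $\Block$, and on each such line the interior-disjointness of $\EdgeSet$ permits at most one cutting segment. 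To sharpen the resulting $O(1/\delta)$ bound to the constant that the factor $2$ requires, one invokes \lemref{line-hits-block} together with the alternating horizontal/vertical structure of the path: each block boundary is followed by at most two of the $\seg_i$'s, so the construction ``charges'' each block only a bounded number of times. Feeding this into the averaging inequality with $M=64/(\eps\delta^{2})$ gives $w(\shortcut)\le 2\weightX{\Opt}/M$.

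The hard part is precisely this quantitative balance: one must push $\cardin{\EdgeSet}$ up to $M$ and the number of $\EdgeSet$-segments cutting any fixed block down, so that their ratio clears $2/M$. The two levers are the geometry of $\delta$-large blocks (a block touches only $O(1/\delta)$ grid cells, since the grid spacing is $\GWidth=\delta N$ and the block has length at most $N$) and the structure of the path (each block boundary contributes at most two path segments, by \lemref{line-hits-block}); reconciling them against the target $2\weightX{\Opt}/M$ is where the particular value of $M$ is used. Everything else --- existence of the crossings $q_i$, termination of each walk on $\SetX$, and the fact that $e_i$ satisfies (A)--(C) --- is immediate from the path segments being long and from the frame being part of $\LX$.
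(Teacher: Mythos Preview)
Your averaging framework is right, but the argument has real gaps at exactly the two places where the work has to happen: showing $\cardin{\EdgeSet}\ge M$ and showing each block of $\Opt$ is cut by at most two members of $\EdgeSet$. For the first, you pass to a ``maximal pairwise interior-disjoint subfamily'' of an initially larger set of candidates, but give no reason this subfamily still has size $\ge M$; nested segments on the same grid line could collapse many candidates into one. For the second, you honestly obtain only $O(1/\delta)$ cuts per block and then hand-wave toward \lemref{line-hits-block} and ``alternating structure'' to get the constant $2$ --- but \lemref{line-hits-block} does not say anything about how many shortcut candidates can cut a fixed block, and the fact that at most two path segments run along $\partial\Block$ is not the relevant quantity. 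Also, the instruction ``walk from $q_i$ along $g_i$ away from the bends at the two ends of $\seg_i$'' does not specify a direction: the bends are at the endpoints of $\seg_i$, which lie above and below $q_i$ (if $\seg_i$ is vertical), not to one side along the horizontal grid line $g_i$.

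The paper's construction supplies exactly the missing structure. It does not pick one crossing per $\seg_i$; instead it looks at a single grid edge $\edge$ of a cell $\GCell$ and at \emph{all} crossings of the path through $\edge$ into $\GCell$, sorted along $\edge$. Because the path is non-self-intersecting inside $\GCell$, the left-turn crossings and right-turn crossings are contiguous blocks along $\edge$; the candidate segments are the gaps between consecutive crossings of the same turn type (plus the two outer gaps to the endpoints of $\edge$). This yields exactly one candidate per turn of the path, so $\cardin{\EdgeSet}=M$ with disjointness for free. The payoff is the cut bound: if a candidate segment $\seg$ lies between two crossings that both turn (say) left inside $\GCell$, then any block of $\Opt$ crossing $\seg$ is trapped by those turns and must \emph{end} inside $\GCell$. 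Since a block has only two ends, it is cut by at most two candidates in all of $\EdgeSet$, giving $\sum_e w(e)\le 2\,\weightX{\Opt}$ on the nose. That ``the turn forces the block to end in the cell'' is the idea your sketch is missing.
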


\begin{proof}
    Consider a path $\pi = \seg_0, \seg_1, \ldots, \seg_M$, and orient
    the segments of the path such that $\seg_i$ is oriented towards
    $\seg_{i+1}$, for all $i$. Let $\pnt_i = \seg_{i} \cap \seg_{i+1}$
    for $i=0,\ldots,M-1$, and let $\pnt_M$ be the endpoint of $\seg_M$
    different from $\pnt_{M-1}$.
    
    By the construction of the path, for any segment $\seg_i$, the two
    endpoints of $\seg_i$ are in two different cells of $\TGrid$. The
    $i$\th endpoint of the path, $\pnt_i$, is contained in some grid
    cell $\GCell_i$, where the path either takes a left turn, or a
    right turn. By \lemref{line-hits-block}, we have
    $\pnt_i \in \interX{\GCell_i}$.
    
    Consider any grid cell $\GCell$ of $\TGrid$ together with its grid
    edge $\edge$, and consider all points $\pnt_i$ such that
    $\seg_{i}$ crosses $\edge$, and $\pnt_i \in \GCell$. For each such
    a point $\pnt_i$ the path first crosses the edge $e$ of $\GCell$,
    and then performs a turn to the right or to the left inside
    $\GCell$. Let $k = n(\edge, \mathrm{left})$ and
    $k' = n(\edge, \text{right})$ denote the total number of such
    turns to the left and to the right, respectively. Then, $\pi$
    crosses $\edge$ at least $k+k'$ times, and all the intersection
    points are pairwise different. Denote by
    $\pnt'_1, \ldots, \pnt'_{k+k'}$ these intersection points, sorted
    by their position along the edge. Since any two fragments of $\pi$
    within $\GCell$ are disjoint, it must hold that the first $k$
    points $\pnt'_1, \ldots, \pnt'_{k}$ correspond to the left turns
    of $\pi$ within $\GCell$, and the last $k'$ points
    $\pnt'_{k+1}, \ldots, \pnt'_{k+k'}$ correspond to the right turns
    (see the figure).

    \parpic[r]{\IncludeGraphics{figs/left_right_turns}}

    Let $\pnt_0$ and $\pnt_{k +k'+1}$ be the two endpoints of
    $\edge$. We now break $\edge$ into a collection $\CandidX{\edge}$
    of $k + k'$ \emph{candidate segments}, which consists of the
    segments $\pnt_0 \pnt_1, \pnt_1 \pnt_2, \ldots, \pnt_{k-1} \pnt_k$
    and $\pnt_{k+1} \pnt_{k+2}, \ldots, \pnt_{k+k'}
    \pnt_{k+k'+1}$. Notice that we omit the segment
    $\pnt_k \pnt_{k+1}$, which bridges between the left and right
    turns. %
    \NotHandled{%
       \anna{Here we have to prove that all candidate segments satisfy
          some set of properties, in particular: they do not intersect
          any of the segments of $\SetX \cup \SetY$ in their interior,
          either their both endpoints are at $\pi$ or one at $\pi$ and
          the other one at $\SetX \cup \SetY$. For that we might have
          to modify the construction a bit.}\andy{Can't it also be
          that a segment $\pnt_i \pnt_{i+1}$ crosses a segment
          $\seg_j$ such that $\seg_j$ crosses $e$ downwards?}%
    }
 
    Let $\EdgeSet$ be the set of all candidate segments for all cells
    of $\TGrid$ and their corresponding edges. If a grid edge has no
    turns on it, then naturally it contributes no segments to
    $\EdgeSet$. %
    \NotHandled{%
       \anna{Should we discuss the "consistency" for the segments of
          an edge, as each edge is part of two cells?} %
       \andy{Good point. I believe the way it is it is not consistent
          because segments $\seg_i$ are treated differently depending
          on whether they cross $e$ upwards or downwards. I believe
          that these kind of issues could be fixed as follows: we do
          not define the candidate segments globally (i.e., a
          collection for each edge) but argue by the pigeon hole
          principle that if $\pi$ has too many turns then there must
          be one edge $e$ that is crossed very often. Then we do the
          above construction only for this edge $e$ and then ignore
          directions of the segments $\seg_i$ crossing $e$. I believe
          that then we have to make $M$ bigger in order to guarantee
          that there is a cheap shortcut, but it would simplify the
          proof IMO. What do you think?%
       }%
       \anna{I am going to fix the proof} %
       \sariel{I agree what is currently there is a bit hand-wavy, but
          I am too lazy to fix it...}
    }

    Consider any candidate segment $\seg$ obtained while considering a
    cell $\GCell$ together with its edge $\edge$, and observe that any
    block of $\Opt$ that is cut by $\seg$ must end in $\GCell$, as the
    turn of the path $\pi$ corresponding to the segment $\seg$
    prevents any block of $\Opt$ crossing $\seg$ to continue to the
    next cell. As such, every block of $\Opt$ is cut by at most two
    segments from the set $\EdgeSet$.
    
    The size of $\EdgeSet$ is equal to the number of turns in $\pi$
    (i.e., $M$).  We defined the \emph{weight} of a segment
    $\seg \in \CandidAll$, denoted by $\weightX{\seg}$, as the total
    weight of blocks of $\Opt$ crossed by $\seg$. By the above, we
    have that
    $\sum_{\seg \in \CandidAll} \weightX{\seg} \leq 2 \weightX{\Opt}$.
    As such, there is a segment in $\EdgeSet$ of weight at most
    $2\weightX{\Opt} /M$, as claimed.
\end{proof}

From the choice of $\shortcut$ we know that one endpoint of
$\shortcut$ lies on some segment
$\seg_i \in \brc{\seg_1, \ldots, \seg_M}$, and another one on some
segment
$\segA \in \LX \cup \LY \cup \brc{\seg_1, \ldots, \seg_{i-1}}$.
\begin{compactenum}[(A)]
    \item If $\segA \in \brc{\seg_1, \ldots, \seg_{i-1}}$, then adding
    $\shortcut$ to the path will create a cycle. We add the segments
    $\brc{\seg_1, \ldots, \seg_{i'}}$ such that $\seg_{i'}$ is the
    first segment that is part of the cycle. Then we add the portion
    of the segments that form the cycle itself (merging added segments
    that are on the same line and sharing endpoints if necessary). In
    particular, possibly only a part of $\seg_i$ will be added to the
    cycle.
    \item If $\segA \in \LX \cup \LY$, then we add the portion of the
    path till the shortcut, and the shortcut itself, to $\LY$. Again,
    possibly only a part of $\seg_i$ will be added to the cycle.
\end{compactenum}

\bigskip%
This completes the description of the algorithm for computing the
decomposition. The resulting set of segments is denoted by
$\LSet = \LX \cup \LY$.

\subsection{Analyzing the structure of the resulting %
   partition}

Here we prove that the construction above partitions the bounding box
into faces with a nice structure (i.e., \trail{}s and \ring{}s, see
below), the number of resulting faces is small, and each face has low
complexity. This requires quite a bit of care, and the result is
summarized in \lemrefpage{summary} -- the casual reader might want to
skip the details on the first reading.

\begin{defn}
    \deflab{trail:cycle}%
    A rectilinear polygon $\tpoly$ is an \emphi{$L$-shape} if its
    boundary has exactly six edges.

    A rectilinear polygon $\tpoly$ with coordinates in
    $\IntRange{N}^2$ is a \emphi{narrow polygon} if it does not
    contain any vertex of the grid $\TGrid$ in its interior, and if
    for any grid cell $\GCell$ of $\TGrid$, and any connected
    component $\CC$ of $\tpoly \cap \GCell$ (i) is either a rectangle
    or an $L$-shape, and (ii) $\CC$ intersects at most two edges of
    $\GCell$.

    A narrow polygon is a \emphi{\trail} if it has no holes (i.e., it
    is homotopic to a path). A narrow polygon is a \emphi{\ring} if it
    has a single hole.
\end{defn}



\subsubsection{Basic properties of the construction}

The construction immediately implies the following.

\begin{proposition}%
    \proplab{l-size}%
    The set $\LX$ consists of at most $16({1}/{\delta})^{2}+4$
    segments.
\end{proposition}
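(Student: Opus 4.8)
The plan is to establish the bound by a direct count of the segments that the construction of \secref{Z:construction} inserts into $\LX$. First I would account for the four segments forming the boundary of the input square $[0,N]\times[0,N]$, with which $\LX$ is initialized. Every other segment of $\LX$ is added while processing the grid cells of $\TGrid$, so it suffices to bound their number per grid cell. Since $\TGrid$ consists of $(1/\delta)^2$ cells, and each cell $\GCell$ is a square with exactly four edges (bottom, top, left, right), the whole count reduces to bounding the number of segments added while processing a single edge $\edge$ of a single cell $\GCell$.

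For a fixed edge $\edge$ of a fixed cell $\GCell$, the construction adds at most one admissible segment in step~(A) (the one of smallest $x$-coordinate, or the analogous coordinate for horizontal segments), at most one in step~(B) (the one of largest such coordinate), and at most two in step~(C) (the extremal \reach segments for $\edge$ in $\GCell$, one of smallest and one of largest coordinate; if there is a unique \reach segment, only one is added, and if no admissible segment exists at all, none are added). Hence at most $1+1+2 = 4$ segments are added per edge. Multiplying by the four edges of a cell and by the $(1/\delta)^2$ cells, at most $16(1/\delta)^2$ segments are added beyond the four frame segments, giving $\cardin{\LX} \leq 16(1/\delta)^2 + 4$.

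Since we only want an upper bound, nothing delicate arises: possible coincidences between segments added in different steps (for instance, if the smallest-coordinate admissible segment of step~(A) is also the smallest-coordinate \reach segment of step~(C)) only decrease the true count, and edges shared between two adjacent cells are simply counted from both sides, again harmlessly. The only point worth stating explicitly is that step~(C) may contribute two segments rather than one. Consequently there is no substantive obstacle, and the claimed bound follows immediately.
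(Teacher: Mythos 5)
Your count is correct and is exactly the argument the paper intends: the paper gives no explicit proof, stating only that ``the construction immediately implies'' the bound, and your accounting of $4$ frame segments plus at most $1+1+2=4$ segments per edge, $4$ edges per cell, and $(1/\delta)^2$ cells is the immediate implication meant. Nothing further is needed.
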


\begin{defn} 
    \deflab{nicely:connected}%
    A set of segments $\LSet$ is \emphi{nicely connected} if (i) no
    pair of segments of $\LSet$ intersect in their interior, and (ii)
    for any endpoint $\pnt$ of a segment $\seg \in \LSet$ there is a
    segment $\segA \in\LSet$ perpendicular to $\seg$, such that
    $\seg \cap \segA = \pnt$.
\end{defn}

\begin{lemma}
    \lemlab{l:e:properties}%
    The set of segments $\LSet = \LX \cup \LY$ satisfies the following
    properties.
    \begin{compactenum}[\quad(A)]
        \item%
        \itemlab{nicely}%
        $\LSet$ is nicely connected.

        \item We have $\cardin{\LY}\leq \constAmath$ and
        $\cardin{\LSet}\leq \constAmath$.
        
        \item Any segment $\seg \in \LY$ which cuts some blocks of
        $\Opt$ is contained in a single grid edge of $\TGrid$.
        
        \item The total weight of the blocks of $\Opt$ cut by segments
        of $\LSet$ is bounded by $\eps \weightX{\Opt}$.
        
        \item%
        \itemlab{y:z:cross}%
        Every segment of $\LX$ that intersect the interior of the
        square $[0,N]^2$ crosses some grid line of $\TGrid$.  Every
        segment of $ \LY$ intersects some grid line of $\TGrid$.
    \end{compactenum}
\end{lemma}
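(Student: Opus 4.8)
The plan is to read off each property from the stage of the construction that produces it, dispatching the three direct properties --- \itemref{y:z:cross}, (C), and (D) --- first and then concentrating on \itemref{nicely}, where essentially all of the work lies. For \itemref{y:z:cross}: every non-boundary segment of $\LX$ is admissible and hence has length strictly larger than $\GWidth$, the side length of a cell of $\TGrid$, so it must cross a grid line, while the four boundary segments do not meet the interior of $[0,N]^{2}$; every path segment $\seg_{i}$ has its two endpoints in distinct cells of $\TGrid$ (by construction, using \lemref{line-hits-block}), so it crosses a grid line; and every shortcut lies, by the construction in the proof of \lemref{cheap:shortcut}, inside a single grid edge of $\TGrid$. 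Since $\LY$ consists only of (portions of) path segments and shortcuts, this also yields (C): a path segment is a sub-segment of a maximal segment and therefore meets no block of $\Opt$ at all, so a segment of $\LY$ that cuts a block of $\Opt$ must be a shortcut, which lies in one grid edge. For (D): no segment of $\LX$ cuts a block of $\Opt$ (admissible segments are disjoint from all blocks of $\BSet \supseteq \Opt$, and the boundary segments meet no open block), so the only segments of $\LSet$ cutting blocks of $\Opt$ are shortcuts; there is at most one per path, a path is built only from a loose endpoint of a segment of $\LX$, and by \propref{l-size} there are at most $2\cdot 16(1/\delta)^{2}=32/\delta^{2}$ such endpoints, so by \lemref{cheap:shortcut} the total weight cut is at most $(32/\delta^{2})(2\weightX{\Opt}/M)=\eps\weightX{\Opt}$, using $M=64/(\eps\delta^{2})$ from \Eqref{M:value}.

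For the cardinality bound (B): by \propref{l-size} we have $\cardin{\LX}=O(1/\delta^{2})$; paths are spawned only from the $O(1/\delta^{2})$ endpoints of segments of $\LX$ --- never from endpoints of segments already placed in $\LY$ --- and each contributes at most $M$ segments $\seg_{1},\dots,\seg_{M}$ together with at most one shortcut, so $\cardin{\LY}=O(M/\delta^{2})=O\pth{1/(\eps\delta^{4})}\le\constAmath$ and hence $\cardin{\LSet}\le\constAmath$. The one nontrivial point is that each path in fact terminates within $M$ steps, which is exactly the role of \lemref{cheap:shortcut}: if the walk has not closed up after $M$ segments, the shortcut closes it at once.

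The remaining property \itemref{nicely} is the main obstacle, and it is where the case analysis lives; we verify the two clauses of \defref{nicely:connected}. No two segments of $\LSet$ cross in their interiors: the segments of $\LX$ are pairwise non-crossing (admissibility forbids a new segment from crossing one already in $\LX$); each path segment $\seg_{i}$ is a sub-segment of a segment maximal with respect to $\LX\cup\LY\cup\Opt$, so it can only overlap an already-placed segment collinearly, and case \itemref{A} of the path construction resolves such an overlap by replacing the old segment with $\seg_{i}$; and shortcuts cross no segment of $\SetX$ by the stated properties of $\EdgeSet$. Every endpoint hits a perpendicular segment: an endpoint $\pnt_{0}$ of a segment $\seg_{0}\in\LX$ either already hits one, or is loose, in which case \lemref{line-hits-block} guarantees that $\seg_{0}$ hits a perpendicular block $\Block$ at $\pnt_{0}$ with one end of $\Block$ inside the cell containing $\pnt_{0}$ and one end outside, so the first path segment $\seg_{1}$, running along the long edge of $\Block$, is perpendicular to $\seg_{0}$ and meets it exactly at $\pnt_{0}$; an interior endpoint $\pnt_{j}=\seg_{j}\cap\seg_{j+1}$ of a path is fine since $\seg_{j+1}$ follows the long edge of the block hit by $\seg_{j}$ and so is perpendicular to $\seg_{j}$; and the terminal endpoint of a path --- and, in the long-path case, the two endpoints of the shortcut --- land, by inspecting the terminating cases \itemref{A}, \itemref{B} and the shortcut/cycle cases, in the interior of a perpendicular segment that is itself added to $\LSet$. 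Confirming this last point in each of the finitely many terminating cases --- in particular that a path endpoint is always pinned against a perpendicular segment and never left dangling on a block boundary --- is where the bookkeeping concentrates; everything else follows directly from the construction.
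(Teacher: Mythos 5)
Your proposal is correct and follows essentially the same route as the paper's proof: property (E) from the lengths of $\LX$-segments and the structure of path segments and shortcuts, (C) and (D) by reducing to the shortcuts and combining \propref{l-size} with \lemref{cheap:shortcut} and the choice of $M$ in \Eqref{M:value}, (B) by counting at most $M+1$ segments per loose endpoint, and (A) by checking both clauses of \defref{nicely:connected} directly from the construction. Your treatment of (A) is somewhat more detailed than the paper's, but the underlying argument is identical.
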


\begin{proof}
    (A) By construction, no two segments from $\LSet$ overlap or
    intersect in their interior.  For each endpoint $\pnt$ of a
    segment $\seg \in \LX$ which does not hit a perpendicular segment
    from $\LSet$ we added a perpendicular segment touching $\pnt$ to
    the set $\LY$. The path of segments connecting $\pnt$ with a
    segment from $\LSet$ is constructed in such a way, that each
    segment added to $\LY$ has both endpoints touching perpendicular
    segments from $\LSet$. If a segment from $\LY$ gets extended, it
    is extended in such a way that the new endpoint touches a
    perpendicular segment from $\LSet$. Thus, the set of segments
    $\LSet$ is nicely connected.

    \smallskip%
    (B) By \propref{l-size}, $\cardin{\LX} \leq 16(1 /\delta)^{2}+4$.
    For each endpoint of a segment of $\LX$, except for the four
    segments forming the boundary of the input square, we added at
    most $M+1 = O(1/(\eps \delta^{2}))$ segments to the set $\LY$,
    implying $\cardin{\LY} = O\pth{ 1/(\eps \delta^4)}$ and
    $\cardin{\LSet} = O\pth{ 1/(\eps \delta^4)}$.
    
    \smallskip%
    (C) The only segments of $\LSet$ which can cut blocks of $\Opt$
    are the shortcuts, and each such block is contained in some edge
    of $\TGrid$ (see construction of the shortcuts in the proof of
    \lemref{cheap:shortcut}).

    \smallskip%
    (D) By Property (C), we only need to bound the total weight of the
    shortcuts within $\LSet$. By \lemref{cheap:shortcut}, each
    shortcut has a weight of at most $2 \weightX{\Opt}/ M$. As each
    shortcut has been generated while creating a path from an endpoint
    of a segment from $\LX$, and the $4$ segments on the boundary do
    not get extended, there are at most $2 (\cardin{\LX} - 4)$
    shortcuts. The total weight of the blocks of $\Opt$ which have
    been cut is therefore bounded by
    \begin{math}
        (4 (\cardin{\LX}-4) /M) \weightX{\Opt}%
        \leq %
        \pth{\Bigl. 4 \pth{ \bigl. 16({1}/{\delta})^{2}} \cdot \eps
           \delta^{2} / 64} \weightX{\Opt} %
        \leq%
        \eps \weightX{\Opt},
    \end{math}
    by \propref{l-size} and \Eqrefpage{M:value}.

    \smallskip%
    (E) All the segments of $\LX$ are long, implying the first part of
    this claim. By the construction of the segments of $\LY$, all
    shortcuts lie on a single grid edge of $\TGrid$, and all other
    edges cross or intersect the boundary of the grid cell that
    contains one of their endpoints in its interior.
\end{proof}

\paragraph{On the structure of the connections between segments of
   $\LSet$}


\begin{lemma}
    \lemlab{b:horizon}%
    Consider a segment $\seg \in \LSet$, and let $\GCell$ be a grid
    cell such that $\seg$ is a \reach segment (not necessarily
    extremal) for an edge $\edge$ of $\GCell$, and $\seg$ does not
    cross $\GCell$, see \secrefpage{Z:construction}. Let
    $\pnt \in \GCell$ be an endpoint of $\seg$. Then, there exists a
    segment $\segA \in \LX$ perpendicular to $\seg$, such that
    \begin{inparaenum}[(i)]
        \item $\seg \cap \segA = \pnt$,
        \item $\segA$ does not end at $\pnt$, and
        \item $\segA$ is an extremal \reach segment for an edge
        $\edgeB$ of $\GCell$ perpendicular to $\edge$.
    \end{inparaenum}
\end{lemma}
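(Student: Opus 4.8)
The plan is to follow the template of \lemref{line-hits-block}: I want to show that a reach segment forced to terminate in the interior of $\GCell$ is blocked there by a large block of $\Opt$ that straddles $\GCell$, and that the long edge of that block along which the termination happens is exactly an extremal reach segment of $\GCell$ for the side perpendicular to $\edge$. First I would normalise by a rotation or reflection so that $\seg$ is vertical, $\edge$ is the bottom edge of $\GCell$, and $\pnt$ is the upper endpoint of $\seg$. Being a reach segment, $\seg$ is in particular a maximal admissible segment (\secref{Z:construction}); and since $\lenX{\seg} > \GWidth$ while $\GCell$ has height $\GWidth$, the lower endpoint of $\seg$ lies strictly below $\edge$, hence outside $\GCell$. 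I would then argue that $\seg$ cannot be extended upward past $\pnt$ for a single reason: at the moment the vertical segments of $\LX$ are selected, $\LX$ consists only of vertical segments together with the four frame edges, two vertical segments cannot cross, and a frame edge cannot stop $\seg$ at $\pnt$ (that would force $\pnt$ onto the boundary of the grid, incompatible with $\pnt \in \GCell$ and $\seg$ not crossing $\GCell$). Hence extending $\seg$ past $\pnt$ enters the interior of some block $\Block \in \Opt$, i.e., $\seg$ hits $\Block$ at $\pnt$, with $\pnt$ on a horizontal edge of $\Block$.

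Next I would pin down $\Block$. A short argument from the maximality of the reach of $\seg$ rules out that $\Block$ is a thin vertical block: were it one, its width being $1 < \GWidth$, a vertical admissible segment crossing $\edge$ immediately beside $\Block$ would be unobstructed by $\Block$ and would therefore reach strictly above $\pnt$, contradicting that $\seg$ realises the maximal reach over $\edge$ in $\GCell$. So $\Block$ is a horizontal block and $\pnt$ lies on its lower, long edge. Invoking (the argument of) \lemref{line-hits-block} then gives $\pnt \in \interX{\GCell}$ and that exactly one short end of $\Block$ lies inside $\GCell$; consequently the lower edge of $\Block$ crosses exactly one side edge $\edgeB$ of $\GCell$, and $\edgeB$ is perpendicular to $\edge$. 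Assume without loss of generality that $\edgeB$ is the right edge of $\GCell$ and the left end of $\Block$ lies inside $\GCell$.

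It remains to exhibit $\segA$. I would take $\segA$ to be an extremal --- say, the lowest --- reach segment of $\GCell$ for the side $\edgeB$; such a segment exists and belongs to $\LX$ by construction. The point is to show that this $\segA$ runs along the lower long edge of $\Block$, hence passes through $\pnt$. For that I would compare reaches height by height, using $\seg$ as a barrier: at any height strictly between $y(\edge)$ and $y(\pnt)$, an admissible horizontal segment crossing $\edgeB$ would have to cross the vertical segment $\seg \in \LX$, so it can reach left only as far as the $x$-coordinate of $\seg$, which coincides with that of $\pnt$; at heights interior to $\Block$ no admissible segment crosses $\edgeB$ at all; and the heights lying above $\Block$ cannot give a larger reach into $\GCell$ than the lower edge of $\Block$ already does. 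Granting this, the maximal horizontal segment through the lower edge of $\Block$ realises the reach for $\edgeB$, sits at height $y(\pnt)$, and is the lowest such segment --- so it is $\segA$. It passes through $\pnt$, it extends to the right past $\edgeB$ and to the left at least to the left end of $\Block$, which lies strictly to the left of $\pnt$; hence $\pnt \in \interX{\segA}$ and $\segA$ does not end at $\pnt$. Finally $\seg \cap \segA = \brc{\pnt}$, since $\seg$, apart from the point $\pnt$, lies strictly below height $y(\pnt)$. These are exactly the three asserted properties.

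The step I expect to be the main obstacle is precisely this reach comparison --- in particular ruling out that some admissible horizontal segment lying above the blocking block $\Block$ reaches farther into $\GCell$ over $\edgeB$ than the lower edge of $\Block$ does, so that an extremal reach segment is indeed forced through $\pnt$. Making this rigorous seems to need a careful combination of the maximal upward reach of $\seg$, the fact that $\Block$ has length exceeding $\GWidth$ with one short end inside $\GCell$, and the order in which the vertical and horizontal segments of $\LX$ are chosen. A secondary nuisance is the handling of degenerate placements --- where $\pnt$ is a corner of $\Block$ or of $\GCell$, or where $\seg$ already meets a horizontal segment of $\LX$ in its interior at $\pnt$ --- in which last case that segment should be identified with $\segA$ by the same argument.
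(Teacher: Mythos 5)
There is a genuine gap, and it is exactly the step you flagged as the main obstacle: nothing in your argument prevents an admissible horizontal segment entering $\GCell$ through $\edgeB$ \emph{above} the blocking block $\Block$ from reaching farther into $\GCell$ than the lower edge of $\Block$ does, in which case the extremal \reach segment for $\edgeB$ would not pass through $\pnt$ at all. The paper closes this with a different bookkeeping. It does not chase the blocking block; it invokes \lemref{l:e:properties}~\itemref{nicely} ($\LSet$ is nicely connected) to produce a perpendicular segment $\segA \in \LSet$ with $\seg \cap \segA = \pnt$, and considers the maximal segment $\segMax \supseteq \segA$ avoiding all blocks of $\Opt$ and all segments of $\LX$. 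If $\segMax$ crosses $\GCell$, it is immediately the bottom-most \reach segment for $\edgeB$ and hence lies in $\LX$. Otherwise $\segMax$ terminates inside $\GCell$ by hitting a long vertical object (a segment of $\LX$ or a block of $\Opt$); this object cannot meet the bottom edge $\edge$, since that would yield a long segment crossing $\edge$ and reaching higher than $\seg$, contradicting that $\seg$ is a \reach segment for $\edge$ --- so, being long, it must cross the \emph{top} edge of $\GCell$. That single observation is what resolves your case (c): the same object blocks every admissible segment entering through $\edgeB$ above $\segMax$, so none reaches farther into $\GCell$ than $\segMax$, while any segment entering below $\segMax$ reaches strictly less far because it cannot cross $\seg$ and $\segMax$ extends beyond $\seg$. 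Hence $\segMax$ is the bottom-most \reach segment for $\edgeB$, so $\segMax \in \LX$ and, since segments of $\LSet$ do not overlap, $\segA = \segMax$.

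Two further problems. First, your reduction to ``$\seg$ is stopped at $\pnt$ by a block'' silently assumes that $\seg$ is a vertical segment created during the vertical phase of the construction of $\LX$; the lemma is stated for $\seg \in \LSet$ and is applied (in \lemref{faces-line-continuation}) to segments that may lie in $\LY$ or be stopped at $\pnt$ by a perpendicular segment of $\LX$. You defer this to a ``degenerate'' case to be handled ``by the same argument'', but it is not: your entire derivation hinges on having a block $\Block$ whose lower edge carries the \reach segment, and your barrier argument (``an admissible horizontal segment below $\pnt$ cannot cross $\seg$'') also presupposes $\seg \in \LX$, since admissibility only forbids crossing segments of $\LX$. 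Starting from nice-connectivity treats all of these cases uniformly (the subcase $\segA \in \LY$ is disposed of by noting that $\segA$ is not a shortcut, as $\pnt \in \interX{\GCell}$, so $\segA$ runs along a long block edge and $\segMax$ is long). Second, your argument ruling out a thin vertical blocking block is not sound as written --- the admissible segment ``immediately beside $\Block$'' need not exist, need not cross $\edge$, and need not reach above $\pnt$ --- though the conclusion holds trivially for another reason: a vertical line with integer coordinate cannot cut a block of width one.
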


\begin{proof}
    We assume that $\pnt \in \interX{\GCell}$, since the case where it
    is on the boundary of the grid cell can be handled in a similar
    fashion.

    \parpic[r]{%
       \begin{minipage}{0.23\linewidth}%
           \hfill
           \IncludeGraphics{figs/in_out}%
           \captionof{figure}{}%
           \figlab{sticking:in:line}
       \end{minipage}%
    }%

    By \lemref{l:e:properties}, $\LSet$ is nicely connected, which
    implies that there is a segment $\segA \in \LSet$ perpendicular to
    $\seg$, such that $\seg \cap \segA = \pnt$.  Without loss of
    generality, assume that $\edge$ is the bottom edge of $\GCell$.
    Let $\segMax$ be a maximal segment containing $\segA$ which does
    not intersect any blocks of $\Opt$ or segments of $\LX$.  We claim
    that $\segMax$ is the bottom-most \reach segment for an edge
    $\edgeB$ of $\GCell$, see \figref{sticking:in:line}. As $\seg$ is
    a \reach segment for $\edge$ and $\seg$ does not cross $\GCell$
    (i.e. $\seg$ does not touch the edge of $\GCell$ parallel to
    $\edge$), $\seg$ cannot be extended at $\pnt$. Either $\seg$ hits
    $\segA$ at $\pnt$, or $\seg$ hits a perpendicular block at
    $\pnt$. In either case $\segMax$ does not end at $\pnt$.
    

    If $\segA \in \LX$, then $\segA$ is long and so is $\segMax$
    (since all the segments of $\LX$ are long). If $\segA \in \LY$
    then $\segA$ cannot be a shortcut, since all shortcuts lie on the
    boundary of grid cells, and $\pnt$ is in the interior of
    $\GCell$. Now, by construction, $\segA$ lies on a long edge of a
    block, and $\segMax$ is at least as long as a block.
    
    The segment $\segMax$ is not contained in $\GCell$, i.e., it
    intersects an edge $\edgeB$ of $\GCell$ perpendicular to $\edge$
    (see \figref{sticking:in:line}).  If $\segMax$ crosses $\GCell$,
    then it is the bottom-most segment intersecting $\edgeB$ and
    maximizing the length of the intersection with $\GCell$. As such,
    by construction, it would be in $\LX$, thus implying the lemma.

    Otherwise, $\segMax$ does not cross $\GCell$.  The segment
    $\segMax$ ends in $\GCell$ by hitting a perpendicular segment from
    $\LX$ or a perpendicular block. This segment or block does not
    intersect the bottom boundary of $\GCell$, as it would yield a
    long segment crossing $\edge$ which reaches further than $\seg$,
    which gives a contradiction, as $\seg$ is a \reach segment for
    $\edge$, see step \itemrefpage{l:y:C} in \secref{Z:construction}.
    The segment or block hit by $\segMax$ crosses the top edge of
    $\GCell$ and does not intersect any segments from $\LSet$ inside
    $\GCell$.  Therefore any segment $\seg_t \in \LSet$ which
    intersects $\edgeB$ above $\segMax \cap \edge'$ satisfies
    \begin{math}
        \lenX{\seg_t \cap \GCell} \leq \lenX{\segMax \cap \GCell}.
    \end{math}
    Similarly, let $\seg_b \in \LSet$ be a segment which intersects
    $\edgeB$ below $\segMax \cap \edge'$.  As $\seg_b$ cannot cross
    $\seg$, and $\segMax$ extends beyond $\seg$, we get
    \begin{math}
        \lenX{\seg_b \cap \GCell} < \lenX{\segMax \cap \GCell}.
    \end{math}

    Thus, $\segMax$ is the bottom-most long segment maximizing the
    length of the intersection with $\GCell$, and so it is the
    bottom-most \reach segment for $\edgeB$. We get that
    $\segMax \in \LX$, and so $\segA = \segMax$.
\end{proof}


\begin{figure}[t]
    \centerline{%
       \begin{tabular}{cccc}
         \IncludeGraphics[page=1]{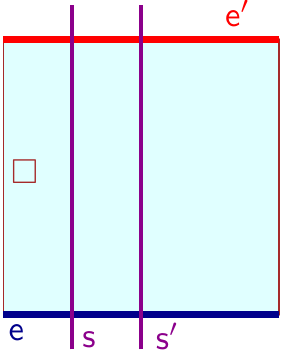}%
         \quad
         &%
           \quad
           \IncludeGraphics[page=2]{figs/entering}%
           \quad
         &%
           \quad
           \IncludeGraphics[page=3]{figs/entering}%
           \quad
         &%
           \quad
           \IncludeGraphics[page=5]{figs/entering}%
         \\
         (A)&(B)&(C) &(D)
       \end{tabular}%
    }
    \caption{Proof of \lemref{faces-line-continuation}.}
    \figlab{f:l:c}
\end{figure}

\begin{lemma}%
    \lemlab{faces-line-continuation}%
    Let $\GCell$ be a grid cell, and let $\edge$ be an edge of
    $\GCell$. Let $\seg, \seg' \in \LSet$ be two segments that cross
    the edge $\edge$, such that there is no segment $\seg' \in \LSet$
    which intersects $\edge$ between these two crossings. Then, there
    is an edge $\edge' \neq \edge$ of $\GCell$ and segments
    $\segA,\segA' \in \LSet$ intersecting $\edgeB$, such that
    $\seg \cap \segA \neq \emptyset$ and
    $\seg' \cap \segA' \neq \emptyset$.
\end{lemma}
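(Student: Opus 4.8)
The plan is a boundary‑tracing argument on the face of the arrangement of $\LSet$ that sits just above $\edge$ between $\seg$ and $\seg'$. Two ingredients drive it: that $\LSet$ is \emph{nicely connected} (\lemref{l:e:properties}\itemref{nicely}), and the following \emph{containment fact}: no segment of $\LSet$ that meets the interior of a grid cell is contained in the closure of that cell. The containment fact is immediate from the construction: a segment of $\LX$ is either one of the four frame segments (length $N>\GWidth$) or a long segment (length $>\GWidth$), so, being axis‑parallel, it cannot fit inside a $\GWidth\times\GWidth$ closed cell; a segment of $\LY$ is either a piece of a constructed path, whose two endpoints lie in two distinct grid cells (shown inside the proof of \lemref{cheap:shortcut}), or a shortcut, which lies on a grid line and hence does not meet any cell interior.

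\textbf{Setup.} Assume without loss of generality that $\edge$ is the bottom edge of $\GCell$, so that $\seg$ and $\seg'$ are vertical, say with $\seg$ to the left of $\seg'$, and, by hypothesis, no segment of $\LSet$ meets $\edge$ strictly between $a:=\seg\cap\edge$ and $a':=\seg'\cap\edge$. Let $\edge_0\subseteq\edge$ be the subsegment from $a$ to $a'$, and let $R$ be the face of the arrangement of $\LSet$ restricted to $\clX{\GCell}$ that is incident to $\edge_0$ from above; $R$ is bounded on the left by a subsegment of $\seg$ issuing from $a$ and on the right by a subsegment of $\seg'$ issuing from $a'$.

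\textbf{Reaching an edge from each side.} Trace $\partial R$ starting at $a$; since $R$ lies above $\edge$ and to the right of $\seg$, the trace first runs upward along the right side of $\seg$. Either (i) it reaches a point of $\seg$ on $\partial\GCell$, which, lying strictly above $\edge$ with $\seg$ vertical, lies on the top edge $\edgeB$ of $\GCell$, and then $\segA:=\seg$ intersects $\edgeB$ and meets $\seg$ trivially; or (ii) the trace leaves $\seg$ at a point $q\in\interX{\GCell}$ strictly above $a$ — either the upper endpoint of $\seg$, or a point where a segment of $\LSet$ abuts the right side of $\seg$ — turning onto a horizontal segment $\segA\in\LSet$ with $q\in\seg\cap\segA$ (such a perpendicular $\segA$ exists by nice connectedness, or is the abutting segment itself). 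In case (ii), $\segA$ meets $\interX{\GCell}$, so by the containment fact $\segA$ is not contained in $\clX{\GCell}$; since one endpoint of $\segA$ lies in $\clX{\GCell}$, $\segA$ must exit $\GCell$, and, being horizontal and strictly above $\edge$, it does so through the left or the right edge $\edgeB$ of $\GCell$, with $\edgeB\neq\edge$. Running the mirror‑image argument from $a'$ upward along $\seg'$ produces a segment $\segA'\in\LSet$ with $\seg'\cap\segA'\neq\emptyset$ that intersects an edge of $\GCell$ other than $\edge$. This already establishes the lemma in the form ``$\seg$ and $\seg'$ each connect, through themselves or through a single perpendicular segment of $\LSet$, to an edge of $\GCell$ distinct from $\edge$'', which is what the four configurations in \figref{f:l:c}\,(A)--(D) depict.

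\textbf{Main obstacle, and finishing.} The delicate point is to obtain a \emph{common} edge $\edgeB$ for both sides, as the statement asserts. For this I would continue the trace of $\partial R$ all the way from $a$ around to $a'$ and invoke pocket elimination: were this path to stay inside $\interX{\GCell}$ except at its endpoints $a,a'$, then every segment of $\LSet$ along it other than $\seg$ and $\seg'$ would have both endpoints in $\clX{\GCell}$, hence, being axis‑parallel, would be contained in $\clX{\GCell}$ — contradicting the containment fact — while the path cannot consist of $\seg$ and $\seg'$ alone since these are parallel and disjoint. Therefore $\partial R$ contains a subarc lying on an edge $\edgeB\neq\edge$ of $\GCell$, and the first segments of $\LSet$ met when leaving this subarc toward $a$ and toward $a'$ are the candidates for $\segA$ and $\segA'$. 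The crux of the case analysis — and the main obstacle — is verifying in each configuration that these first segments actually abut $\seg$, respectively $\seg'$, themselves (i.e.\ that the connecting chain inside $R$ has length at most two); this is exactly where one uses that $\seg$ and $\seg'$ are \emph{consecutive} crossings of $\edge$, which forbids any vertical segment of $\LSet$ strictly between them from descending to $\edge$ and so forces the boundary chain of $R$ to close up quickly. Degenerate incidences (segments meeting at a corner of $\GCell$, or ending exactly on $\partial\GCell$) are folded into the same analysis, handled as in the proof of \lemref{b:horizon}.
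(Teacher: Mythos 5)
Your first step --- that each of $\seg$ and $\seg'$ reaches, through itself or through a single perpendicular segment of $\LSet$ at its endpoint in $\GCell$, \emph{some} edge of $\GCell$ other than $\edge$ --- is correct and matches the opening of the paper's case analysis (it is nice connectedness plus \lemref{l:e:properties}~\itemref{y:z:cross}). But the content of the lemma is the \emph{common} edge $\edgeB$, and there your argument has a genuine gap. The configuration you must exclude is: $\seg$ ends inside $\GCell$ and its perpendicular partner exits only through the left edge, while $\seg'$ ends inside $\GCell$ and its partner exits only through the right edge. That configuration is perfectly consistent with nice connectedness, with your containment fact, and with $\seg,\seg'$ being consecutive crossings of $\edge$ --- consecutiveness only forbids segments of $\LSet$ from touching $\edge$ between the two crossing points; it does not prevent the two sides from escaping through opposite edges --- so no face-tracing or ``pocket elimination'' from those hypotheses alone can close this case: the abstract statement you are implicitly proving is false, and the lemma holds only because of additional structure of $\LSet$. (Your pocket-elimination step is also incorrect as written: a segment supporting an arc of $\partial R$ need not have its endpoints on $\partial R$ or in $\clX{\GCell}$; for instance a horizontal segment crossing the entire cell contributes an arc to $\partial R$ lying in $\interX{\GCell}$ while itself leaving the cell on both sides.)

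What the paper uses to kill the left/right configuration is construction-specific. After normalizing so that $\lenX{\seg\cap\GCell}\ge\lenX{\seg'\cap\GCell}$ and using that no segment meets $\edge$ between the two crossings, $\seg$ is a \reach segment for $\edge$; \lemref{b:horizon} then forces its partner $\segA$ to be an \emph{extremal} \reach segment of $\LX$ for the left edge that does not terminate at the contact point, and iterating \lemref{b:horizon} yields a chain $\segA\to\seg_T$ (top edge) $\to\seg_R$ (right edge) whose last \reach segment $\seg_R$ must come back and touch $\seg$, producing the common edge. Your proposal never invokes \lemref{b:horizon} or the \reach\ structure of $\LX$, and explicitly defers ``verifying that the connecting chain has length at most two'' to an unproven appeal to consecutiveness --- but that verification is precisely the lemma. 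The proof is therefore incomplete at its essential step.
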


\begin{proof}
    We assume w.l.o.g. that $\edge$ is the bottom edge of $\GCell$,
    $\seg$ is to the left of $\seg'$, and that
    \begin{math}
        \lenX{\seg \cap \GCell} \geq \lenX{\seg'\cap \GCell}.
    \end{math}
    There are now several possible cases.%
    \smallskip%
    \begin{compactenum}[(A)]
        \item \textbf{$\seg$ and $\seg'$ intersect the top edge of
           $\GCell$}: The claim holds for $\segA = \seg$ and
        $\segA' = \seg'$, see \figref{f:l:c} (A).
        
        \item \textbf{$\seg$ and $\seg'$ both have endpoints inside
           $\GCell$}: Let $\pnt$ and $\pnt'$ be the two endpoints of
        $\seg$ and $\seg'$ in $\GCell$, respectively. By
        \lemref{l:e:properties} \itemref{nicely}, there are two
        segments $\segA$ and $\segA'$ in $\LSet$ that are
        perpendicular to $\seg$ and $\seg'$, respectively, such that
        $\seg \cap \segA=\pnt$ and $\seg'\cap\segA' = \pnt'$.  The
        segments $\segA$ and $\segA'$ are not contained in
        $\interX{\GCell}$, by \lemref{l:e:properties}
        \itemref{y:z:cross}.  As such, each of them intersects an edge
        of $\GCell$.
        \begin{compactenum}
            \item If $\segA$ intersects the right edge of $\GCell$,
            then $\segA'$ also intersects the same edge, as otherwise,
            either $\segA'$ would cross $\seg$, or $\segA$ would cross
            $\seg'$. See \figref{f:l:c} (B).

            \item If $\segA'$ intersects the left edge of $\GCell$,
            then, by the assumption that
            \begin{math}
                \lenX{\seg \cap \GCell} \geq \lenX{\seg'\cap \GCell}
            \end{math}, we have that
            $\lenX{\seg \cap \GCell} = \lenX{\seg'\cap \GCell}$ and
            $\segA = \segA'$, implying the claim.  See \figref{f:l:c}
            (C).

            \item Otherwise, $\segA$ intersects only the left edge of
            $\GCell$, and $\segA'$ intersects only the right edge of
            $\GCell$, see \figref{f:l:c:D}.  As there are no edges in
            $\LSet$ intersecting $\edge$ between $\seg$ and $\seg'$,
            and by assumption
            \begin{math}
                \lenX{\seg \cap \GCell} \geq \lenX{\seg'\cap \GCell},
            \end{math}
            it follows that $\seg$ is a \reach segment of $\edge$.  By
            \lemref{b:horizon}, $\segA$ is a \reach segment for the
            left edge of $\GCell$ (specifically, the bottom-most
            \reach segment), and it does not end at $\pnt$.

            \smallskip%
            \parpic[r]{
                  \minipageW{\IncludeGraphics[page=4]{figs/entering}}%
                  {
                      \IncludeGraphics[page=4]{figs/entering}%
                      \captionof{figure}{}%
                      \figlab{f:l:c:D}%
                   }%
                }

            The segment $\segA$ does not touch the right edge of
            $\GCell$, as otherwise the claim holds. Now, by applying
            \lemref{b:horizon} to $\segA$, we have that $\segA$ hits a
            perpendicular \reach segment $\seg_T$ in $\GCell$. The
            segment $\seg_T$ does not intersect the bottom edge
            $\edge$ of $\GCell$, as otherwise
            $\lenX{\seg_T \cap \GCell} > \lenX{\seg \cap \GCell}$, and
            that would contradict $\seg$ being a \reach segment for
            $\edge$. Thus, $\seg_T$ must intersect the top edge of
            $\GCell$ and has an endpoint inside $\GCell$. Applying
            \lemref{b:horizon} to $\seg_T$ in turn, implies that it
            hits a perpendicular \reach segment $\seg_R$ that must
            intersect the right edge of $\GCell$.  If
            $\seg_R \neq \segA'$ then, by \lemref{b:horizon}, $\seg_R$
            hits another \reach segment that crosses $\edge$, but this
            \reach segment must be $\seg$, see \figref{f:l:c:D}.  The
            case that $\seg_R = \segA'$ follows verbatim by the same
            analysis, by observing that $\segA'$ does not hit $\seg'$
            (i.e., $\segA'$ hitting a segment $\bSeg$ implies that
            $\segA'$ endpoint is in the interior of $\bSeg$). For the
            pairs $\seg, \seg_R$ and $\seg', \segA'$ the claim now
            follows.
        \end{compactenum}
        
        \item \textbf{$\seg$ intersects the top edge of $\GCell$, and
           $\seg'$ does not}: By \lemref{l:e:properties}, there is a
        segment $\segA' \in \LSet$ that is perpendicular to $\seg'$ at
        its endpoint $\pnt' \in \GCell$, and furthermore $\segA'$ is
        not contained in $\interX{\GCell}$. As $\seg$ is to the left
        of $\seg'$, the segment $\segA'$ intersects the right edge
        $\edge'$ of $\GCell$ (see \figref{f:l:c} (D)). Let $\seg_R$ be
        a \reach segment for $\edge'$. Such a segment exists, as the
        maximal segment containing $\segA'$ is a candidate to be a
        \reach segment for $\edge'$.

        We claim that $\seg_R$ touches $\seg$. If $\seg_R$ crosses
        $\GCell$, then $\seg_R$ must touch $\seg$ (and either $\seg$
        or $\seg_R$ goes along an edge of $\GCell$). If $\seg_R$ does
        not cross $\GCell$, then, by \lemref{b:horizon}, $\seg_R$ hits
        a perpendicular \reach segment $\segB$ in $\GCell$.  The
        segment $\segB$ is a vertical \reach segment, and must be as
        long as $\seg$ inside $\GCell$; that is, it must cross
        $\GCell$. However, there is no segment in between $\seg$ and
        $\seg'$ crossing $\edge$, which implies that $\seg = \segB$.
        We conclude that $\seg_R$ intersects $\seg$, as desired.
    \end{compactenum}
\end{proof}

\subsubsection{Faces of the partition}

The segments of $\LSet$ subdivide the input square into a collection
of faces which are the connected components of
$\pbrc{0,N\bigr.}^2 \setminus \ULSet$ (as such, the faces are open
sets), where $\ULSet = \ds \cup^{}_{\seg \in \LSet} \seg$. Let
$\facesX{\LSet}$ denote the set of faces of this partition, and
$\facesPZ$ the set of faces that contain at least one block of $\Opt$.
Our purpose here is to prove that the number of resulting faces is
bounded by a constant and that each face is either a trail or a ring.

\paragraph{Inside a grid cell, faces are rectangles or $L$-shaped.}

\begin{observation}
    \obslab{narrow}%
    No face of $\facesPZ$ contains a vertex of $\TGrid$ in its
    interior.
\end{observation}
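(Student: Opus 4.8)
The plan is to argue by contradiction. Suppose some face $\face \in \facesPZ$ contains a grid vertex $v$ of $\TGrid$ in its interior. Since $\face \in \facesPZ$ there is a block $\Block \in \Opt$ with $\Block \subseteq \face$, and after the (harmless) rotation of the axes we may assume $\Block$ is horizontal, so that its width exceeds $\GWidth$. Because $v$ lies in the interior of $\face$, a whole disk $B(v,\varrho)$ is disjoint from $\ULSet$; in particular no segment of $\LSet$ is incident to $v$, and each of the (up to) four grid cells having $v$ as a corner meets $\face$ arbitrarily close to $v$. Moreover $\face$ is open and path-connected, so there is a rectilinear path $\gamma \subseteq \face$ from $v$ to the interior of $\Block$, and $\gamma$ avoids $\ULSet$.

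The core of the proof is a local analysis around $v$. First I would fix a grid cell $\GCell$ having $v$ as a corner that $\gamma$ enters immediately after leaving $v$ (WLOG $v$ is the bottom-left corner of $\GCell$), and let $K$ be the connected component of $\face \cap \interX{\GCell}$ whose closure contains $v$. The goal of this step is to show that $K$ is a rectangle or $L$-shape with $v$ on its boundary and contained in $\GCell$. To this end I would use that $\LSet$ is nicely connected and that every segment of $\LX$ meeting the interior of the square, and every segment of $\LY$, crosses a grid line of $\TGrid$ (\lemref{l:e:properties}\itemref{nicely} and \itemref{y:z:cross}); walking along the two edges of $\GCell$ incident to $v$, from $v$ out to the first segments of $\LSet$ that cross them, \lemref{faces-line-continuation} pins down which further edges of $\GCell$ those segments reach, while \lemref{b:horizon} and \lemref{line-hits-block} identify them as (extremal) \reach segments inserted by Steps \itemref{l:y:A}--\itemref{l:y:C} of the construction. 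The upshot is that $K$ cannot ``wrap around'' $v$: it touches at most the two edges of $\GCell$ incident to $v$, and is a rectangle or $L$-shape inside $\GCell$.

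Finally I would bring in the $\delta$-largeness of $\Block$. Its width exceeds $\GWidth$, so $\Block$ is contained in no grid cell; hence, tracing $\Block$ (which lies in $\face$ and therefore avoids $\ULSet$) from the cell where $\gamma$ meets it, and applying the local description of the faces inside every grid cell that $\Block$ traverses, I would conclude that along one of its long edges $\Block$ certifies an admissible segment of length $> \GWidth$ in some cell; then Steps \itemref{l:y:A}--\itemref{l:y:C} would force the construction to add a segment of $\LX$ that either intersects the interior of $\Block$ or lies between $v$ and $\Block$, each of which contradicts $\Block \subseteq \face$ and the existence of $\gamma$. I expect the main obstacle to be exactly the local step: pinning down the shape of $K$ and, in particular, ruling out ``pinwheel''-type arrangements of (possibly non-optimal) blocks of $\BSet$ around $v$ that a priori could leave a small pocket of $\face$ around $v$ untouched by $\LSet$. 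Handling this requires using the maximality of admissible segments and the non-crossing bookkeeping with earlier segments of $\LX$, together with \lemref{b:horizon}, most carefully.
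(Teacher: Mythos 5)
The paper records this statement as a bare observation with no proof, so there is nothing to match your argument against; judged on its own, your sketch has a genuine gap in both of its load-bearing steps. First, the plan to ``rule out pinwheel-type arrangements'' cannot succeed, because such arrangements are realizable with pairwise disjoint blocks of $\Opt$: taking $v$ as the origin and $\GWidth=8$, the four blocks $(-10,1)\times(1,2)$, $(1,2)\times(-1,10)$, $(-1,10)\times(-2,-1)$, $(-2,-1)\times(-10,1)$ are pairwise disjoint $\delta$-large blocks, and they force every maximal block-free segment through $v$ along either grid line to have length $2<\GWidth$, so no admissible segment of $\LX$ ever contains $v$. The observation survives not because the pinwheel is impossible, but because the construction then inserts the extremal admissible segments running along the inner long edges of the four blocks (at $x=\pm1$ and $y=\pm1$ in the example; each is long because the block it borders protects it from other blocks and from crossing segments along its entire length), and these four segments seal $v$ inside a face of width and height at most $\GWidth$, which is too small to contain any block --- so that face is not in $\facesPZ$. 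The correct argument is thus a dichotomy: either one of the two grid lines through $v$ carries through $v$ a maximal segment of length $>\GWidth$ avoiding all blocks of $\Opt$ and crossing no earlier segment, in which case it is the leftmost (resp.\ bottom-most) admissible segment for an edge of an incident cell --- any admissible segment for that edge at that extreme coordinate must pass through $v$ --- so $v\in\ULSet$, a contradiction; or $v$ is blocked within distance $\le\GWidth$ on all four sides and one must prove the enclosure just described. Your sketch contains neither horn.

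Second, the concluding step is not a valid contradiction even granting the local analysis. A segment of $\LX$ can never ``intersect the interior of $\Block$'' (admissibility explicitly forbids intersecting blocks), so that branch of your dichotomy is empty; and a single segment ``lying between $v$ and $\Block$'' contradicts nothing, since one segment does not separate two points of the plane --- to kill the path $\gamma$ you would need a closed curve of segments of $\LSet$ separating $v$ from $\Block$, or a segment covering $v$ itself. The local step (showing the component $K$ near $v$ is a rectangle or $L$-shape touching only the two edges of $\GCell$ at $v$, essentially \lemref{corridor-shapes}) is sound but does not by itself preclude $v$ from lying in a face of $\facesPZ$: four such components, one per incident cell, glue up around $v$ into a perfectly legal open neighborhood. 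The contradiction must come from the case analysis above, which is absent.
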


\begin{lemma}%
    \lemlab{corridor-shapes}%
    Consider a face $\face \in \facesPZ$ and let $\GCell$ be a grid
    cell with $\face \cap \GCell \neq \emptyset$.  Consider a
    connected component $\CC$ of $\face \cap \GCell$. Then
    $\interX{\CC}$ is the interior of a rectangle or the interior of
    an $L$-shape.  Also, $\CC$ has non-empty intersection with at most
    two edges of $\GCell$.
\end{lemma}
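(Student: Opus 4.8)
The plan is to read off the shape of $\CC$ from the arrangement that $\LSet$ induces inside $\GCell$, using the structural properties of $\LSet$ established in \lemref{l:e:properties}, \lemref{b:horizon}, and \lemref{faces-line-continuation}. First I would observe that no segment of $\LSet$ fits inside a single grid cell: the segments of $\LX$ are long (length strictly larger than $\GWidth$, the side of $\GCell$), each path segment of $\LY$ meets at least two grid cells by construction, and each shortcut lies along a grid edge of $\TGrid$. Hence every $\seg\in\LSet$ with $\seg\cap\interX{\GCell}\neq\emptyset$ meets $\GCell$ in an axis-parallel chord having at least one endpoint on the boundary of $\GCell$, and by \lemref{l:e:properties} (nicely connected, see \defref{nicely:connected}) the other endpoint, if it lies in $\interX{\GCell}$, is a point where $\seg$ abuts the relative interior of a perpendicular segment of $\LSet$; moreover no two of these chords cross. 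The same length bound shows that no cycle of $\LSet$ lies inside $\GCell$, so $\CC$ is simply connected.

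Next I would use \obsref{narrow}: since $\face\in\facesPZ$, no vertex of $\TGrid$ lies in the interior of $\face$, and in particular the four corners of $\GCell$ do not lie in $\interX{\face}$. From this I would deduce that $\CC$ does not reach a corner $v$ of $\GCell$ as a ``concave'' feature: if $\CC$ filled the whole quarter of a small disk about $v$ that lies inside $\GCell$, then, since only finitely many segments of $\LSet$ are present and none passes through $v$, some neighboring grid cell would have to carry an $\LSet$-segment separating $\face$ from $v$; tracing that segment with \lemref{b:horizon} and \lemref{faces-line-continuation} forces either $v\in\interX{\face}$ (contradicting \obsref{narrow}) or a convex corner of $\CC$ strictly before $v$. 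Consequently $\CC$ cannot touch two adjacent edges of $\GCell$, which already yields the ``at most two edges'' claim (if it touches two edges, they are opposite).

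Finally I would view $\interX{\CC}$ as a (simply connected) rectilinear polygon whose boundary alternates between sub-arcs of grid edges and sub-arcs of segments of $\LSet$, and classify its vertices: a corner of $\GCell$ (excluded by the previous step); a transversal crossing of an $\LSet$-segment with a grid edge (a $90^\circ$ vertex, since both are axis-parallel); a $T$-junction of $\LSet$, where a one-line check shows the incident faces turn by $90^\circ$ or $180^\circ$; or an $L$-junction of $\LSet$, i.e., a corner where two perpendicular segments of $\LSet$ share an endpoint, where exactly one of the two incident faces turns by $270^\circ$. Thus the reflex vertices of $\CC$ are exactly the $L$-junctions of $\LSet$ on the boundary of $\CC$, and it suffices to show that $\partial\CC$ carries at most one of them (since a rectilinear polygon with no holes and at most one reflex vertex is a rectangle or an $L$-shape). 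An $L$-junction of $\LSet$ is created only at a turn of a constructed path, where two consecutive path segments $\seg_j,\seg_{j+1}$ meet in the interior of a cell along the long edge of a block of $\Opt$ that ends there (\lemref{line-hits-block}, and the analysis in the proof of \lemref{cheap:shortcut}); I would argue that if $\partial\CC$ carried two such junctions one could route a monotone staircase through $\CC$ between them, and tracing $\LSet$ along that staircase with \lemref{b:horizon} and \lemref{faces-line-continuation} would again produce a grid vertex inside $\face$ or a block of $\Opt$ pinned into a corner of $\GCell$ and hence cut, contradicting \obsref{narrow} or $\face\in\facesPZ$.

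The hard part is this last case analysis, together with the corner argument of the second paragraph: turning the qualitative statements ``$\CC$ reaches a corner of $\GCell$'' and ``$\partial\CC$ has two $L$-junctions'' into explicit local configurations of $\LSet$ inside $\GCell$ and then selecting the right instances of \lemref{b:horizon} and \lemref{faces-line-continuation} (and \obsref{narrow}) to contradict them. By contrast, the length bound, the vertex classification, and the rectilinear-polygon arithmetic are routine bookkeeping.
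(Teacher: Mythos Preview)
Your central deduction in the second paragraph is wrong, and it breaks the rest of the argument. From ``no corner of $\GCell$ lies in $\interX{\face}$'' you conclude that $\CC$ cannot touch two \emph{adjacent} edges of $\GCell$, and hence that if it touches two edges they are opposite. This is false: the $L$-shaped case is precisely the case where $\CC$ meets two adjacent edges of $\GCell$. Concretely, $\CC$ can meet the bottom edge of $\GCell$ along an interval strictly inside that edge and meet the right edge of $\GCell$ along another interval strictly inside that edge, with the bottom-right corner of $\GCell$ lying outside $\CC$ (separated from $\CC$ by a segment of $\LSet$). \obsref{narrow} rules out the corner being \emph{in} $\face$; it does not rule out $\CC$ approaching two adjacent edges away from that corner. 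So your ``at most two edges, and opposite'' step is a non sequitur, and your reflex-vertex count downstream no longer has the input it needs.

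The paper's proof takes a different route that avoids this pitfall. Rather than classifying all vertices of $\CC$ abstractly, it uses that $\face\in\facesPZ$ to anchor the argument on an actual block $\Block\in\Opt$ crossing an edge $\edge$ of $\GCell$ inside $\CC$. The two nearest bounding segments $\seg,\seg'\in\LSet$ on either side of $\Block$ along $\edge$ are then fed into \lemref{faces-line-continuation} once, producing $\segA,\segA'$ that reach a single other edge $\edge'$ of $\GCell$; the rectangle case is $\edge'$ opposite $\edge$, and the $L$-shape is $\edge'$ adjacent to $\edge$. The ``at most two edges'' statement falls out of this construction directly. The residual case where $\CC$ meets no block is handled by observing that the clipped picture of $\LSet$ in $\GCell$ consists only of crossing chords and $L$-shaped chord pairs, and that any $270^\circ$ corner of a face in that picture forces a block of $\Opt$ adjacent to it. Your vertex-classification idea is in the spirit of this residual argument, but you try to run it as the whole proof without first exploiting the block, and you never pin down why there is at most one reflex corner; the sketch invoking \lemref{b:horizon} and \lemref{faces-line-continuation} to derive a contradiction from two $L$-junctions is not an argument yet.
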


\begin{proof}
    Consider the case when $\CC$ has non-empty intersection with some
    block $\Block \in \Opt$ contained in $\face$.
    Let $\edge$ be an edge of $\GCell$ such that
    $\edge \cap \interX{\Block} \neq \emptyset$, and assume
    w.l.o.g. that $\edge$ is the bottom edge of $\GCell$ (see
    \figref{corridor-in-cell-2}). Let $\seg, \seg' \in \LSet$ be
$L$-    segments which intersect $\GCell$ and intersect $\edge$ at some
    points $\pnt_\seg$ and $\pnt_{\seg'}$, respectively, such that
    $\pnt_\seg$ is to the left of $\edge \cap \interX{\Block}$,
    $\pnt_{\seg'}$ is to the right of $\edge \cap \interX{\Block}$,
    and no segment from $\LSet$ which intersects $\GCell$ touches
    $\edge$ in between $\pnt_\seg$ and $\pnt_{\seg'}$. Such segments
    exist, as no segment from $\LSet$ intersects $\edge$ inside
    $\edge \cap \interX{\Block}$, the leftmost long segment
    intersecting $\GCell$ and touching $\edge$ (which belongs to
    $\LX$) either contains the left edge of $\Block$ or is to the left
    of it, and the rightmost long segment intersecting $\GCell$ and
    touching $\edge$ (which also belongs to $\LX$) either contains the
    right edge of $\Block$ or is to the right of it.
    
    \begin{figure}
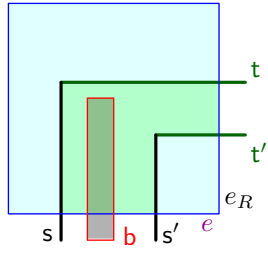

        \begin{centering}
            
            \centerline{%
               \IncludeGraphics{figs/12}%
            }%
            
            \par\end{centering}
        \caption{A connected component of a face
           $\face \in \FC_+(\LSet)$ within a grid cell (denoted by a
           shaded area) must have a simple shape, i.e., it is either a
           rectangle or an $L$-shape.}%
        \figlab{corridor-in-cell-2}
    \end{figure}
    
    Parts of the segments $\seg, \seg'$ lie on the boundary of
    $\CC$. Denote by $\segA$ and $\segA'$ the lines given by applying
    \lemref{faces-line-continuation} to $\seg$ and $\seg'$. If $\segA$
    and $\segA'$ both intersect the top edge of $\GCell$ then the
    claim follows and $\CC$ is a rectangle.
    Otherwise, assume w.l.o.g. that both $\seg$ and $\seg'$ intersect
    the right edge $\edge_R$ of $\GCell$, and assume that $\segA$ is
    the bottom-most segment touching $\seg$ and $\edge_{R}$ and
    $\segA'$ is the topmost segment touching $\seg'$ and
    $\edge_{R}$. From \lemref{l:e:properties} the set of lines $\LSet$
    is nicely connected, and by construction, all lines in $\LSet$
    with non-empty intersection with $\interX{\GCell}$ for some grid
    cell $\GCell$ touch the boundary of $\GCell$.  Hence, there can be
    no segment of $\LSet$ within $\GCell$ intersecting $\edge_{R}$
    between $\edge_{R}\cap\segA$ and $\edge_{R}\cap\segA'$, and the
    claim follows.

    \medskip

    The above proves the claim for any $\CC$ which has non-empty
    intersection with some block $\Block \in \Opt$ contained in
    $\face$.  However, there can potentially be a grid cell $\GCell$
    and a face $\face$ such that a connected component $\CC$ of
    $\face \cap \GCell$ intersects no blocks of $\Opt$.  The proof in
    this case follows by propagating the property to the adjacent grid
    cells of $\GCell$ intersecting $\face$: Consider a connected
    component $\CC'$ of $\face \cap \GCellA$, such that
    $\CC \cap \CC' \neq \emptyset$ and assume that the claim holds for
    $\CC$ (since $\face \in \facesPZ$ there must be one cell $\GCell$
    such that $\GCell \cap \face \ne \emptyset$).

    \parpic[r]{\IncludeGraphics{figs/bad_cells}}

    If $\CC'$ intersects a block of $\Opt$, then the claim follows by
    the above.  So assume that $\CC'$ does not intersect any block of
    $\Opt$. Consider the segments of $\LX$ clipped to
    $\interX{\GCellA}$.  \NotHandled{%
       \anna{There is only one pathological case where there are no
          such segments.} %
    }%
    Each such segment either crosses $\GCellA$, or it has a loose end
    inside $\GCellA$. In the second stage of the construction every
    such loose end is connected up to a path of segments, importantly
    by a segment that leaves the interior of the cell. Then, during
    the second stage, a path might end up hitting an existing
    segment. Thus, the set of segments of $\LSet$ clipped to the
    interior of $\GCellA$ is formed by the union of segments that
    cross the cell, and $L$-shaped curves with their endpoints on the
    boundary of $\GCellA$.  None of these curves cross each other
    inside $\GCellA$, although they might have a non-empty
    intersection. See figure on the right for an example.

    Consider a partition of $\GCellA$ that might be formed by such a
    collection of $L$-shaped curves. Consider a face $\CC'$ in such a
    partition of $\GCellA$ which is not a rectangle. Then it must have
    a corner $\pnt$, such that the angle inside $\CC'$ is $270$
    degrees.  The only way such a corner can be formed is because one
    of the segments adjacent to $\pnt$ hits a block of $\Opt$. But
    that implies that $\interX{\CC'}$ intersects a block of $\Opt$,
    and as such, by the above it is $L$-shaped.
\end{proof}

\paragraph{Shortcuts are anchored at segments visiting both cells.}

\begin{lemma}%
    \lemlab{l-on-grid-boundary}%
    Let $\GCell$ and $\GCellA$ be two neighboring grid cells and let
    $\edge = \GCell \cap \GCellA$. Let $\segA \subseteq \edge$ be a
    maximal segment of $\LSet$ contained in $\edge$, and assume that
    $\segA$ does not contain an endpoint of $\edge$. Then $\segA$ is
    incident with segments $\seg, \seg' \in \LSet$ (where possibly
    $\seg = \seg'$) such that $\seg$ intersects $\interX{\GCell}$ and
    $\seg'$ intersects $\interX{\GCellA}$.
\end{lemma}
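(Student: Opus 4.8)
The plan is to first pin down where the segments making up $\segA$ come from, and then to read the two required incident segments off the construction.

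First I would argue that $\segA$ is a chain of \emph{shortcut} segments. Every non-frame segment of $\LX$ is long, i.e.\ of length exceeding $\GWidth=\lenX{\edge}$, so it cannot be contained in $\edge$; the four frame segments lie on $\bd[0,N]^2$, which is disjoint from the interior grid edge $\edge$; and every path segment of $\LY$ has its two endpoints in two distinct grid cells, with at least one of them in the interior of a cell by \lemref{line-hits-block}, so it is not contained in $\edge$ either. Hence every segment of $\LSet$ contained in $\edge$ is one of the shortcut segments constructed in the proof of \lemref{cheap:shortcut}, and $\segA$ is a maximal chain of such shortcuts lying consecutively along $\edge$, with endpoints $\pnt_{1}$ and $\pnt_{2}$, both in $\interX{\edge}$ by hypothesis.

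Next I would use the fact, visible from the proof of \lemref{cheap:shortcut}, that each shortcut is a candidate segment whose endpoints are either endpoints of its grid edge or points at which the corresponding path $\Path$ crosses that grid edge transversally, through a path segment perpendicular to the edge. Since $\pnt_{1}\in\interX{\edge}$, it is not an endpoint of $\edge$, so it is such a transversal crossing point of some path; let $\bSeg$ be the path segment crossing $\edge$ there. Tracing through the cases of the shortcut post-processing (inserting the portion of the path up to the shortcut, or closing a cycle together with the tail of the path), the relevant part of that path is always inserted into $\LSet$, and I would check that either the retained portion of $\bSeg$ still contains $\pnt_{1}$ in its interior — so that it crosses $\edge$ and meets both $\interX{\GCell}$ and $\interX{\GCellA}$, whence $\seg=\seg'=\bSeg$ works — or that portion has $\pnt_{1}$ as an endpoint and lies in one of the two cells. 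Independently, by \lemref{l:e:properties} ($\LSet$ is nicely connected) there is a segment $\segB\in\LSet$ perpendicular to $\segA$ touching $\pnt_{1}$; as $\segB\perp\edge$ passes through $\pnt_{1}\in\interX{\edge}$, either $\pnt_{1}\in\interX{\segB}$, which again finishes the proof with $\seg=\seg'=\segB$, or $\segB$ ends at $\pnt_{1}$ and enters exactly one of $\interX{\GCell}$, $\interX{\GCellA}$. The symmetric statements hold at $\pnt_{2}$.

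It remains to handle the case where at both $\pnt_{1}$ and $\pnt_{2}$ the incident segments enter cell interiors but both enter the same cell, say $\interX{\GCell}$ (so the segment required for $\GCell$ is already in hand and no segment incident to $\segA$ enters $\interX{\GCellA}$). I would derive a contradiction from the topology of the partition: since $\segA$ is a maximal subsegment of $\edge$ covered by $\LSet$, the edge $\edge$ carries no segment of $\LSet$ just beyond $\pnt_{1}$ or $\pnt_{2}$, so the face $\face$ of $\facesX{\LSet}$ bordering $\segA$ from the $\GCellA$-side wraps around both $\pnt_{1}$ and $\pnt_{2}$ and straddles $\edge$, whereas the region of $\GCell$ directly below $\segA$ is cut off from $\face$ by the wall $\segA$. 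Applying \lemref{corridor-shapes} and \obsref{narrow} to the connected component of $\face\cap\GCellA$ — which must be a rectangle or an $L$-shape that touches $\edge$ and reaches past both $\pnt_{1}$ and $\pnt_{2}$ — rules this configuration out, because it forces one of the two incident perpendicular segments at $\pnt_{1},\pnt_{2}$ to leave $\edge$ upward into $\GCellA$ rather than downward into $\GCell$.

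The main obstacle is precisely this bookkeeping: verifying, case by case, which portions of the crossing path segments survive the shortcut post-processing, and converting the informal ``$\face$ straddles $\edge$ and surrounds the staple at $\segA$'' picture into a rigorous contradiction using only the already-established structural facts (\lemref{l:e:properties}, \lemref{b:horizon}, \lemref{faces-line-continuation}, \lemref{corridor-shapes}, \obsref{narrow}). Once $\segA$ has been identified as a chain of shortcuts, the remainder is a careful case analysis entirely parallel to those of the surrounding lemmas.
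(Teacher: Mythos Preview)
Your first step---identifying $\segA$ as a chain of shortcut segments---is correct and matches the paper. After that, however, you miss the paper's key idea and take a route with real gaps.

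The paper's argument is purely \emph{temporal}: look at the \emph{first} shortcut $\segB\subseteq\segA$ ever added to $\LY$. At the moment $\segB$ is created, its two endpoints already lie on segments $\seg,\seg'\in\LX\cup\LY$ (this is how a shortcut is defined). If neither $\seg$ nor $\seg'$ enters $\interX{\GCellA}$, then both end on $\edge$ with their bodies in $\GCell$. One now argues directly from the construction that this is impossible: such an endpoint cannot arise from hitting a block (that would force a long admissible segment in $\LX$ crossing $\edge$, hence $\segA$ would reach an endpoint of $\edge$), nor from hitting a long segment of $\LX$; so both $\seg$ and $\seg'$ must be $\LY$-segments that were clipped when $\segB$ was inserted. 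But the shortcut post-processing clips at most one of the two anchoring segments, a contradiction. The footnoted ``monotonicity of $\LY$'' is exactly what makes the ``first shortcut'' argument go through.

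Your approach instead tries to close the bad case topologically via \lemref{corridor-shapes} and \obsref{narrow}. Two problems. First, both of those statements are about faces in $\facesPZ$, i.e.\ faces containing at least one block of $\Opt$; the face bordering $\segA$ on the $\GCellA$-side need not contain any block, so you cannot invoke them. Second, even granting the rectangle/$L$-shape conclusion, your sentence ``rules this configuration out, because it forces one of the two incident perpendicular segments \ldots\ to leave $\edge$ upward into $\GCellA$'' is an assertion, not an argument; you would still need to convert the two reflex corners at $\pnt_1,\pnt_2$ into a contradiction, and you have not done so. There is also a forward-reference issue: \lemref{corridor-shapes} appears after the present lemma in the paper's logical order.

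In short, the right fix is to abandon the face-topology detour and argue directly from the order in which shortcuts are created and the clipping rules of the post-processing.
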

\begin{proof}
    From the construction of $\LX$ and $\LY$, the segment $\seg$
    consists of one or multiple shortcut segments, as any other
    segment from $\LSet$ would touch an endpoint of $\edge$.

    \parpic[r]{%
       \minipageW{
          \IncludeGraphics{\si{figs/zig_zag}}%
       }{%
          \IncludeGraphics{\si{figs/zig_zag}}%
       }
    }%

    Let $\segB \subseteq \segA$ be the first shortcut added to
    $\LY$. There are two segments $\seg, \seg' \in \LY \cup \LX$ such
    that the endpoints of $\segB$ are contained in these two
    segments. Assume $\seg$ intersects $\interX{\GCell}$.

    If $\seg$ also intersects $\interX{\GCellA}$, then the claim
    holds. Similarly, if $\seg'$ intersects $\interX{\GCellA}$, then
    we are done. So, it must be that $\seg'$ intersects
    $\interX{\GCell}$, and both $\seg$ and $\seg'$ have an endpoint on
    $\edge$, see figure on the right.  Let $\pnt$ (resp. $\pnt'$) be
    the endpoint of $\seg$ (resp. $\seg'$) on $\edge$. We have that
    $\seg$ does not hit a perpendicular block at $\pnt$, as otherwise
    this would induce a leftmost long segment crossing the top or the
    bottom edge of $\GCellA$. In turn, such a segment, by
    construction, is in $\LX$, see \itemrefpage{l:y:A} in
    \secref{Z:construction}.  This would contradict the assumption
    that $\segA$ does not contain an endpoint of $\edge$.  Similarly,
    $\seg$ cannot hit any vertical segment of $\LX$ at $\pnt$, since
    all such segments are long,

    So, it must be that $\seg$ is in $\LY$, and furthermore, it got
    shortened when the shortcut $\segB$ was created (because, this is
    the only way for $\seg$ to have an endpoint on $\edge$). We can
    apply verbatim the same logic to $\seg'$. However, by
    construction, it is not possible that when the shortcut $\segB$
    was introduced between $\seg$ and $\seg'$, both of them got
    clipped.%
    \footnote{%
       Underlying our argument here is the monotonicity of $\LY$: As
       the construction continues, the union of segments in this set
       only grows. }
\end{proof}

\paragraph{Faces of $\facesPZ$ do not fork.}

Now we study the structure of the faces in $\facesPZ$ at the boundary
of the grid cells. In the following lemma we show that multiple
connected components of a face inside one grid cell $\GCellA$ cannot
merge into one component in a neighboring grid cell $\GCell$.

\begin{lemma}%
    \lemlab{corridor-at-boundaries}%
    Let $\GCell$ and $\GCellA$ be two grid cells sharing a common edge
    $\edge$.  Consider a face $\face \in \facesPX{\LSet }$ such that
    $\face \cap \GCell \neq \emptyset$, and let $\CC$ be a connected
    component of $\face \cap \GCell$ such that
    $\CC \cap \edge \neq \emptyset$.  Then there is exactly one
    connected component $\CC'$ of $\face \cap \GCellA$ such that
    $\CC \cap \CC' \neq \emptyset$.
\end{lemma}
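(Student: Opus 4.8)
The plan is to show that if two connected components $\CC'_1, \CC'_2$ of $\face \cap \GCellA$ both meet $\CC$, then they would have to be connected \emph{inside} $\GCellA$ as well, contradicting that they are distinct components. The key is that the only way $\CC'_1$ and $\CC'_2$ can touch $\CC$ is along the shared edge $\edge$, so each of them contributes an interval of $\edge$ that lies in the closure of $\face$, and these two intervals are both contained in the single interval $\CC \cap \edge$ (which is connected, since $\CC$ is either a rectangle or an $L$-shape by \lemref{corridor-shapes}, hence intersects $\edge$ in a single segment). So there is a point of $\edge$ strictly between the two intervals that lies on some segment $\seg$ of $\LSet$ separating $\CC'_1$ from $\CC'_2$ within $\GCellA$; concretely, $\seg$ must cross $\edge$ at a point $\pnt$ that is in the interior of $\CC \cap \edge$.

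First I would set up: suppose for contradiction there are two such components $\CC'_1 \ne \CC'_2$. Since $\CC \cap \GCell$ is a rectangle or $L$-shape intersecting $\edge$ in one subsegment $I = \CC \cap \edge$, both $\CC'_1 \cap \edge$ and $\CC'_2 \cap \edge$ are nonempty subsets of $\clX{I}$. Pick a point $\pnt \in \edge$ strictly between a point of $\CC'_1 \cap \edge$ and a point of $\CC'_2 \cap \edge$; then $\pnt \notin \face$, so $\pnt$ lies on $\ULSet$, i.e.\ on some segment $\seg \in \LSet$ meeting the interior of $\edge$. Because $\pnt$ is strictly inside $I \subseteq \clX{\CC}$ and $\CC$ does not meet $\seg$ in its interior (they are on opposite sides of the partition), $\seg$ must cross $\edge$ transversally at $\pnt$ --- it cannot be contained in $\edge$ (a maximal $\LSet$-segment in $\edge$ through an interior point of $I$ would, by \lemref{l-on-grid-boundary}, have to reach an endpoint of $\edge$ or be incident to segments entering both cells, and in either case would cut through $\clX{\CC}$, contradicting that $\CC$ is one component of $\face \cap \GCell$ touching $\edge$).

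Then I would invoke \lemref{faces-line-continuation} applied in $\GCellA$: the segment $\seg$ crossing $\edge$ at $\pnt$, together with the nearest $\LSet$-segment on either side of it along $\edge$ (which, inside $\GCellA$, bound $\CC'_1$ and $\CC'_2$ respectively), continues to some other edge $\edge'$ of $\GCellA$ via segments $\segA, \segA'$ of $\LSet$. Tracing this, one sees that $\seg$ (extended by the continuation segments given by \lemref{faces-line-continuation}, and using that inside $\GCellA$ every component of $\face \cap \GCellA$ is a rectangle or $L$-shape by \lemref{corridor-shapes}) forms a connected barrier inside $\GCellA$ separating $\CC'_1$ from $\CC'_2$ --- but a single $\LSet$-segment cannot lie \emph{between} two distinct points of a single connected component $\CC$ on the $\GCell$-side while simultaneously separating two components on the $\GCellA$-side, because on the $\GCell$-side $\seg$'s crossing point $\pnt$ is interior to $\CC \cap \edge$, so $\seg$ would have to enter $\interX{\CC}$, contradicting that $\CC$ is a face component disjoint from $\ULSet$. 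Hence no such $\seg$ exists, so $\CC'_1 = \CC'_2$, and the claimed uniqueness follows; together with the fact that $\CC$ meets $\edge$ (so at least one such $\CC'$ exists), this gives exactly one component.

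The main obstacle I expect is the case analysis around $\seg$ when $\seg$ has a loose end or is an $L$-shaped piece of $\LY$ inside $\GCellA$ rather than a segment crossing the cell: then ``separating barrier inside $\GCellA$'' is not literally one straight segment, and I would need to assemble it from the $L$-shaped curves described in the proof of \lemref{corridor-shapes} (the union of cell-crossing segments and $L$-shaped curves with endpoints on $\bd\GCellA$, none crossing each other), and argue that the component of $\GCellA \setminus \ULSet$ adjacent to the arc of $\edge$ between $\CC'_1 \cap \edge$ and $\CC'_2\cap\edge$ is wedged between $\CC'_1$ and $\CC'_2$ yet, being part of $\face$, must actually coincide with both --- forcing $\CC'_1 = \CC'_2$. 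This bookkeeping is the delicate part; everything else is a direct consequence of \lemref{corridor-shapes}, \lemref{faces-line-continuation}, \lemref{l-on-grid-boundary}, and \lemref{l:e:properties}.
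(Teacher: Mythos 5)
There is a genuine gap, and it sits exactly at the case the lemma is really about. When you locate a point $\pnt$ of $\edge$ lying on $\ULSet$ between the two contact sets, you allow only two possibilities for the segment $\seg\in\LSet$ through $\pnt$: either $\seg$ is contained in $\edge$, or $\seg$ crosses $\edge$ transversally. You rule out the first via \lemref{l-on-grid-boundary} and dispose of the second because a transversal crossing would enter $\interX{\CC}$. But there is a third possibility, which is the only non-trivial one: $\seg$ is perpendicular to $\edge$ and \emph{terminates} at $\pnt$ from inside $\GCellA$, never entering $\GCell$. Such a segment (typically a piece of $\LY$ whose path construction ended on $\edge$, or a segment clipped by a shortcut) separates $\CC'_1$ from $\CC'_2$ inside $\GCellA$ while touching $\edge$ only at the single point $\pnt$; it does not enter $\interX{\CC}$ and is not contained in $\edge$, so neither branch of your dichotomy produces a contradiction. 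Relatedly, your assertion that $\pnt$ is ``strictly inside $I=\CC\cap\edge$'' cannot hold (points of $\ULSet$ are not in $\face$, hence not in $\CC$); the set $\CC\cap\edge$ is precisely an interval with such pinch points removed, which is why $\CC$ could a priori reach two different components of $\face\cap\GCellA$ through the two sides of a pinch point.

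The paper closes exactly this case: since the separating segment $\segA$ ends at $\pnt_\segA\in\edge$ and $\LSet$ is nicely connected (\lemref{l:e:properties}), $\segA$ must abut a perpendicular segment of $\LSet$ at $\pnt_\segA$, and that segment lies \emph{along} $\edge$; then \lemref{l-on-grid-boundary} forces this segment along $\edge$ either to reach an endpoint of $\edge$ or to be incident with a segment entering $\interX{\GCell}$, and since no segment of $\LSet$ enters $\interX{\GCell}$ strictly between the two segments $\seg_1,\seg_2$ bounding $\CC$, in either case a whole subinterval $\pnt_\segA\pnt_1$ or $\pnt_\segA\pnt_2$ of $\edge$ is covered by $\LSet$. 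That severs $\CC$ from one of $\CC'_1,\CC'_2$, giving the contradiction. Your concluding paragraph gestures at ``assembling $L$-shaped barriers,'' but that is not a substitute for this step: without invoking nice connectivity at the terminating endpoint together with \lemref{l-on-grid-boundary}, nothing prevents a single dangling perpendicular segment from creating a fork.
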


\begin{proof}
    We remind the reader that the grid cells are closed sets, but
    faces of $\facesPX{\LSet }$ are open sets.  Let
    $\pnt \in \CC \cap \edge$. Let $\CC'$ be a connected component of
    $\face \cap \GCellA$ containing $\pnt$. Clearly,
    $\pnt \in \CC \cap \CC' \neq \emptyset$.

    \smallskip%
    \parpic[r]{%
       \minipageW{\IncludeGraphics{figs/19}}%
       {%
          \IncludeGraphics{figs/19}%
          \captionof{figure}{}%
          \figlab{no-forks}%
       }
    }
    
    We claim that $\CC'$ is unique. Assume otherwise, i.e., that there
    are two connected components $\CC_1'$ and $\CC_2'$ of
    $\face \cap \GCellA$ which have non-empty intersection with $\CC$
    (and they are consecutive along $\edge$). Let $\segA \in \LSet$ be
    a segment intersecting $\interX{\GCellA}$, such that $\segA$
    intersection with $\edge$ is between $\CC_1' \cap \CC$ and
    $\CC_2' \cap \CC$ (see \figref{no-forks}). Such a segment exists,
    as $\CC_1' \cap \GCellA$ and $\CC_2' \cap \GCellA$ are not
    connected.

    Let $\seg_1, \seg_2 \in \LSet$ be the segments intersecting
    $\GCell$, touching $\edge$ and bounding $\CC \cap \GCell$. Then
    there is no segment $\seg_3 \in \LSet$ which intersects the
    interior of $\GCell$ and intersects $\edge$ between
    $\pnt_1 = \seg_1 \cap \edge$ and $\pnt_2 = \seg_2 \cap \edge$.  As
    $\pnt_\segA = \segA \cap \edge$ is between $\CC_1' \cap \CC$ and
    $\CC_2' \cap \CC$, it holds that $\pnt_\segA$ is between $\pnt_1$
    and $\pnt_2$. Implying that $\segA$ ends at
    $\pnt_\segA \in \edge$, and does not intersect the interior of
    $\GCell$.
    
    We claim that either $\pnt_\segA\pnt_1$ or $\pnt_\segA\pnt_2$ is
    contained in $\LSet$, which contradicts that either
    $\CC \cap \CC_1 \neq \emptyset $ or
    $\CC \cap \CC_2 \neq \emptyset$.  Hence, the component $\CC'$ is
    unique.
    
    As $\pnt_\segA$ is an endpoint of $\segA$, $\segA$ touches a
    perpendicular segment from $\LSet$ at $\pnt_\segA$.  Let $\seg$ be
    a maximal segment, such that
    \begin{inparaenum}[(i)]
        \item $\seg$ is contained in $\edge$,
        \item $\seg$ contains $\pnt_\segA$, and
        \item $\seg$ is contained in $\LSet$.
    \end{inparaenum}
    From \lemref{l-on-grid-boundary} segment $\seg$ contains an
    endpoint of $\edge$, or $\seg$ is incident with a segment
    intersecting $\interX{\GCell}$. In either case, one of the
    segments $\pnt_\segA \pnt_1 $, $\pnt_\segA\pnt_2$ is contained in
    $\seg$, i.e., it is contained in $\LSet$.
\end{proof}

\paragraph{Faces of $\facesPZ$ are either %
   \trail{}s or \ring{}s.}

\begin{lemma}
    \lemlab{summary}%
    A face $\face$ of $\facesPZ$ is either a \trail or a \ring, see
    \defrefpage{trail:cycle}. The number of faces of $\facesPZ$ is
    $O( 1/(\delta \eps^2 ))$, and each face has $\constAmath$
    vertices.
\end{lemma}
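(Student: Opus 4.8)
first that every face of $\facesPZ$ is a narrow polygon, then that it has no more than one hole (so it is a \trail or a \ring), then the quantitative bounds on the number of faces and their complexity.

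\textbf{Step 1: Narrowness.} By \obsref{narrow}, a face $\face \in \facesPZ$ contains no vertex of $\TGrid$ in its interior. By \lemref{corridor-shapes}, each connected component $\CC$ of $\face \cap \GCell$ for a grid cell $\GCell$ is a rectangle or an $L$-shape, and $\CC$ meets at most two edges of $\GCell$. This is precisely the definition of a narrow polygon (\defref{trail:cycle}), so $\face$ is narrow.

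\textbf{Step 2: At most one hole.} This is the main obstacle, and I would argue by a connectivity/planar-duality argument on the dual graph of the components $\{\CC\}$ of $\face$. Build a graph $\GraphA_\face$ whose vertices are the connected components $\CC$ of $\face \cap \GCell$ over all grid cells $\GCell$ that meet $\face$, with an edge between $\CC \subseteq \GCell$ and $\CC' \subseteq \GCellA$ whenever $\GCell, \GCellA$ are adjacent and $\CC \cap \CC' \neq \emptyset$. By \lemref{corridor-at-boundaries}, across a shared edge $\edge$ of two grid cells, a component $\CC$ on one side touches exactly one component $\CC'$ on the other side; moreover each $\CC$ meets at most two edges of its own grid cell. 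Hence every vertex of $\GraphA_\face$ has degree at most two, so $\GraphA_\face$ is a disjoint union of paths and cycles; since $\face$ is connected, $\GraphA_\face$ is a single path or a single cycle. A path structure glues the rectangular/$L$-shaped pieces into a simply connected polygon (a \trail), while a cycle structure produces exactly one hole (a \ring). One should check that the cycle case cannot ``double back'' to create a genuinely more complicated topology — but degree $\le 2$ rules this out, and the fact that each piece is a rectangle or $L$-shape touching at most two cell-edges means the gluing is along single intervals, so no extra handles appear.

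\textbf{Step 3: Counting faces and bounding complexity.} The number of faces is bounded using that $\LSet$ subdivides $[0,N]^2$, and $\cardin{\LSet} \le \constAmath$ by \lemref{l:e:properties}(B). A planar subdivision by $k$ segments has $O(k)$ faces (each crossing or endpoint contributes $O(1)$ to the arrangement complexity, and here segments are non-crossing and only meet at endpoints by nice-connectedness, so the count is even more directly $O(\cardin{\LSet})$). With $\cardin{\LSet} = O(1/(\eps\delta^4))$ one gets $O(1/(\delta\eps^2))$ faces after absorbing constants — actually the stated bound $O(1/(\delta\eps^2))$ should follow from a slightly sharper count, observing that the $\LY$-segments attached to a single endpoint of $\LX$ form a single path contributing $O(1)$ new faces, so the face count is $O(\cardin{\LX} + \#\text{paths}) = O(1/\delta^2)$ times $O(1)$ — I would reconcile this with the exact exponent of $\eps$ by a careful bookkeeping of how many shortcuts and path-segments actually bound distinct faces. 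Finally, each face has $\constAmath$ vertices because every face boundary is composed of subsegments of the $\le \constAmath$ segments of $\LSet$, and by nice-connectedness (\lemref{l:e:properties}(A)) and \lemref{corridor-shapes} each segment of $\LSet$ contributes only $O(1)$ vertices to the boundary of any single face (it enters and leaves a given face a bounded number of times, since within each grid cell the face-components are rectangles or $L$-shapes). Hence each face has $O(\cardin{\LSet}) = \constAmath$ vertices, completing the proof.
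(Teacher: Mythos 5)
Your proposal follows essentially the same route as the paper: the paper also forms the graph on the connected components of $\face$ clipped to grid cells, invokes \lemref{corridor-shapes} and \lemref{corridor-at-boundaries} to get degree at most two (hence a path or a cycle, i.e., a \trail or a \ring), counts faces by noting there are $O(\cardin{\LX})$ faces before $\LY$ is added and each loose endpoint of $\LX$ creates at most one new face, and bounds the per-face complexity by $\cardin{\LSet} \leq \constAmath$ via \lemref{l:e:properties}. Your worry in Step 3 about reconciling the exponent of $\eps$ is a non-issue: all these quantities are constants and the paper's own count is simply $O(\cardin{\LX})$.
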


\begin{proof}
    Consider a face $\face$.  Consider the set of all connected
    components of $\face$ when clipped to the grid cells of $\TGrid$;
    that is,
    \begin{math}
        \CDC = \Set{\face \cap \GCell}{ \GCell \text{ cell of }
           \TGrid, \text{ and } \face \cap \GCell \neq \emptyset }.
    \end{math}
    Let $\Graph = \pth{\CDC, \Edges}$, where $\CC \CCA \in \Edges$, if
    $\CC, \CCA \in \CDC$ and $\CC \cap \CCA \neq \emptyset$.  By
    definition the graph $\Graph$ is connected.  By
    \lemref{corridor-shapes} and \lemref{corridor-at-boundaries}, all
    the vertices of $\Graph$ are either of degree one or two. That
    implies that $\Graph$ is either a cycle or a path.

    As for the number of faces, consider the construction of $\LSet$,
    just before $\LY$ is computed. At this stage, there are
    $O( \cardin{\LX})$ faces. Now, every endpoint of a segment of
    $\LX$, might give rise to one new face during the construction of
    $\LY$. As such, the total number of faces is bounded by
    $O( \cardin{\LX})$. The later part of the claim now follows by
    \lemrefpage{l:e:properties}.
\end{proof}

Note that we do not need to care about faces in
$\mathcal{F}\setminus\facesPZ$ since they do not contain any blocks
from the optimal solution.


\subsection{The approximation algorithm}

The basic idea of the algorithm is to start from the decomposition of
\secref{partition:blocks} of the input into \trail{}s and \ring{}s,
see \lemref{summary}.  We show that for the case of \trail{}s, one can
compute the independent set inside them optimally. We also show that
\ring{}s can be decomposed into collections of \trail{}s, with only a
small loss in the objective. Combining the two algorithms results in
the desired approximation algorithm. Finally, we show how to adapt the
resulting algorithm to handle large rectangles (and not only blocks).

\subsubsection{Computing the maximum weight %
   independent set of blocks inside a \trail}

\paragraph{Trails can be recursively divided without cutting any
   blocks.}

\begin{lemma}
    \lemlab{trail:blazers}%
    Consider a \trail $\tpoly$ whose boundary has $k$ vertices, for
    $k \geq 4$.  Let $\Opt$ be an independent set of blocks contained
    in $\tpoly$, such that $\cardin{\Opt} \geq 2$.  Then, there are
    two \trail{}s $\tpoly_1$ and $\tpoly_2$, with non-empty interior,
    each with at most $k$ vertices, such that
    $\tpoly = \tpoly_1 \cup \tpoly_2$, and every block of $\Opt$ is
    contained either in $\tpoly_1$ or $\tpoly_2$.
\end{lemma}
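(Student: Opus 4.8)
The plan is to exhibit an explicit cut of $\tpoly$, guided by how it is sliced by the grid $\TGrid$. First I would record the cell structure: by the analysis of $\lemref{summary}$ (the trail $\tpoly$ is a narrow hole-free polygon), the connected components $C_1,\dots,C_\ell$ of $\tpoly$ clipped to the cells of $\TGrid$, with adjacency given by non-empty intersection, form a \emph{path}; order them along this path, so $\overline{C_i}\cap\overline{C_j}\neq\emptyset$ only when $|i-j|\le1$, and each ``doorway'' $D_i:=\overline{C_i}\cap\overline{C_{i+1}}$ is a single segment lying on a grid edge whose endpoints are on $\partial\tpoly$. Since every block has length $>\GWidth$ while every $C_i$ lies in a $\GWidth\times\GWidth$ box, no block fits in a single $C_i$; hence for $b\in\Opt$ the index set $I_b=\{i: b\cap C_i\neq\emptyset\}$ is an interval $[l_b,r_b]$ with $r_b>l_b$ (an interval because $b$ is connected and the $C_i$ lie on a path). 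I would then split on the interval overlap graph of $\{I_b\}_{b\in\Opt}$.

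\textbf{Generic case: the overlap graph is disconnected.} Then the blocks split into groups occupying pairwise disjoint index-intervals, and any doorway $D_i$ strictly between the first group and the rest is crossed by no block at all. I cut $\tpoly$ along the chord $D_i$: set $\tpoly_1=\interX{\clX{C_1\cup\cdots\cup C_i}}$ and $\tpoly_2=\interX{\clX{C_{i+1}\cup\cdots\cup C_\ell}}$. Each is connected, inherits narrowness (inside any cell it is one of the old $C_j$ or a truncation, hence still a rectangle or $L$-shape touching at most two cell edges) and has no hole, so it is a trail; no block is cut; and both sides contain a block. For the vertex bound, $\partial\tpoly_1$ is one arc of $\partial\tpoly$ together with $D_i$, so $\tpoly_1$ has at most (the number of vertices of $\tpoly$ on that arc)$\,+\,2$ vertices; since $\tpoly_2$ has non-empty interior it has $\ge4$ vertices, of which at most $2$ are the endpoints of $D_i$, so at least $2$ vertices of $\tpoly$ lie on the complementary arc, giving $\tpoly_1$ at most $(k-2)+2=k$ vertices, and symmetrically $\tpoly_2$.

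\textbf{Exceptional case: the overlap graph is connected.} Having $\cardin{\Opt}\ge2$, it has an edge, i.e.\ two blocks $b_1,b_2$ whose intervals meet; I would pick such a pair with no block of $\Opt$ separated from $b_1$ by the region lying between $b_1$ and $b_2$ in their shared cells, and use that two disjoint axis-parallel rectangles inside any $C_j$ have disjoint $x$- or disjoint $y$-projections and hence are separated there by an axis-parallel segment. Piecing these segments together (each intermediate doorway is crossed in the gap that $b_1$ and $b_2$ leave on it), and then running straight to $\partial\tpoly$ at both ends once the shared stretch is left, produces a short rectilinear staircase $\sigma\subseteq\tpoly$ avoiding $b_1$ and $b_2$. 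Cutting along $\sigma$ splits $\tpoly$ into two narrow hole-free pieces, hence trails; $b_1$ and $b_2$ land on opposite sides; every remaining block is uncut and goes to the side it already occupies; and since a block longer than $\GWidth$ cannot traverse a bend (so the shared stretch of $b_1,b_2$ is essentially straight and $\sigma$ needs only $O(1)$ turns), the same arc-counting keeps each piece at $\le k$ vertices.

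I expect the exceptional case to be the main obstacle: one must show the staircase $\sigma$ can simultaneously (i) avoid \emph{every} block of $\Opt$ — which forces the careful choice of the pair $b_1,b_2$ and an argument that the gap region between them is connected through all shared cells and doorways — (ii) leave both sides \emph{narrow}, i.e.\ each grid-cell piece still a rectangle or $L$-shape touching at most two edges of its cell, and (iii) create no more than $k$ vertices on either side. Pinning this down relies on the fine structural facts about how blocks sit in a trail from $\lemref{corridor-shapes}$, $\lemref{corridor-at-boundaries}$, and $\lemref{b:horizon}$; the generic case is routine by comparison.
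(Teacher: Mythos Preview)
Your generic case (cutting along an uncrossed doorway $D_i$) is correct and is essentially the paper's opening ``cut segment'' reduction, with a clean arc-counting argument for the vertex bound. The exceptional case is where your proposal diverges from the paper and where there is a genuine gap.

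The staircase $\sigma$ is underspecified in exactly the places you yourself flag. You need $\sigma$ to avoid \emph{every} block of $\Opt$, not just $b_1,b_2$; a connected overlap graph can have many parallel blocks threading the same run of cells, and ``pick the pair carefully'' does not by itself guarantee that the gap between $b_1$ and $b_2$ extends through every shared doorway and then out to $\partial\tpoly$ at both ends without being obstructed by a third block. Separately, the vertex count does not close: ``$O(1)$ turns'' on $\sigma$ yields at most $k+O(1)$ vertices on each side, not $k$; your arc-counting in the generic case worked only because the cut was a single segment. To recover $\le k$ you would need each turn of $\sigma$ to be paired with a vertex of $\tpoly$ that lands on the \emph{other} side of the cut, and nothing in your construction forces that.

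The paper sidesteps both difficulties with a different device. After the cut-segment reduction it goes to a \emph{terminal} component $\CC$ of the trail (one touching the rest of $\tpoly$ through a single doorway), argues that $\CC$ must already be a rectangle, picks an integer point $\pnt_0$ on its far edge, and shoots a ray from $\pnt_0$ into $\tpoly$: each time the ray meets a perpendicular block it turns and continues along that block's long edge. This walk automatically avoids every block of $\Opt$. Because each cell-piece of $\tpoly$ is a rectangle or an $L$-shape, the walk can only turn where $\tpoly$ itself turns, and must eventually reach $\partial\tpoly$; this coupling between turns of the walk and corners of $\tpoly$ is what controls the vertex count. Your staircase is morally the same ``follow block edges'' idea, but by anchoring it between two chosen blocks rather than launching it from a terminal cell you lose both the clean termination argument and the turn--corner pairing that keeps the pieces at $\le k$ vertices.
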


\begin{proof}
    Consider a segment $\seg$ with integer coordinates, such that (i)
    $\interX{\seg} \subseteq \interX{\tpoly}$, (ii) both endpoints of
    $\seg$ are on the boundary of $\tpoly$, and (iii) $\seg$ does not
    intersect the interior of any block of $\Opt$.  A segment $\seg$
    with the above properties is a \emphi{cut segment} for $\tpoly$.
    Clearly, if such a segment exists, we cut $\tpoly$ along $\seg$,
    and the claim holds.
    
    \parpic[r]{%
       \begin{minipage}{5.6cm}%
           \IncludeGraphics{figs/slice_corridor_faces}%
           \captionof{figure}{}
           \figlab{cut:trail}%
       \end{minipage}%
    }%
    We now assume that there is no cut segment, i.e., $\tpoly$ can not
    be shrank any further. Arguing as in \lemref{summary}, there must
    be a grid cell $\GCell$, and a connected component $\CC$ of
    $\GCell \cap \tpoly$, which is connected to the rest of the trail
    through a single segment, denoted by $\edge'$, on the boundary of
    $\GCell$, and assume without limiting generality that $\edge'$ is
    vertical, and on the right side of $\GCell$.  By
    \lemref{corridor-shapes}, the polygon $\CC$ is either a rectangle
    or an $L$-shaped polygon. If $\CC$ is $L$-shaped, then consider
    the maximal rectangle $\rect \subseteq \CC$ that has $\edge'$ as
    its right side. Any block $\Block \in \Opt$ that intersects the
    interior of $\CC$, must intersect $\edge'$, as it is the only way
    out of $\CC$ in this grid cell. As such, $\Block \cap \CC
    \subseteq \rect$. We conclude, that as far as the optimal
    solution, we can replace $\CC$ by $\rect$ in $\tpoly$. But this
    imply that there is a cut segment (i.e., the edge separating
    $\rect$ from the rest of $\CC$). 

    Thus, the polygon $\CC$ is a rectangle, its interface to the rest
    of $\tpoly$ is through the edge $\edge'$, and let $\edge$ be edge
    of $\CC$ parallel to $\edge'$.
    


    Assume for the time being that $\edge$ has length larger than one,
    and let $\pnt_0$ be a point with integer coordinates in the
    interior of $\edge$. Shoot a ray from $\pnt$ into the interior of
    $\tpoly$. This ray must hit a perpendicular block of $\Opt$ at a
    point $\pnt_1$, and as in construction of $\LY$ (see
    \secrefpage{app:construction}). We continue the ray shooting from
    $\pnt_1$ along the edge of $\Block$. Since $\tpoly$ is narrow, and
    inside a grid cell of $\TGrid$ each connected component of
    $\tpoly$ is either a rectangle or is $L$-shaped (see
    \lemref{corridor-shapes}), it follows that the generated path must
    end at an edge of $\tpoly$. The generated path
    $\Path$ does not intersect the interior of any block of $\Opt$,
    its two endpoints are on the boundary of $\tpoly$, and it cuts
    $\tpoly$ into two rectilinear polygons with at most $k$ vertices
    each.

    \NotHandled{\sariel{Add figure for the next paragraph.}}%
    
    The case that $\edge$ is of length one requires a special
    handling. If the trail is a single rectangle then the claim
    trivially holds, as it contains the blocks inside it in a linear
    order, and one can easily cut the trail after the first
    block. Otherwise, the trail must have a turn in it as it is being
    traversed from $\CC$. Assume, without limiting generality, that
    this is a right turn, and perform a ray shooting, as above from
    the bottom endpoint of $\edge$. If the ray shooting cut across
    $\tpoly$, without cutting any block, then we have a cut
    segment. Otherwise, it must have hit a vertical block $\Block$ at
    a point $\pnt_0$. If $\pnt_0$ is a vertex of $\tpoly$, then it is
    easy to verify that one can cut $\tpoly$ by a vertical segment
    through $\pnt_0$. As such, it must be that $\pnt_0$ is in the
    interior of $\tpoly$, and the argument above for the interior ray
    shooting applies, as can be easily verified, and it implies the
    claim.
\end{proof}

\paragraph{The approximation algorithm for trails.}

\begin{lemma}
    \lemlab{approx:trails}%
    Let $N$, $\eps$ and $\delta$ be parameters as above, let $\BSet$
    be a set of $m$ weighted ($\delta$-large) blocks with vertices in
    $\IntRange{N}^2$, and let $k$ be a parameter. Let $\PolySetA$ be
    the set of all possible \trail{}s within $\IntRange{N}^2$ with at
    most $k$ vertices each. For a \trail $\tpoly \in \PolySetA$, let
    $\woptX{\tpoly}$ be the weight of the maximum weight independent
    set of blocks of $\BSet$. Then, one can compute \emph{exactly} the
    value of $\woptX{\cdot}$ for all the \trail{}s of $\PolySetA$ in
    time $O \pth{ N^{4k} m k \log k }$.
\end{lemma}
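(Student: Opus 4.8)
The plan is to compute all the values $\woptX{\tpoly}$, for $\tpoly \in \PolySetA$, by a dynamic program over the trails, processed in non-decreasing order of area. Call a pair $\tpoly_1, \tpoly_2 \in \PolySetA$ a \emph{valid split} of a trail $\tpoly$ if $\tpoly_1 \cup \tpoly_2 = \tpoly$ and $\interX{\tpoly_1} \cap \interX{\tpoly_2} = \emptyset$; since both pieces have positive area and these areas sum to the area of $\tpoly$, we automatically get $\mathrm{area}(\tpoly_i) < \mathrm{area}(\tpoly)$. I claim the identity
\[
  \woptX{\tpoly} \;=\; \max\!\left\{\, 0,\;\; \max_{\substack{\Block \in \BSet \\ \Block \subseteq \tpoly}} w(\Block), \;\; \max_{(\tpoly_1,\tpoly_2)\text{ valid split}} \bigl( \woptX{\tpoly_1} + \woptX{\tpoly_2} \bigr) \right\},
\]
where an inner maximum over an empty index set is read as $-\infty$. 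Because the right-hand side refers only to trails of strictly smaller area, sorting $\PolySetA$ by area turns this into a legitimate dynamic program.

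I would prove the identity by induction on $\mathrm{area}(\tpoly)$. For ``$\le$'': the $0$ and single-block options are clearly realized by feasible solutions contained in $\tpoly$; for a valid split, let $S_i$ be a maximum-weight independent set of blocks contained in $\tpoly_i$, for $i=1,2$ — since every block of $S_i$ is open and contained in $\tpoly_i$, it is contained in $\interX{\tpoly_i}$, and as $\interX{\tpoly_1}\cap\interX{\tpoly_2}=\emptyset$ the set $S_1\cup S_2$ is independent, is contained in $\tpoly_1\cup\tpoly_2=\tpoly$, and has weight $\woptX{\tpoly_1}+\woptX{\tpoly_2}$ by the inductive hypothesis. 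For ``$\ge$'': let $S$ be a maximum-weight independent set of blocks contained in $\tpoly$. If $\cardin{S}\le 1$, then $\woptX{\tpoly}$ is matched by the $0$- or single-block term. If $\cardin{S}\ge 2$, apply \lemref{trail:blazers}: it produces trails $\tpoly_1,\tpoly_2\in\PolySetA$ with at most $k$ vertices, namely the two pieces into which a cut curve divides $\tpoly$ — hence a valid split — such that every block of $S$ lies in one of them; being open, each block lies in exactly one $\interX{\tpoly_i}$, so $S$ partitions as $S_1\sqcup S_2$ with $S_i$ contained in $\tpoly_i$, and $\woptX{\tpoly}=w(S)=w(S_1)+w(S_2)\le \woptX{\tpoly_1}+\woptX{\tpoly_2}$, which by the inductive hypothesis is one of the terms on the right-hand side.

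For the running time, a rectilinear polygon with at most $k$ vertices and coordinates in $\IntRange{N}^2$ is determined by a sequence of at most $k$ points of $\IntRange{N}^2$, so $\cardin{\PolySetA}=O(N^{2k})$; store each $\woptX{\cdot}$ keyed by the polygon's $O(k\log N)$-bit canonical vertex sequence, so a lookup costs $O(k)$. The crucial observation that keeps the exponent at $4k$ rather than $6k$ is that in a valid split $\tpoly_2$ is \emph{forced}, $\tpoly_2=\clX{\tpoly\setminus\tpoly_1}$. Hence the split term is evaluated by iterating over each $\tpoly\in\PolySetA$ and over each of the $O(N^{2k})$ candidates $\tpoly_1$; for a pair one forms $\tpoly_2$ and checks validity (each $\tpoly_i$ a trail with $\le k$ vertices, $\tpoly_1\cup\tpoly_2=\tpoly$, interiors disjoint) by a sweep over the $O(k)$ edges in $O(k\log k)$ time, then updates the maximum with two $O(k)$-time lookups — a total of $O(N^{4k}k\log k)$. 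The single-block term costs, per trail, an $O(k)$-time ``$\Block\subseteq\tpoly$'' test for each of the $m$ blocks, i.e.\ $O(mk)$, for $O(N^{2k}mk)$ overall; sorting $\PolySetA$ by area costs $O(N^{2k}k\log N)$, also subsumed. Summing, $O(N^{4k}k\log k + N^{2k}mk)=O(N^{4k}mk\log k)$, as claimed.

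\textbf{The main obstacle} is the correctness argument, which has two sides. Completeness — that the optimum of every trail admits a split the dynamic program inspects — is exactly \lemref{trail:blazers}, so it is available; the point requiring care is soundness, namely that each split the program considers yields a genuinely feasible and non-over-counted solution, which rests on the pieces having disjoint interiors (so that a block assigned to one side is a single open region disjoint from everything on the other side, preventing both double counting and spurious intersections). The remaining ingredients — well-foundedness of the recursion via area-monotonicity, and the determinacy of $\tpoly_2$ given $\tpoly$ and $\tpoly_1$ — are then what closes the recurrence into a dynamic program and pins the running time at $N^{4k}$.
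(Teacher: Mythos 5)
Your proposal is correct and follows essentially the same route as the paper: enumerate all $O(N^{2k})$ rectilinear polygons with at most $k$ vertices, filter to trails, and evaluate the recurrence whose completeness is exactly \lemref{trail:blazers}, iterating over all $O(N^{2k})$ candidates for $\tpoly_1$ with $\tpoly_2$ determined by sweeping in $O(k\log k)$ time, for $O(N^{4k}k\log k)$ plus the $O(N^{2k}mk)$ single-block term. Your write-up is in fact more careful than the paper's (explicit induction on area for well-foundedness, and the open-set/disjoint-interior argument for soundness of a split), but the decomposition, the key lemma invoked, and the running-time accounting are the same.
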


\begin{proof}
    Let $\PolySet$ be the set of all rectilinear polygons with at most
    $k$ vertices from the set $\IntRange{N}^2$. An easy calculation
    shows that $\cardin{\PolySet} \leq 2N^{2k}$. We verify for every
    polygon in $\PolySet$ that it is a \trail, and if not, we reject
    it. Let $\PolySetA$ be the resulting set of \trail{}s, Clearly,
    verifying if a polygon of $\PolySet$ is a \trail can be done in
    $O( kN \log ( k N ) )$ time. As such, computing $\PolySetA$ takes
    $O\pth{ k N^{2k+1} \log (k N)}$ time.

    Now, for every \trail $\tpoly \in \PolySetA$, we compute the value
    of $\woptX{\tpoly}$. First, we check for the maximum weight single
    block contained in $\tpoly$. Next, we consider the possibility
    that the optimal solution within $\tpoly$ consists of more than
    one block. From \lemref{trail:blazers}, $\tpoly$ can then be
    broken into two smaller trails $\tpoly_1, \tpoly_2$, each with at
    most $k$ vertices, such that the optimal solution for $\tpoly$ is
    a union of the optimal solutions for $\tpoly_1$ and $\tpoly_2$. A
    such, we try all such partitions (naively, there are
    $\cardin{\PolySetA}$ such partitions), and for each such partition
    we verify that it is valid, and then we compute the best solution
    out of all such possibilities.

    Given $\tpoly$ and $\tpoly_1$, computing $\tpoly_2$ can be done in
    $O(k \log k)$ time using sweeping. As such, the resulting
    algorithm has running time
    $O\pth{ \cardin{\PolySetA}^2 k \log k}$.
\end{proof}

\subsubsection{Computing the maximum weight %
   independent set of blocks inside a \ring}

The situation here is somewhat more involved. As before, if one can
break the \ring into two smaller \ring{}s, or into a \ring and \trail,
without intersecting any blocks from the optimal solution, the
algorithm will try this partition. Alternatively, it will perform a
decomposition is into a \ring{} and a \trail, where there would be a
certain (small) loss in the objective.

\begin{figure}[t]
    \begin{tabular}{cccc}
      \IncludeGraphics[page=1,width=0.22\linewidth]{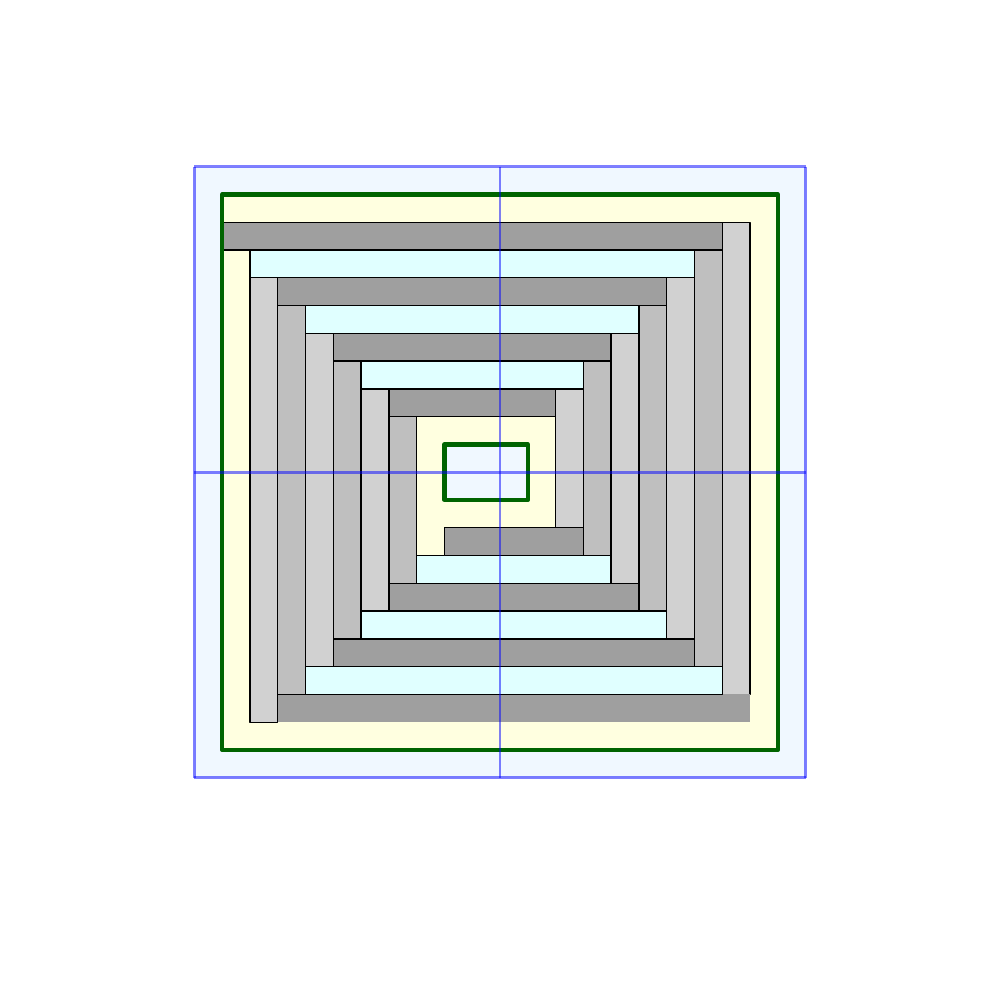}
      &
        \IncludeGraphics[page=2,width=0.22\linewidth]{figs/spiral}
      &
        \IncludeGraphics[page=3,width=0.22\linewidth]{figs/spiral}
      &
        \IncludeGraphics[page=4,width=0.22\linewidth]{figs/spiral}
      \\
      (A)&(B)&(C)&(D)
    \end{tabular}
    \caption{%
       (A) A \ring with the set of blocks $\Opt$. %
       (B) The resulting decomposition into a \trail and a smaller
       \ring. %
       (C) Breaking the long \trail into shorter \trail{}s. %
       (D) The union of a suffix of these \trail{}s forms a \ring (or
       a \trail). %
    }%
    \figlab{folded:ring}
\end{figure}

\paragraph{\Ring{}s can be divided without too much loss.}

\begin{lemma}
    \lemlab{ring:blazers}%
    Consider a \ring $\Polygon$ with $k$ vertices, for $k \geq 4$.
    Let $\Opt$ be an independent set of blocks contained in
    $\Polygon$, such that $\cardin{\Opt} \geq 2$.  Then, one the
    following holds.
    \begin{compactenum}[(A)]
        \item There is a \trail $\tpoly$ such that
        $\clX{\tpoly} = \clX{\Polygon}$ with $O(k/\eps)$ vertices,
        such that $\Opt \subseteq \tpoly$.

        \item There are two interior disjoint polygons $\Polygon_1$
        and $\Polygon_2$, with at most $k$ vertices each, such that
        $\Polygon = \Polygon_1 \cup \Polygon_2$, and such that
        $\Polygon_1$ and $\Polygon_2$ do not cut any block of
        $\Opt$. Furthermore, each of the two polygons is either a
        \trail or a \ring.

        \item There are interior disjoint \trail{}s
        $\tpoly_1, \ldots, \tpoly_{m}$, such that:
        \begin{compactenum}[(I)]
            \item $\bigcup_i \tpoly_i = \Polygon$.

            \item The blocks of $\Opt$ cut by the boundaries of
            $\tpoly_1, \ldots, \tpoly_{m}$ have a total weight of at
            most $(\eps/4)\weightX{\Opt}$.

            \item Every \trail $\tpoly_i$ has $O(1/\eps)$ vertices.

            \item For any $i$, $\cup_{j=i+1}^{m} \tpoly_j$ is a \ring
            (or a \trail) with $k+O(1)$ vertices.
        \end{compactenum}
    \end{compactenum}
\end{lemma}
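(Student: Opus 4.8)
The plan is to prove the trichotomy by first trying a \emph{clean} (block‑avoiding) cut of the ring, exactly in the spirit of \lemref{trail:blazers}, and only if that fails, to pay a tiny amount of weight in order to ``unroll'' the ring. Concretely, call a segment $\seg$ a \emph{cut segment} for $\Polygon$ if it is axis‑parallel with integer coordinates, $\interX{\seg}\subseteq\interX{\Polygon}$, both endpoints of $\seg$ lie on $\bd\Polygon$, and $\seg$ is disjoint from the interior of every block of $\Opt$ (the exact analogue of the cut segments of \lemref{trail:blazers}).

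\textbf{A cut segment exists.} Fix such a $\seg$. There are two sub‑cases. If $\seg$ joins the outer boundary of $\Polygon$ to its hole boundary, then cutting $\Polygon$ open along $\seg$ produces a single simply connected rectilinear polygon $\tpoly$ with $\clX{\tpoly}=\clX{\Polygon}$. I would check that $\tpoly$ is again a narrow polygon (a \trail): within each grid cell a component of $\Polygon$ is a rectangle or an $L$-shape by \lemref{corridor-shapes}, and slicing such a component by an axis‑parallel chord yields pieces that are again rectangles or $L$-shapes meeting at most two edges of the cell, while $\interX{\tpoly}\subseteq\interX{\Polygon}$ still contains no grid vertex by \obsref{narrow}; moreover $\tpoly$ has $k+O(1)=O(k/\eps)$ vertices and $\Opt\subseteq\tpoly$ since $\seg$ cuts no block. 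This is case (A). If instead $\seg$ joins one boundary component of $\Polygon$ to itself, then cutting along $\seg$ separates $\Polygon$ into $\Polygon_1\cup\Polygon_2$; the hole lies in exactly one of them, which is a \ring (or, if the cut happens to kill the hole, a \trail), while the other is a \trail. Both are narrow, neither cuts a block of $\Opt$, and (choosing the endpoints of $\seg$ at existing vertices of $\Polygon$ whenever possible) each has at most $k$ vertices. This is case (B).

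\textbf{No cut segment exists.} Then, arguing as in \lemref{trail:blazers} and \lemref{summary}, $\Polygon$ is ``tight'' and cannot be shrunk; I establish case (C). First I unroll $\Polygon$. Generalizing the shortcut argument of \lemref{cheap:shortcut}, I would exhibit a family of $\Omega(1/\eps)$ candidate ``radial'' paths, each going from the outer boundary of $\Polygon$ to the hole boundary, built by ray shooting and then following block edges and grid edges, so that (as with the candidate segments in \lemref{cheap:shortcut}) such a path crosses the interior of a block only when that block ends in the relevant grid cell; hence every block of $\Opt$ is crossed by only $O(1)$ of the candidate paths. Narrowness of $\Polygon$ (each cell component being a rectangle or $L$-shape, \lemref{corridor-shapes}) guarantees each path cannot get stuck and, by a winding argument around the hole, cannot return to the outer boundary, so it genuinely connects the two boundary components. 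By averaging, one candidate path $\Path_0$ crosses blocks of total weight at most $(\eps/4)\weightX{\Opt}$; cutting $\Polygon$ open along $\Path_0$ yields a \trail $T$ with $\clX{T}=\clX{\Polygon}$. If $T$ has $O(1/\eps)$ vertices I set $m=1$ and $\tpoly_1=T$; otherwise I walk along $T$ and every time the accumulated vertex count reaches $\Theta(1/\eps)$ I place a cut there using \lemref{trail:blazers} (a block‑avoiding cut of $O(1)$ complexity), so that the resulting sub‑trails $\tpoly_1,\dots,\tpoly_m$ each have $O(1/\eps)$ vertices and the only blocks ever crossed are those crossed by $\Path_0$. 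Reindexing so that $\Path_0$ is the cut between $\tpoly_m$ and $\tpoly_1$, each union $\bigcup_{j>i}\tpoly_j$ is the sub‑arc of $\Polygon$ bounded by $\Path_0$ and the $i$‑th internal cut — a \trail for $i\ge 1$ and the full \ring for $i=0$ — whose internal cuts between $\tpoly_{i+1},\dots,\tpoly_m$ become interior upon taking the union, so it has only $k+O(1)$ vertices. This yields (I)--(IV), and matches \figref{folded:ring}.

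The main obstacle is the unrolling step: building the family of candidate radial paths and proving \emph{both} that each actually reaches the hole boundary (a topological statement requiring the narrowness of $\Polygon$ together with a winding/parity argument ruling out a return to the outer boundary) \emph{and} that each block of $\Opt$ is charged only $O(1)$ times, so that averaging over $\Omega(1/\eps)$ candidates gives a cut of weight at most $(\eps/4)\weightX{\Opt}$. A secondary, fiddlier point is the chopping of $T$ into $O(1/\eps)$‑vertex \trail{}s without reintroducing any block cut, which combines \lemref{trail:blazers} with a per‑grid‑cell bookkeeping argument like that of \lemref{cheap:shortcut}.
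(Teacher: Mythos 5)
There is a genuine gap in your case ``no cut segment exists.'' Your plan there is to pay for the unrolling itself: build $\Omega(1/\eps)$ candidate radial paths, each cutting some blocks, charge each block $O(1)$ times, and average. Two things go wrong. First, the averaging has no guaranteed supply of candidates: a \ring can wrap around its hole using only $O(1)$ grid cells, in which case there are only $O(1)$ essentially distinct radial cuts across the annulus, and averaging yields a cut of weight $\Omega(\weightX{\Opt})$ rather than $(\eps/4)\weightX{\Opt}$ --- a single heavy block can be unavoidable for every radial candidate. Second, your topological claim that a ray-shooting path that follows block edges must reach the hole boundary is false; such a path can perfectly well return to the outer boundary or close up on itself. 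The paper's proof embraces exactly these outcomes: it shoots a single \emph{block-avoiding} polyline $\Path$ from a carefully chosen starting point (the closest block to the extreme corner of $\Polygon$), turning along block edges as in \lemref{trail:blazers}, and if $\Path$ hits the outer boundary or itself this is precisely case (B); only if $\Path$ reaches the hole does one obtain an unrolled spiral, and crucially that unrolling is \emph{free} --- no block is cut by $\Path$.

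The weight $(\eps/4)\weightX{\Opt}$ is then paid in a different place: only when the unrolled spiral $\PolygonB$ has more than $O(k/\eps)$ vertices (otherwise one is already in case (A)), one lists the $L$-shaped cell components $\CC_1,\ldots,\CC_m$ of $\PolygonB$ in order, takes their $2m$ boundary ``door'' segments, and averages over $t=\lceil 8/\eps\rceil$ interleaved ladders of every $t$-th door. Here the supply of candidates is guaranteed because $m$ is large (the spiral is long), and each block meets at most two doors, so some ladder has weight at most $(\eps/4)\weightX{\Opt}$. So the structure you need is: free unrolling via a turning polyline (with the ``path returns or self-intersects'' outcomes absorbed into case (B)), and averaging only over the chopping cuts of a provably long spiral --- not over the unrolling cuts of a possibly short ring. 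Your opening step (a block-avoiding chord gives (A) or (B)) and your description of case (C)(IV) are consistent with the paper, but the core of case (C) as you propose it does not go through.
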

\begin{proof}
    Consider any edge $\edge$ of the grid $\TGrid$, that intersects
    $\Polygon$, and the segments of $\Polygon \cap \edge$. If any of
    these segments does not intersect a block of $\Opt$, we cut
    $\Polygon$ along it, ending up with a \trail. We obtain case (A).
    

    Otherwise, let $x_0$ be the minimum $x$-coordinate of any point in
    $\Polygon$, and let $\pnt' = (x_0, y_0)$ be the bottom-most vertex
    of $\Polygon$ with $x$-coordinate equal to $x_0$. Let $\GCell$ be
    the grid cell of $\TGrid$ containing $\pnt'$, and let $\CC$ be the
    connected component of $\Polygon \cap \GCell$ that has $\pnt'$ as
    a vertex. The polygon $\CC$ is $L$-shaped, and let $\edge_t$ and
    $\edge_r$ be its top and right edges, respectively.
    
    Let $\BSet_{\CC}$ be the set of all blocks of $\Opt$ crossing
    either $\edge_t$ or $\edge_r$. Let $\Block \in \BSet_{\CC}$ be the
    block with minimum $L_1$-distance from $\pnt'$, and assume
    w.l.o.g. that $\Block$ intersects $\edge_r$. Let $\seg$ be the top
    edge of $\Block$.  Extend $\seg$ to the left till it hits the
    outer boundary of $\CC$, and denote this endpoint of $\seg$ by
    $\pnt_0$. Note, that after extending $\seg$ does not cut any
    blocks of $\Opt$ , as $\Block$ is the closet block to $\pnt'$.

    The point $\pnt_0$ lies in the interior of an edge of $\GCell$,
    and in the interior of an edge of $\Polygon$. As in
    \lemref{trail:blazers}, compute a path $\Path$ in $\Polygon$ by
    shooting a ray from $\pnt_0$ along $\seg$, and turning whenever
    the ray hits a block of $\Opt$. This process terminates when one
    of the following happens. %
    \smallskip
    \begin{compactenum}[\quad(I)]
        \item The path $\Path$ hits the outer boundary of
        $\Polygon$. In this case, it is easy to verify that this
        splits $\Polygon$ into a \trail and a \ring, without breaking
        any block in the process, and both the \trail and the \ring
        have at most $k$ vertices each. This corresponds to case (B).
        
        \item The path $\Path$ hits itself. Let $C$ be the portion of
        the loop formed by this path, and observe that $C$ has at most
        $k/2$ vertices, as can be easily verified, and $C$ breaks
        $\Polygon$ into two \ring{}s, as desired. This is again case
        (B).
    \end{compactenum}%
    \smallskip%
    The only remaining possibility is that the path $\Path$ hits the
    inner boundary of $\Polygon$.  Cutting $\Polygon$ along $\Path$,
    we obtain a new \trail polygon $\PolygonB$, which potentially can
    have a large number of vertices (i.e., $\PolygonB$ is a spiral
    folded over itself).  See \figref{folded:ring}.

    If $\PolygonB$ has $O(k/\eps)$ vertices, then we are done, as this
    is case (A). Otherwise, let $\CC_1, \ldots, \CC_m$ be the
    connected components of
    $\Set{\GCell \cap \PolygonB}{\GCell \in \TGrid}$ that are
    $L$-shaped, in their order along $\PolygonB$. Each such connected
    component has at most two edges intersecting the interior of
    $\PolygonB$, and let $\SegSet = \brc{\seg_1, \ldots, \seg_{2m}}$
    be the segments that corresponds to these edges, again, in their
    order along $\PolygonB$.

    Consider the ``ladder''
    $L_i = \brc{\seg_{i+t}, \seg_{i+2t}, \seg_{i+3t}, \ldots}$, for
    $t= \ceil{8/\eps}$.  Every block of $\Opt$ intersects at most two
    of segments of $\SegSet$. As such, there is a choice of $i$, such
    that the weight of $L_i$ is at most $(\eps/4) \weightX{\Opt}$. We
    cut $\PolygonB$ along these edges, creating the desired
    decomposition of $\Polygon$ into \trail{}s. This is case (C), and
    it is easy to verify that the other claim in this case holds.
\end{proof}

\paragraph{The approximation algorithm for rings.}

\begin{lemma}
    \lemlab{approx:rings}%
    Let $N$, $\eps$ and $\delta$ be parameters as above, $\BSet$ be a
    set of $m$ weighted ($\delta$-large) blocks with vertices in
    $\IntRange{N}^2$, and let $k$ be a parameter. Let $\PolySetA$ be
    the set of all possible \ring{}s with at most $k$ vertices within
    $\IntRange{N}^2$. For a \ring $\Polygon \in \PolySetA$, let
    $\woptX{\tpoly}$ be the weight of the maximum weight independent
    set of blocks of $\BSet$. Then, one can $(1-\eps)$-approximate
    $\woptX{\cdot}$ for all the \trail{}s of $\PolySetA$.  The running
    time of the algorithm is $O \pth{ N^{O(k/\eps)} m }$.
\end{lemma}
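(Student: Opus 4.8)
The plan is to embed the structural recursion of \lemref{ring:blazers} into a dynamic program. Fix a constant $c$ large enough to absorb all the hidden $O(\cdot)$ terms appearing in \lemref{ring:blazers}, and set $K = \ceil{c\,k/\eps}$. Let $\PolySetA_T$ be the set of all \trail{}s with at most $K$ vertices and coordinates in $\IntRange{N}^2$, and let $\PolySetA_R$ be the set of all \ring{}s with at most $K$ vertices and coordinates in $\IntRange{N}^2$. Exactly as in the proof of \lemref{approx:trails}, $\cardin{\PolySetA_T} + \cardin{\PolySetA_R} = N^{O(k/\eps)}$, and membership/validity of a candidate polygon can be tested in $N^{O(1)}$ time. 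The first step is to invoke \lemref{approx:trails} with vertex bound $K$ to compute $\woptX{\tpoly}$ \emph{exactly} for every $\tpoly \in \PolySetA_T$; this takes $O(N^{4K} m K \log K) = O(N^{O(k/\eps)}m)$ time.

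Next I compute values $\tilde{w}(\cdot)$ for the \ring{}s, processing the members of $\PolySetA_R$ in order of nondecreasing area (a rectilinear polygon with integer vertices has positive integer area, and if $P_1, P_2$ are interior-disjoint with $\clX{P_1} \cup \clX{P_2} = \clX{\Polygon}$ and both have nonempty interior, then $\mathrm{area}(P_i) < \mathrm{area}(\Polygon)$, so the recursion is well founded). For a \ring $\Polygon$, define $\tilde w(\Polygon)$ to be the maximum of three quantities: (i) $\max\{\weightX{\Block} : \Block \in \BSet,\ \Block \subseteq \Polygon\}$; (ii) $\max\{\woptX{\tpoly} : \tpoly \in \PolySetA_T,\ \tpoly \subseteq \Polygon,\ \clX{\tpoly} = \clX{\Polygon}\}$; and (iii) the maximum of $v(P_1) + v(P_2)$ over all pairs $P_1, P_2 \in \PolySetA_T \cup \PolySetA_R$ that are interior-disjoint with $\clX{P_1} \cup \clX{P_2} = \clX{\Polygon}$ and both have nonempty interior, where $v(P) = \woptX{P}$ if $P$ is a \trail and $v(P) = \tilde{w}(P)$ if $P$ is a \ring (every \ring appearing in (iii) has area strictly less than $\mathrm{area}(\Polygon)$, so its value is already available). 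For a \trail $\tpoly$ set $\tilde w(\tpoly) = \woptX{\tpoly}$. Term (ii) realizes case (A) of \lemref{ring:blazers}; term (iii) with two interior-disjoint pieces realizes case (B); and case (C) is realized by term (iii) with $P_1 = \tpoly_1$, the first (short) \trail of the decomposition, and $P_2 = \bigcup_{j \ge 2}\tpoly_j$, which by \lemref{ring:blazers}(C)(IV) is again a \ring or \trail with at most $K$ vertices.

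Correctness follows by induction on $\mathrm{area}(\Polygon)$, proving $(1-\eps)\woptX{\Polygon} \le \tilde{w}(\Polygon) \le \woptX{\Polygon}$ for every $\Polygon \in \PolySetA_R$. The upper bound holds because every option defining $\tilde w(\Polygon)$ corresponds to an actual independent set of blocks inside $\Polygon$ (in (iii), the union of independent sets inside two interior-disjoint subregions is independent in $\Polygon$), using inductively that $\tilde w(P) \le \woptX{P}$ for smaller rings and exactness on trails. For the lower bound, let $\Opt$ be an optimum inside $\Polygon$; if $\cardin{\Opt} \le 1$ then term (i) is already exact, so assume $\cardin{\Opt} \ge 2$ and apply \lemref{ring:blazers}. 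Case (A): term (ii) gives $\woptX{\tpoly} \ge \weightX{\Opt} = \woptX{\Polygon}$. Case (B): every block of $\Opt$ lies in exactly one of $P_1, P_2$, so term (iii) gives $v(P_1)+v(P_2) \ge (1-\eps)\woptX{P_1} + (1-\eps)\woptX{P_2} \ge (1-\eps)\weightX{\Opt}$ by the induction hypothesis/exactness. Case (C): an auxiliary claim, proved by induction on the number of pieces (peel $\tpoly_i$ off via term (iii) and recurse on the suffix, which is again a \ring or \trail of at most $K$ vertices with strictly smaller area), shows $\tilde w\bigl(\bigcup_{j \ge i}\tpoly_j\bigr) \ge \sum_{j \ge i}\woptX{\tpoly_j}$ for all $i$; hence $\tilde w(\Polygon) \ge \sum_{j=1}^m \woptX{\tpoly_j} \ge \sum_j \weightX{\Opt \cap \tpoly_j} \ge \bigl(1 - \tfrac{\eps}{4}\bigr)\weightX{\Opt} \ge (1-\eps)\woptX{\Polygon}$, since the blocks of $\Opt$ cut by the decomposition have total weight at most $\tfrac{\eps}{4}\weightX{\Opt}$. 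For the running time: classifying the $N^{O(k/\eps)}$ candidate polygons and the trail precomputation cost $O(N^{O(k/\eps)}m)$; evaluating $\tilde w(\Polygon)$ for one ring iterates over the $N^{O(k/\eps)}$ pairs $(P_1,P_2)$, each checkable in $N^{O(1)}$ time, for a total of $O(N^{O(k/\eps)}m)$.

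The main obstacle is case (C): the decomposition $\tpoly_1, \dots, \tpoly_m$ may have $m$ unbounded (the \ring is a tightly wound spiral), so it cannot be guessed all at once in polynomial time. What rescues the dynamic program is \lemref{ring:blazers}(C)(IV): every suffix $\bigcup_{j > i}\tpoly_j$ is again a \ring or \trail of only $k+O(1)$ vertices, so the spiral can be unwound one short \trail at a time through a bounded-size subproblem; and — crucially — each peeling step only removes blocks that are themselves computed exactly as parts of short \trail{}s, so \emph{no} multiplicative approximation loss accumulates along the spiral, and the single $\tfrac{\eps}{4}$-loss of case (C) is incurred once against $\weightX{\Opt}$ and absorbed into the target ratio. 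A secondary technical point is keeping the recursion well founded and all subproblem vertex counts bounded by $K = O(k/\eps)$, which is exactly what the $O(k/\eps)$ and $k+O(1)$ bounds in \lemref{ring:blazers} provide.
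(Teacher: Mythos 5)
Your proposal is correct and follows essentially the same route as the paper: precompute exact trail values via \lemref{approx:trails} for trails with $O(k/\eps)$ vertices, then run a dynamic program over all candidate rings that tries the partitions of \lemref{ring:blazers}, handling the spiral case (C) by peeling off one short trail at a time and recursing on the suffix ring/trail of $k+O(1)$ vertices. Your write-up is considerably more explicit than the paper's (which dismisses the key point with ``clearly, repeating this choice multiple times would yield the desired approximation''), in particular in making the recursion well founded by area and in arguing that the $\eps/4$ loss of case (C) is charged once against $\weightX{\Opt}$ rather than compounding along the spiral.
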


\begin{proof}
    Using the algorithm of \lemref{approx:rings}, we compute the
    maximum weight independent set for each possible trail with
    $O(k/\eps)$ vertices within $\IntRange{N}^2$. Now, we compute a
    set $\PolySetA$ of all possible rings within $\IntRange{N}^2$ with
    at most $k+O(1)$ vertices each. We try all possible partitions of
    the rings as described by \lemref{ring:blazers}. The only
    difference from the algorithm of \lemref{approx:rings} is
    observing that case (C) of \lemref{ring:blazers} corresponds to
    breaking a ring into a \trail/\ring, by taking only the first
    trail in the decomposition. Clearly, repeating this choice
    multiple times would yield the desired approximation.
\end{proof}

\subsection{The result}

\subsubsection{For blocks}

\begin{lemma}
    \lemlab{blocks:PTAS}%
    Given $N$, $\eps$ and $\delta$ as above, and a set $\BSet$ of $m$
    weighted blocks contained in the square $[0,N]^2$, one can
    $(1-\eps)$-approximate a maximum weight independent set of blocks
    in $\BSet$ in $O \pth{ N^{\constB} m }$ time.
\end{lemma}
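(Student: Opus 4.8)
The proof assembles the machinery built up in the section: the partition of $[0,N]^2$ into trails and rings from \lemref{summary}, the exact algorithm for trails from \lemref{approx:trails}, and the approximate algorithm for rings from \lemref{approx:rings}, all glued together by a dynamic program. First I would fix $k = \constAmath$ as the bound on the number of vertices of a face, as guaranteed by \lemref{summary}, and recall that such a partition exists for the (unknown) optimal solution $\Opt$, intersects blocks of total weight at most $\eps\, \weightX{\Opt}$, and consists of only $O(1/(\delta\eps^2))$ faces, each a trail or a ring. Since the partition is described by at most $\constAmath$ rectilinear segments with endpoints in $\IntRange{N}^2$, there are only $N^{O(\constAmath)} = N^{\constB}$ candidate partitions; the algorithm enumerates all of them.

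\textbf{Enumerate partitions and solve each face.} For a fixed candidate partition $\mathcal{P}$ into faces $\face_1,\ldots,\face_\ell$ with $\ell = O(1/(\delta\eps^2))$, each face is a trail or a ring with $\constAmath$ vertices. Using \lemref{approx:trails} (with parameter $k = \constAmath$) I would precompute the exact optimal weight $\woptX{\tpoly}$ for \emph{every} trail with at most $O(k/\eps) = \constAmath$ vertices within $\IntRange{N}^2$ — this takes $O(N^{4k'} m\, k' \log k')$ time with $k' = O(k/\eps)$, which is $N^{\constB} m$. Then, using \lemref{approx:rings}, I would $(1-\eps)$-approximate $\woptX{\cdot}$ for every ring with at most $k+O(1)$ vertices, in $O(N^{O(k/\eps)} m)$ time. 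For the fixed partition, the value returned is $\sum_{i=1}^\ell \woptX{\face_i}$, where the stored (exact or near-exact) value is used for each face. Since the blocks are disjoint and each block lies entirely within a single face of $\mathcal{P}$ (blocks crossed by the partition segments are simply discarded), a near-optimal independent set inside each face combines into a feasible global solution, and conversely $\Opt$ restricted to the faces of the \emph{good} partition loses only an $\eps$-fraction of its weight. The algorithm outputs the best solution over all enumerated partitions.

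\textbf{Bounding the approximation ratio.} For the partition $\mathcal{P}^*$ guaranteed by \lemref{summary} applied to $\Opt$, the blocks of $\Opt$ not cut by $\mathcal{P}^*$ have weight at least $(1-\eps)\weightX{\Opt}$, and each such block lies in exactly one face $\face_i$. Hence $\sum_i \woptX{\face_i} \ge (1-\eps)^2\weightX{\Opt}$: one factor $(1-\eps)$ from the discarded blocks, and one more because the ring faces are only solved to within $(1-\eps)$ via \lemref{approx:rings} (trails are solved exactly). Rescaling $\eps$ by a constant gives the claimed $(1-\eps)$-approximation. The running time is (number of candidate partitions) $\times$ (time to evaluate one partition) plus the one-time precomputation of trail and ring values: $N^{\constB}\cdot O(\ell) + N^{\constB} m = O(N^{\constB} m)$, absorbing the $\poly(1/\delta,1/\eps)$ factors into the exponent $\constB = 1/(\eps\delta)^{O(1)}$.

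\textbf{The main obstacle} I expect is purely bookkeeping rather than conceptual: making sure that the faces produced by an \emph{arbitrary} enumerated candidate partition (not just the idealized one coming from $\Opt$) are genuinely trails or rings so that the stored values apply — but this is handled by simply testing each face for the trail/ring property (as in the proof of \lemref{approx:trails}) and skipping any candidate partition that fails, since we only need \emph{one} good partition to be among those tried. A secondary point to check carefully is that the per-face optimal values, computed over \emph{all} blocks of $\BSet$, correctly restrict to the blocks actually contained in that face — this holds because a block not contained in a face either is cut by the partition (so excluded from that face's instance) or lies in a different face, and disjointness of the faces' interiors guarantees no double-counting.
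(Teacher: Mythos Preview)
Your proposal is correct and follows essentially the same approach as the paper: enumerate the $N^{\constB}$ candidate partitions guaranteed by \lemref{summary}, solve each face via the precomputed trail values (\lemref{approx:trails}) or ring values (\lemref{approx:rings}), and return the best. Your bookkeeping is in fact slightly more careful than the paper's own write-up---you explicitly handle the validation of candidate partitions and the need for trail values up to $O(k/\eps)$ vertices (needed inside the ring algorithm), and you correctly note that trails are solved \emph{exactly} so only the ring faces contribute the second $(1-\eps)$ factor.
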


\begin{proof}
    Let $\Opt$ be the maximum weight independent set of blocks of
    $\BSet$.  By \lemrefpage{summary}, there exists a partition of
    $[0,N]^2$ into a collection of $O( 1/(\delta \eps^2 ))$ faces
    $\facesPZ$ and possibly empty area, where each face in $\facesPZ$
    is either a \trail or a \ring, with $\constA$ vertices.
    Furthermore, there is subset $\Opt' \subseteq \Opt$, of weight
    $\geq (1-\eps/4)\weightX{\Opt}$, such that each block of $\Opt'$
    is fully contained in some face of $\facesPZ$.

    As such, we enumerate all such possible partitions. There are
    $N^{\constB}$ of them. For each such partition, and each face of
    the partition, we apply the $(1-\eps/4)$-approximation algorithm
    of \lemref{approx:trails} to each face that is a \trail, and the
    algorithm of \lemref{approx:rings} if it is a \ring. In both
    cases, the algorithm is run with the subset of blocks of $\BSet$
    contained in the face.

    Clearly, this yields the desired approximation with the desired
    running time.
\end{proof}

\subsubsection{For rectangles}

\begin{theorem}
    \thmlab{delta:large}%
    Given a positive integer $N >0$, parameters $\eps >0 $ and
    $\delta > 0$, and a set $\RectSet$ of $m$ weighted rectangles,
    such that the vertices of all rectangles belong to
    $\IntRange{N}^2$, where $\IntRange{N} =\brc{0,...,N}$.  Assume
    that each rectangle in $\RectSet$ is $\delta$-large; that is,
    either its height or its width is larger than $\delta N$.  For
    this input, there is a $(1-\eps)$-approximation algorithm for
    maximum weight independent set of rectangles with a running time
    of $O \bigl( (m N)^{\constB} \bigr)$.
\end{theorem}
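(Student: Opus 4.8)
The plan is to reduce Problem~\ref{prob:i:set:rects} to the block case already settled in \lemref{blocks:PTAS}, via a two–step partitioning argument mirroring \secref{partition:blocks}. As usual, fix a maximum weight independent set $\Opt\subseteq\RectSet$; it suffices to exhibit, after rescaling $\eps$, a partition of $[0,N]^2$ of $\poly(1/\eps,1/\delta)$ complexity, encodable in $\constB\log N$ bits, that preserves $(1-\eps)w(\Opt)$.

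First I would split $\RectSet$ into the \emph{doubly-large} rectangles (both width and height exceed $\GWidth=\delta N$) and the \emph{singly-large} ones. Since the rectangles of $\Opt$ are pairwise disjoint and every doubly-large rectangle has area $>\GWidth^{2}$, at most $1/\delta^{2}$ of them lie in $\Opt$; each is determined by four coordinates in $\IntRange{N}$, so there are only $N^{O(1/\delta^{2})}$ candidate sets, and the algorithm enumerates all of them. Having guessed $S=\brc{s\in\Opt:\ s\text{ doubly-large}}$, restrict to $\RectSet_{S}=\Set{\rect\in\RectSet}{\rect\text{ singly-large and }\rect\cap s=\emptyset\ \forall s\in S}$, and output $S$ together with a near-optimal independent set of $\RectSet_{S}$, treating the region $\bigcup_{s\in S}s$ as dead space. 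Because $\Opt\setminus S\subseteq\RectSet_{S}$ and $\Opt\setminus S$ is disjoint from $S$, any $(1-\eps)$-approximation for $\RectSet_{S}$ yields a $(1-\eps)$-approximation for the whole instance.

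For the singly-large instance $\RectSet_{S}$ I would re-run the construction of \secref{partition:blocks} essentially verbatim, replacing ``block'' by ``singly-large rectangle'' and ``the long edge of a block'' by ``a long edge of a rectangle'' (one of the two edges perpendicular to its short side). The only property of blocks the construction uses is that a block spans at least two grid cells along its long direction, and this holds for singly-large rectangles since their long side exceeds $\GWidth$; hence \lemref{line-hits-block} and its consequences (\lemref{b:horizon}, \lemref{faces-line-continuation}, \lemref{corridor-shapes}, \lemref{corridor-at-boundaries}, \lemref{summary}) carry over, producing a partition of $[0,N]^2$ into $O(1/(\delta\eps^{2}))$ trails and rings, each with $\constA$ vertices, such that a subset $\Opt'\subseteq\Opt\cap\RectSet_{S}$ with $w(\Opt')\ge(1-\eps/2)w(\Opt\cap\RectSet_{S})$ has every rectangle contained in a single face (the analogue of \lemref{l:e:properties}(D) bounds the weight of singly-large rectangles cut by $\LSet$). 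Likewise, the recursive cutting arguments for trails and rings (\lemref{trail:blazers}, \lemref{ring:blazers}) and the dynamic programs of \lemref{approx:trails} and \lemref{approx:rings} use only that every rectangle of $\Opt'$ inside a (narrow) face is oriented along the corridor; the ray-shooting walk that cuts a trail simply follows a long edge of a rectangle rather than a block edge. Running the block-style algorithm face by face on the rectangles of $\RectSet_{S}$ gives a $(1-\eps/2)$-approximation for $\RectSet_{S}$, hence a $(1-\eps)$-approximation overall; the running time is $N^{O(1/\delta^{2})}$ (enumeration of $S$) times the per-guess cost of the algorithm of \lemref{blocks:PTAS}, giving $O((mN)^{\constB})$ in total.

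The main obstacle is precisely this transfer: unlike a block, a singly-large rectangle can be up to $\GWidth$ thick, so one must verify that the $L$-shaped cell pieces of \lemref{corridor-shapes} still arise only because a rectangle edge enters a cell (they cannot be created by a rectangle ``turning a corner'', since a rectangle is rigid and too long to fit in a $\GWidth\times\GWidth$ corner square), and that in \lemref{trail:blazers} replacing an $L$-shaped component of a face by its maximal sub-rectangle still loses no rectangle of $\Opt$. These checks are routine but delicate, and this is where the ``new and involved partitioning scheme'' advertised in the introduction does its work.
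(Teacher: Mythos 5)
Your proposal takes a genuinely different route from the paper, and the route you chose contains a gap at exactly the point you defer as ``routine but delicate.''

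The paper does \emph{not} re-run the partition machinery on thick rectangles. Instead it keeps the block machinery of \secref{partition:blocks}--\lemref{blocks:PTAS} completely unchanged: for every rectangle $\rect\in\RectSet$ it fixes an arbitrary $1$-thick block $\Block_\rect\subseteq\rect$ parallel to the longer edge, runs the block algorithm on these blocks, and then accounts for the discrepancy between a rectangle and its block by a counting argument. A rectangle of $\Opt$ that crosses the boundary of a narrow polygon $\Polygon$ while its block does not must either cover a vertex of $\Polygon$ or enter and leave through its two short edges, thereby covering a portion of length $\geq\delta N$ of $\bd\Polygon$; since $\bd\Polygon$ has length at most $\constB N$ and the rectangles of $\Opt$ are disjoint, at most $\constB$ rectangles of $\Opt$ ``interact'' with any polygon arising in the DP. The algorithm guesses these constant-size interacting sets for every subpolygon and maintains them through the dynamic program, which multiplies the running time by $m^{\constB}$. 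This is why doubly-large rectangles need no separate treatment in the paper.

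The gap in your version is the central claim that \lemref{line-hits-block} through \lemref{summary}, \lemref{trail:blazers}, \lemref{approx:trails} and \lemref{ring:blazers} ``carry over verbatim'' when $1$-thick blocks are replaced by rectangles of thickness up to $\GWidth$. That transfer is the entire content of the theorem beyond the block case, and it is precisely what the paper's block-shrinking trick is designed to avoid having to prove. The structural proofs repeatedly use that a block is a degenerate slab: the path construction for $\LY$ and the ray-shooting cuts in \lemref{trail:blazers} follow ``the edge of $\Block$'' as if the block were a curve, the exact DP of \lemref{approx:trails} relies on blocks sitting in a narrow trail in an essentially linear order, and the face-shape arguments of \lemref{corridor-shapes} are proved for components bounded by block edges. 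None of these are re-verified for rectangles that can occupy the full $\GWidth$ cross-section of a corridor, and asserting that the checks are routine does not discharge them. A second, smaller gap: after enumerating the doubly-large rectangles $S$ you run the partition construction on $[0,N]^2$ with $\bigcup_{s\in S}s$ declared ``dead space,'' but \lemref{summary} is proved for the full square; carving out up to $1/\delta^2$ holes of side $>\GWidth$ changes the face structure (faces need no longer be narrow near these holes), so the trail/ring classification would also have to be re-established in that modified setting.
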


\begin{proof}
    Let $\Opt \subseteq \RectSet$ be the optimal solution. For every
    rectangle $\rect \in \RectSet$, pick an arbitrary block
    $\Block_\rect \subseteq \rect$, such that $\Block_\rect$ is
    parallel to the longer edge of $\rect$, and set
    $\weightX{\Block} = \weightX{\rect}$.  Let $\BOpt$ be the
    resulting set of blocks (for the set $\Opt$).

    For a trail or a ring $\Polygon$, a rectangle $\rect \in \RectSet$
    \emphi{interacts} with $\Polygon$ if $\rect$ intersects the
    boundary of $\Polygon$ but $\Block_\rect$ does not. Let
    $\IRectSetX{\Polygon}$ denote the set of rectangles of $\Opt$ that
    interact with $\Polygon$. Note, that during the execution of the
    algorithm of \lemref{blocks:PTAS} (on the set $\BSet$ with the
    solution $\BOpt$), all the \ring{}s and \trail{}s considered have
    at most $\constB$ vertices.

    So, consider such a narrow polygon $\Polygon$, and observe that
    $\cardin{\IRectSetX{\Polygon}} \leq \constB$.  Indeed, either a
    vertex of $\Polygon$ is covered by a rectangle $\rect \in \Opt$,
    or alternatively, the boundary of $\Polygon$ enters $\rect$
    through one of its short edges, and leave through the other (as
    otherwise, $\Block_\rect$ would intersect the boundary of
    $\Polygon$). But then, at least the portion of the boundary of
    $\Polygon$ covered by $\rect$ is at least $\delta N$, and as
    $\Polygon$'s boundary has length at most $\constB N$, it follows
    that the number of such rectangles in $\Opt$, since all the
    rectangles in $\Opt$ are disjoint, is bounded by
    $\frac{\constB N}{\delta N} = \constB$.

    This suggest the following. Run the algorithm of
    \lemref{blocks:PTAS} on $\BSet$ -- it enumerates hierarchically
    over partitions of the input square. For every \ring or \trail
    $\Polygon$ considered by this algorithm, guess the associated set
    of rectangles (of the optimal solution) it interacts with. Now,
    when considering a partition of such a polygon into two
    subpolygons, we also need to keep track of these sets of
    rectangles for the two subpolygons, and make sure they are
    maintained correctly during the dynamic programming.

    By the correctness of \lemref{blocks:PTAS}, one of the considered
    partitions rejects rectangles with total weight
    $\leq \eps \weightX{\BOpt} = \eps \weightX{\Opt}$. Since there are
    $m^{\constB}$ possible interacting subsets, it follows that the
    new approximation algorithm has running time $(mN)^{\constB}$, and
    yields the desired approximation.
\end{proof}

\section{Conclusions}
\seclab{conclusions}

We presented a \QPTAS for the maximum independent set of polygons
problem.  Contrasting this, the best known approximation algorithm
with polynomial running time has a performance ratio of
$n^{\eps}$. Furthermore, even for the axis-parallel rectangles case
currently has no constant factor approximation algorithm in polynomial
time. Our \QPTAS suggests that such a better polynomial time
approximation algorithms are possible.  In particular, our \PTAS for
the case of large rectangles might well turn out to be a first step
towards a \PTAS for the general case.

For our results, we presented two new techniques: the recursive
partitioning that paved the way to our \QPTAS and the partition into
$O(1)$ thin corridors and cycles in our \PTAS.  Soon after first
publishing our results these techniques were used for other geometric
problems, see \secref{impact}.  We believe that there will be more
applications of them in other geometric settings.

Recently, Chuzhoy and Ene presented a $(1+\eps)$-approximation
algorithm for independent set of (unweighted) Rectangles with a
running time of
$n^{\mathrm{poly}(\log \log n/\eps)}$~\cite{ce-amisr-16} by building
on the methodology presented in this paper and significantly extending
it.


\section*{Acknowledgments}

The authors would like to thank Chandra Chekuri, J{\'a}nos Pach, and
Kasturi Varadarajan for useful discussions on the problems studied in
this paper.


 \providecommand{\CNFX}[1]{ {\em{\textrm{(#1)}}}}
  \providecommand{\tildegen}{{\protect\raisebox{-0.1cm}{\symbol{'176}\hspace{-0.03cm}}}}
  \providecommand{\SarielWWWPapersAddr}{http://sarielhp.org/p/}
  \providecommand{\SarielWWWPapers}{http://sarielhp.org/p/}
  \providecommand{\urlSarielPaper}[1]{\href{\SarielWWWPapersAddr/#1}{\SarielWWWPapers{}/#1}}
  \providecommand{\Badoiu}{B\u{a}doiu}
  \providecommand{\Barany}{B{\'a}r{\'a}ny}
  \providecommand{\Bronimman}{Br{\"o}nnimann}  \providecommand{\Erdos}{Erd{\H
  o}s}  \providecommand{\Gartner}{G{\"a}rtner}
  \providecommand{\Matousek}{Matou{\v s}ek}
  \providecommand{\Merigot}{M{\'{}e}rigot}
  \providecommand{\CNFSoCG}{\CNFX{SoCG}}
  \providecommand{\CNFCCCG}{\CNFX{CCCG}}
  \providecommand{\CNFFOCS}{\CNFX{FOCS}}
  \providecommand{\CNFSODA}{\CNFX{SODA}}
  \providecommand{\CNFSTOC}{\CNFX{STOC}}
  \providecommand{\CNFBROADNETS}{\CNFX{BROADNETS}}
  \providecommand{\CNFESA}{\CNFX{ESA}}
  \providecommand{\CNFFSTTCS}{\CNFX{FSTTCS}}
  \providecommand{\CNFIJCAI}{\CNFX{IJCAI}}
  \providecommand{\CNFINFOCOM}{\CNFX{INFOCOM}}
  \providecommand{\CNFIPCO}{\CNFX{IPCO}}
  \providecommand{\CNFISAAC}{\CNFX{ISAAC}}
  \providecommand{\CNFLICS}{\CNFX{LICS}}
  \providecommand{\CNFPODS}{\CNFX{PODS}}
  \providecommand{\CNFSWAT}{\CNFX{SWAT}}
  \providecommand{\CNFWADS}{\CNFX{WADS}}


\end{document}